\newtheorem{corollary}{Corollary}
\newtheorem{lemma}{Lemma}
\newtheorem{remark}{Remark}
\newtheorem{theorem}{Theorem}
\newtheorem{assumption}{Assumption}
\newcommand{\remove}[1]{{}}
\newcommand\numberthis{\addtocounter{equation}{1}\tag{\theequation}}
\def\E{{\mathbb{E}}}
\def\Prob{{\mathbb{P}}}
\def \OO {\mathrm{O}}
\DeclareMathOperator*{\argmax}{arg\,max}
\DeclareMathOperator*{\argmin}{arg\,min}
\begin{document}

\begin{frontmatter}
\title{On the Regret of Online Edge Service Hosting \tnoteref{t1}}

\author{R Sri Prakash}
\ead{prakash.14191@gmail.com;sriprakash@ee.iitb.ac.in}

\author{Nikhil Karamchandani}
\ead{nikhilk@ee.iitb.ac.in}

\author{Sharayu Moharir} 
\ead{sharayum@ee.iitb.ac.in}

\affiliation{organization={Indian Institute of Technology Bombay},
	addressline={Powai},
	postcode={400076},
	city={Mumbai},
	country={India}}

\begin{abstract}
	We consider the problem of service hosting where a service provider can dynamically rent edge resources via short term contracts to ensure better quality of service to its customers. The service can also be partially hosted at the edge, in which case,  customers' requests can be partially served at the edge. 
	The total cost incurred by the system is modeled as a combination of the rent cost, the service cost incurred due to latency in serving customers, and the fetch cost incurred as a result of the bandwidth used to fetch the code/databases of the service from the cloud servers to host the service at the edge. In this paper, we compare multiple hosting policies with regret as a metric, defined as the difference in the cost incurred by the policy and the optimal policy over some time horizon $T$. In particular we consider the Retro Renting (RR) and Follow The Perturbed Leader (FTPL) policies proposed in the literature and provide performance guarantees on the regret of these policies. We show that under i.i.d stochastic arrivals, RR policy has linear regret while FTPL policy has constant regret. Next, we propose a variant of FTPL, namely Wait then FTPL (W-FTPL), which also has constant regret while demonstrating much better dependence on the fetch cost. We also show that under adversarial arrivals, RR policy has linear regret while both FTPL and  W-FTPL have regret \(\mathrm{O}(\sqrt{T})\) which is order-optimal.
\end{abstract}
\end{frontmatter}

\section{Introduction}
\label{sec:intro}

Software as a Service (SaaS) instances like online navigation platforms, Video-on-Demand services, etc., have stringent latency constraints in order to provide a good quality of experience to their customers. While most SaaSs use cloud resources, low latency necessitates the use of storage/computational resources at the edge, i.e., close to the end-user. A service is said to be hosted at the edge if the code and databases needed to serve user queries are stored on the edge servers and requests can be served at the edge. In this work, the service can also be partially hosted at the edge, in which case,  customers' requests can be partially served at the edge \cite{prakash2020partial, LN_partial_hosting}. For instance, consider a SaaS application which provides access to news articles on demand. A typical news article includes some text, some images and possibly some embedded videos. One way to partially host such a service is to store the text corresponding to the news articles at the edge and store the images/videos in the cloud. Each user request will then be served partially by the edge. 

We consider the setting where third-party resources can be rented via short-term contracts to host the service, and the edge hosting status of the SaaS can be dynamically changed over time. If the service is not hosted at the edge, it can be fetched from the cloud servers  by incurring a fetch cost. The performance of a hosting policy is a function of the rent cost, the fetch cost, and the quality of experience of the users.  We refer to the algorithmic challenge of determining what fraction of the service to host at the edge over time as the \emph{service hosting problem}. Note that we study the service hosting problem from the perspective of a specific SaaS provider which can rent third-party edge resources to improve their customers' quality of experience by reducing the latency of their service.  
\color{black}

Novel online service hosting policies with provable performance guarantees have been proposed in \cite{narayana2021renting, zhao2018red}.
The metric of interest in \cite{narayana2021renting,zhao2018red} is the \emph{competitive ratio}, defined as the ratio of the cost incurred by an online policy to the cost incurred by an optimal offline policy for the same request arrival sequence. Since the competitive ratio is multiplicative by definition, the absolute value of the difference in the cost incurred by an online policy and the optimal policy can be large even though the competitive ratio of the online policy is close to one. This motivates studying the performance of candidate policies in terms of \emph{regret}, defined as the difference in the cost incurred by the policy and the optimal policy. Regret is a widely used metric in online learning \cite{lai1985asymptotically}, including recently for the caching problem \cite{paschos2019learning,bhattacharjee2020fundamental}, which is closely related to the service hosting problem. In this work, one of our goals is to design hosting policies with provable guarantees in terms of the regret. Another key dimension in the performance guarantees of online policies is the assumption made on the request arrival process. Commonly studied settings include the \emph{stochastic setting} and the \emph{adversarial setting}, and
 we provide performance guarantees for both settings.


\subsection{Our Contributions}
We propose a variant of the FTPL policy called Wait then Follow the Perturbed Leader (W-FTPL). W-FTPL is a randomized policy that {takes into account the fetch cost in its decision-making}. More specifically, W-FTPL does not fetch the service for an initial wait period which depends on the request arrivals and is an increasing function of the fetch cost. Following the wait period, W-FTPL mimics the FTPL policy.

For i.i.d. stochastic request arrivals and the regret metric, we show that RR is sub-optimal while FTPL and W-FTPL are both order-optimal with respect to the time horizon. While the regret of FTPL can increase linearly with the fetch cost, the regret of W-FTPL increases at most polylogarithmically. The improved performance of W-FTPL over FTPL is a consequence of the fact that W-FTPL avoids most of the fetches made by FTPL in the initial time-slots and by the end of the wait period, its estimate of the arrival rate is accurate enough to avoid multiple fetches. 
Further, we characterize the competitive ratio of FTPL and W-FTPL for the setting where a finite number of partial hosting levels are allowed. We compare these results with the competitive ratio of RR when a single partial hosting level is allowed \cite{LN_partial_hosting} to conclude that  FTPL and W-FTPL outperform RR with respect to the competitive ratio.

For the adversarial setting, we first characterize a fundamental lower bound on the regret of any online hosting policy. In terms of regret, we then show that RR is strictly suboptimal, while FTPL and W-FTPL have order-optimal performance with respect to time. We show that the competitive ratios of FTPL and W-FTPL are upper bounded by a constant (with respect to time) for any finite number of partial hosting levels. This is a more general setting than the one considered in \cite{LN_partial_hosting}, where the competitive ratio of a variant of RR is characterized when only one partial hosting level is permitted. 

\subsection{Related work}
Several emerging applications such as Augmented / Virtual Reality (AR/VR), online gaming, the Internet of Things (IoT), and autonomous driving have very stringent latency requirements. In addition, many of these applications are also compute-and-storage heavy. This necessitates the migration of some of the storage and computation requirements of these services from remote cloud servers to the edge. To enable such offloading, several edge computing architectures have been proposed in the literature and their performance has been analyzed extensively, see for example \cite{Puliafito:2019, mao2017survey, mach2017mobile} and references therein. 
     
The service hosting problem has received a lot of attention recently, and various settings have been studied. One approach has been to pose the design problem of where to host which service as a one-shot large-scale optimization problem, see for example \cite{pasteris2019service,  bi2019joint, 9326369}. Our work differs from this line of work in that we consider an online setting where we adaptively decide when (and what fraction of) service to host at any time, depending on the varying number of requests seen thus far. 

There have been recent works \cite{zhao2018red, tan2021asymptotically} which consider the online service hosting problem, and design adaptive schemes whose performance is characterized in terms of the competitive ratio with respect to an oracle which known the entire request sequence apriori. Our work differs from these in a couple of ways. Firstly, the above mentioned works study the problem from the perspective of an edge resource provider who has to decide which amongst a collection of services to host at each time on a storage-constrained edge server. On the other hand, similar to some other works \cite{xia2020online, narayana2021renting}, we study the problem from the perspective of a service provider which needs to decide when to host its application at the edge so as to minimize the overall cost. Secondly, most of the works mentioned above do not consider partial hosting of the service, which is allowed in our framework. While \cite{LN_partial_hosting} does study partial hosting and proposed the Retro-Renting (RR) policy which is shown to have a constant competitive ratio, it only allows one level of partial hosting of the service. On the other hand, we consider a much more general setting where multiple partial service hosting levels are allowed and are able to show that both the FTPL and its variant W-FTPL are able to achieve a constant competitive ratio.  

While the works mentioned above study competitive ratio as a performance metric, another popular measure of performance for online algorithms is the regret which considers the difference in the cost incurred by an online policy and the optimal (static) policy. In particular, for the closely related caching problem, several policies inspired by the recent advances in online convex optimization have been proposed, including Online Gradient Ascent \cite{paschos2019learning}, Online Mirror Descent \cite{salem2021no}, and Follow the Perturbed Leader (FTPL) \cite{mukhopadhyay2021online,bhattacharjee2020fundamental}. In particular, FTPL has been shown to have order-optimal regret for the caching problem in the adversarial setting  \cite{mukhopadhyay2021online,bhattacharjee2020fundamental}.  In this work, we study regret for the RR and FTPL policies for the service hosting problem under both the stochastic and adversarial settings. 
Since RR is a deterministic policy, its regret performance in the adversarial setting is poor. While FTPL does achieve order-optimal regret (with respect to time), one limitation is that it makes hosting decisions {agnostic of the fetch cost}. As a result, in some cases, FTPL is prone to fetching and evicting the service multiple times in the initial time-slots when its estimate of the request arrival rate is noisy, thus leading to poor performance. This motivated our design and regret analysis of W-FTPL, a variant of FTPL which takes into account the fetch cost. One recent work which also considers regret as a performance metric is \cite{fan2022online}, which studied the problem of joint online service hosting and routing and proposed a two time-scale algorithm based on online gradient descent with provably order-optimal regret under adversarial requests. 

Finally, there are several other approaches which have been used to address the online service hosting problem, including Markov Decision Processes \cite{wang2015dynamic, xiong2101learning,prakash2020partial}, Reinforcement Learning \cite{zeng2019resource}, and Multi-Armed Bandits \cite{chen2019budget, he2020bandit}. 
\section{System Setup}
\label{sec:setting}

\color{black}
We consider a {system} consisting of a back-end server and an edge-server to serve customers of a service. The back-end server always hosts the service and can serve any requests that are routed to it.
In this work, we allow the service to be partially hosted at the edge-server.  When the service is partially hosted at the edge, requests are partially served at the edge and partially at the back-end server. 
Further, the fraction of service hosted at the edge can be changed over time. 
If the service is not hosted or partially hosted at the edge, parts of the service can be fetched from the back-end server to host at the edge.  

\emph{Sequence of events in each time-slot}: We consider a time-slotted system. In each time-slot, we first make the service hosting decision for that time-slot. Following this, requests may arrive and are served at the edge and/or the back-end server.

\emph{Request arrivals}: We consider two types of arrival processes in this paper. 
\begin{itemize}
	\item[-] \emph{Stochastic arrivals}: In this case, request arrivals are i.i.d.  across time-slots with mean $\mu$. 
	\item[-] \emph{Adversarial arrivals}: In this case, the request arrival sequence is generated by an oblivious adversary\footnote{ The entire request sequence is assumed to be fixed apriori.}.
\end{itemize}
Let $r_t$ denote the number of request arrivals in time-slot $t$. We consider bounded requests, specifically, $0\le r_t\le \kappa$, and let $R_t=\sum_{i=1}^{t}r_i$ denote the cumulative requests observed by the end of time-slot $t$. In both cases, we assume that there are at most $\kappa$ arrivals per slot.

\emph{Costs}: We model three types of costs.
\begin{enumerate}
	\item[-] \emph{Rent cost} ($\mathcal{C}^{\mathcal{P}}_{R,t}$): The system incurs a cost of $c$ units per time-slot to host the entire service at the edge. For hosting $f$ fraction of service, the system incurs a cost of $cf$ units per time-slot. 
	
	\item[-] \emph{Service cost} ($\mathcal{C}^{\mathcal{P}}_{S,t}$): This is the cost incurred to use the back-end server to serve (parts of) requests. 
	If $f$ fraction of service is hosted at the edge, the system incurs a service cost of $g(f)$ units per request, where $g(f)$ is a decreasing function of $f$. We fix $g(0)=1$, i.e., service cost is one unit when the service is not hosted at the edge.
	
	\item[-] \emph{Fetch cost} ($\mathcal{C}^{\mathcal{P}}_{F,t}$): The system incurs a cost of $M> 1$ units for each fetch of the entire service from the back-end server to host on the edge-server. For fetching $f$ fraction of service, the system incurs a cost of $Mf$ units.
\end{enumerate}

Let $\rho_t^{\mathcal{P}}$ denote the edge hosting status of the service in time slot $t$ under policy $\mathcal{P}$. We consider the setting where $\rho_{t}^\mathcal{P} \in \{\alpha_1,\alpha_2, \cdots , \alpha_{K-1},\alpha_{K}\}$ and $\rho_{t}^\mathcal{P}=\alpha_{i}$ implies that $\alpha_{i}$ fraction of the service is hosted at the edge-server in time-slot $t$. Note that $\alpha_{i}\in [0,1]$ and in particular we consider $\alpha_1=0$, $\alpha_K=1$ and $\alpha_{i}<\alpha_{j}$ for $i<j$. It follows that $\rho_t^\mathcal{P}=1$ implies that the entire service is hosted at the edge in time-slot $t$, and $\rho_t^\mathcal{P}=0$ implies that the service is not hosted at the edge in time-slot $t$.

We make some assumptions about the various system parameters. 
\begin{assumption}	\label{assump:k<c}
	$c\le \kappa$.
\end{assumption}
This assumption is motivated by the fact that if $\kappa < c$, it is strictly sub-optimal to host the entire service at the edge, irrespective of the number of arrivals in a time-slot.

\begin{assumption}\label{assump:g(f)_convex}
	For a fraction of service $\alpha_{i}$, $ 1\le i\le K$, $\alpha_{i}+g(\alpha_{i})\le 1 $.
\end{assumption}
 In \cite{prakash2020partial, LN_partial_hosting}, it is shown that the benefits of partial hosting are limited to the setting where $\alpha_{i}+g(\alpha_{i})\le 1 $. Motivated by this, we consider the case where  $\alpha_{i}+g(\alpha_{i})\le 1 $ for all candidate values of $\alpha_{i}$.

\begin{assumption} \label{assump:cost<k}
	For any fraction of service $\alpha_{i}$, $1\le i\le K$,
	$c\alpha_{i}+g(\alpha_{i})r_t\le \kappa$
\end{assumption}
 Assumption \ref{assump:cost<k} follows from  Assumptions \ref{assump:k<c} and \ref{assump:g(f)_convex}
 
\begin{assumption}
	For stochastic arrivals, $\mu>0$ and for adversarial arrivals, $R_T\ge 1$.
\end{assumption}
This assumption eliminates degenerate cases where there are no request arrivals in the time-horizon of interest. 

The total cost incurred in time-slot $t$ by policy $\mathcal{P}$, denoted by $\mathcal{C}^{\mathcal{P}}_t(r_t)$, is the sum of the rent, service, and fetch costs. It follows that,
\begin{align*}
\mathcal{C}^{\mathcal{P}}_t(r_t)
&=c\rho_{t}^{\mathcal{P}}+ g(\rho_{t}^{\mathcal{P}}) r_t+ M (\rho_{t}^{\mathcal{P}}-\rho_{t-1}^{\mathcal{P}})^+. \numberthis \label{eq:cost_slot}
\end{align*}
Let $r=\{r_t\}_{t\ge1}$ denote the request arrival sequence and {$\mathcal{C}^{\mathcal{P}}(T,r)$} denote the cumulative cost incurred by policy $\mathcal{P}$ in time-slots 1 to $T$. It follows that, 
$$
\mathcal{C}^{\mathcal{P}}(T,r) = \sum_{t=1}^T \mathcal{C}^{\mathcal{P}}_t(r_t).
$$

\emph{Performance metrics}: We consider two performance metrics, namely regret and competitive ratio.

For i.i.d. stochastic arrivals, the regret of a policy $\mathcal{P}$, denoted by $\mathcal{R}^{\mathcal{P}}_S(T)$, 
is defined as the difference in the total expected cost incurred by the policy and the oracle static hosting policy. The oracle static hosting policy makes a hosting decision at $t=1$ using the knowledge of the statistics of the request arrival process but not the entire sample-path. For ease of notation, we define $\mu=\E[r_t]$, $\mu_i=c\alpha_i+g(\alpha_i)\mu$. It follows that the expected cost incurred by the optimal static hosting policy in time-slots 1 to $T$ is $\min_i\{c\alpha_iT+g(\alpha_i)\mu T+M\alpha_i\}$, and therefore, 
\begin{align*}
\mathcal{R}^{\mathcal{P}}_S(T)&=  \E_{\mathcal{P},r}[\mathcal{C}^\mathcal{P}(T,r)] -\min_i\{c\alpha_iT+g(\alpha_i)\mu T+M\alpha_i\}\\
&=\E_{\mathcal{P},r}[\mathcal{C}^\mathcal{P}(T,r)] -\min_i\{\mu_i T+M\alpha_i\},\\
& \le \E_{\mathcal{P},r}[\mathcal{C}^\mathcal{P}(T,r)] -\min_i\{\mu_i T\}, \numberthis \label{eq:Regret_stch} 
\end{align*} 

where $\E_{\mathcal{P},r}$ represents the expectation over the randomization in policy $\mathcal{P}$ and the request arrival sequences $r$. 

For i.i.d. stochastic arrivals, the competitive ratio of a policy $\mathcal{P}$, denoted by $\sigma^{\mathcal{P}}_S(T)$, 
is defined as the ratio of the expected total cost incurred by the policy and expected total cost incurred by the optimal static hosting policy as discussed above. It follows that,
\begin{align*}
\sigma_{S}^{\mathcal{P}}(T)= \frac{\E_{\mathcal{P},r}[\mathcal{C}^{\mathcal{P}}(T,r)]}{\min_i\{c\alpha_iT+g(\alpha_i)\mu T+M\alpha_i\}}.
\end{align*}

For adversarial arrivals, the regret of a policy $\mathcal{P}$, denoted by $\mathcal{R}^{\mathcal{P}}_A(T)$, is defined as the worst case difference in the total expected cost incurred by the policy and the optimal static hosting policy. The expectation is taken over the randomization in policy $\mathcal{P}$. It follows that for any request sequence $r$, the total cost incurred by the optimal static hosting policy in time-slots 1 to $T$ is $\min_i\{c\alpha_iT+g(\alpha_i)R_T+M\alpha_i\}$, and therefore, 
\begin{align*}
\mathcal{R}^{\mathcal{P}}_A(T,r)&=\E_\mathcal{P}[\mathcal{C}^\mathcal{P}(T,r)] - \min_i\{c\alpha_iT+g(\alpha_i)R_T+M\alpha_i\},\\
\mathcal{R}^{\mathcal{P}}_A(T)&=\sup_{r} \left(\mathcal{R}^{\mathcal{P}}_A(T,r) \right). 
\end{align*}

We have the following relation between the regret induced by a policy under i.i.d stochastic arrivals and adversarial arrivals.
\begin{lemma}\label{lem:adv_stch_reg}
	For any online policy $\mathcal{P}$ and stochastic arrivals we have,
	$\mathcal{R}_S^\mathcal{P}(T)\le \mathcal{R}_A^\mathcal{P}(T)$.
\end{lemma}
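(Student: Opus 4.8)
The plan is to exploit the fact that the offline (static) benchmark cost is a \emph{concave} function of the cumulative request count $R_T$, and then to turn the stochastic benchmark into an average of per-sample-path adversarial benchmarks via Jensen's inequality. Since the adversarial regret takes a supremum over sequences, averaging can only decrease it, which yields the claimed ordering.

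First I would rewrite the stochastic benchmark using $\E[R_T]=\mu T$ (immediate from i.i.d.\ arrivals with mean $\mu$), so that the benchmark in the definition of $\mathcal{R}^{\mathcal{P}}_S(T)$ becomes $\min_i\{c\alpha_i T+g(\alpha_i)\E[R_T]+M\alpha_i\}$. Defining $h(x)=\min_i\{c\alpha_i T+g(\alpha_i)x+M\alpha_i\}$, I would observe that for each fixed $i$ the term inside the minimum is affine in $x$, so $h$ is a pointwise minimum of affine functions and is therefore concave. Applying Jensen's inequality to $h$ gives $h(\E[R_T])\ge \E_r[h(R_T)]$, i.e.
\[
\min_i\{c\alpha_i T+g(\alpha_i)\mu T+M\alpha_i\}\ \ge\ \E_r\Big[\min_i\{c\alpha_i T+g(\alpha_i)R_T+M\alpha_i\}\Big].
\]
This is the crux: optimizing a single static decision against the \emph{averaged} request count is at least as cheap as averaging the best static decisions chosen per realization.

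Next I would substitute this lower bound on the benchmark into the definition of $\mathcal{R}^{\mathcal{P}}_S(T)$ and use the tower property $\E_{\mathcal{P},r}=\E_r\E_\mathcal{P}$ (conditioning on the arrival sequence) to obtain
\[
\mathcal{R}^{\mathcal{P}}_S(T)\ \le\ \E_r\Big[\E_\mathcal{P}[\mathcal{C}^\mathcal{P}(T,r)]-\min_i\{c\alpha_i T+g(\alpha_i)R_T+M\alpha_i\}\Big]=\E_r\big[\mathcal{R}^{\mathcal{P}}_A(T,r)\big].
\]
Finally, since $\mathcal{R}^{\mathcal{P}}_A(T,r)\le \sup_r \mathcal{R}^{\mathcal{P}}_A(T,r)=\mathcal{R}^{\mathcal{P}}_A(T)$ for every realization $r$, taking expectations preserves the bound and gives $\mathcal{R}^{\mathcal{P}}_S(T)\le \mathcal{R}^{\mathcal{P}}_A(T)$. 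I do not expect a genuine obstacle here; the entire argument is short bookkeeping with expectations. The one step deserving care is getting the \emph{direction} of Jensen right, which hinges on recognizing the benchmark's concavity in $R_T$ — that is precisely what makes the stochastic benchmark dominate the average of the adversarial benchmarks and hence pushes the stochastic regret below the adversarial one.
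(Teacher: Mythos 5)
Your proposal is correct and is essentially identical to the paper's own proof: both hinge on the single inequality $\min_i\E_r[c\alpha_iT+g(\alpha_i)R_T+M\alpha_i]\ge\E_r[\min_i\{c\alpha_iT+g(\alpha_i)R_T+M\alpha_i\}]$ (which the paper also attributes to Jensen's inequality, and which you justify by noting the benchmark is a concave, pointwise-minimum-of-affine function of $R_T$), followed by bounding $\E_r[\mathcal{R}^{\mathcal{P}}_A(T,r)]$ by the supremum over $r$. The only difference is presentational: you make the concavity explicit, whereas the paper simply cites Jensen.
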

The proof can be found in Section \ref{sec:appendix}.

\color{black}
For adversarial arrivals, the competitive ratio of a policy $\mathcal{P}$, denoted by $\sigma^{\mathcal{P}}_A(T)$, is defined as the worst case ratio of the expected total cost incurred by the policy and the cost incurred by the offline optimal policy. The optimal offline policy can change the hosting status over time using the knowledge of the entire request process a priori. Let the cost incurred by the optimal offline hosting policy in time-slots 1 to $T$ be denoted by $\mathcal{C}^{\text{OFF-OPT}}(T,r)$. It follows that, 
\begin{align*}
\sigma^{\mathcal{P}}_A(T)=\sup_{r} \frac{\E_{\mathcal{P}}[\mathcal{C}^{\mathcal{P}}(T,r)]}{\mathcal{C}^{\text{OFF-OPT}}(T,r)}.
\end{align*}

\emph{Goal}: The goal is to design online hosting policies with provable performance guarantees with respect to regret and competitive ratio.

\emph{Notations}: We define some notations which will be used in the rest of this paper. Let $\bm{s}=[0 , \alpha_2, \alpha_3, \cdots, 1]'$, $\bm{f}=[1,g(\alpha_2), g(\alpha_3), \cdots, 0]'$, and $\bm{\rho}_t^{\mathcal{P}}\in \mathcal{X}$, where $\mathcal{X}$ is the collection of all possible one hot vectors in $\{0,1\}^K$  and thus $|\mathcal{X}|=K$. The position of 1 in the one hot vector $\bm{\rho}_t^{\mathcal{P}}$ represents the level of service hosted at the edge in slot $t$, that is $\rho_t^{\mathcal{P}}=\langle \bm{\rho}_t^{\mathcal{P}},\mathbf{s} \rangle$. 
The total cost in slot $t$ as given in \eqref{eq:cost_slot} can be rewritten  using the above notations as follows,
\begin{align*}
\mathcal{C}_{t}^{\mathcal{P}}(r_t) &= c\langle \bm{\rho}_t^{\mathcal{P}},\bm{s} \rangle + r_t \langle \bm{\rho}_t^{\mathcal{P}},\bm{f} \rangle + M (\rho_t^{\mathcal{P}}-\rho_{t-1}^{\mathcal{P}})^+,\\
&=\langle \bm{\rho}_t^{\mathcal{P}},\bm{\theta}_{t} \rangle + M (\rho_t^{\mathcal{P}}-\rho_{t-1}^{\mathcal{P}})^+,
\end{align*}
where $\bm{\theta}_t= c\bm{s}+r_t\bm{f}$. 

For ease of notation, for stochastic arrivals we define $\mu = \E[r_t]$, $\mu_i=c\alpha_{i}+g(\alpha_{i})\mu$, $\Delta_{ij}=\mu_i-\mu_j$, $i^\star=\argmin_i \mu_i$ , $\Delta_i=\mu_i-\mu_{i^\star}$, $\Delta_{\text{min}}=\min_i \Delta_{i}$, $\Delta_{\text{max}}=\max_i \Delta_{i}$.

We summarize important notations used in this paper in \tablename~\ref{tab:notations}.
\color{black}

\begin{table}
\begin{tabular}{|c|l|}
\hline
Symbol & Description \\
\hline \hline
$c$ & Rent cost for hosting complete service in a slot \\
\hline 
$M$ &	Fetch cost \\		
\hline
$K$ & Number of hosting levels allowed\\
\hline
$t$ & Time index\\
\hline
$\rho_{t}^\mathcal{P}$ & Fraction of service hosted in time slot $t$ under policy $\mathcal{P}$\\
\hline
$\kappa$ & Maximum number of arrivals in a slot\\
\hline
$\alpha_i$ & Fraction of service \\
\hline
$g(f)$ & Fraction of request forwarded if $f$ fraction is hosted \\
\hline
$r_t$ & Number of requests in time slot $t$\\
\hline
$R_t$ & Cumulative number of requests up to time slot $t$\\
\hline
$\mu$ & $\E[r_t]$\\
\hline
$T$ & Time horizon\\
\hline
$\mathcal{C}^\mathcal{P}(T,r)$ & Cost incurred till $T$ under policy $\mathcal{P}$ for request sequence $r$  \\
\hline
$\mathcal{R}_S^\mathcal{P}(T)$ & Regret under stochastic arrival setting\\
\hline
$\sigma_{S}^\mathcal{P}(T)$ & Competitive ratio under stochastic arrival setting\\ 
\hline
$\mathcal{R}^\mathcal{P}_A(T)$ & Regret under adversarial arrival setting\\ 
\hline
$\sigma_{A}^\mathcal{P}(T)$ & Competitive ratio under adversarial arrival setting\\ 
\hline
$\mu_i$ & $c\alpha_{i}+g(\alpha_{i})\mu$\\
\hline
$\Delta_{ij}$ & $\mu_i-\mu_j$\\
\hline
$i^\star$ & $\argmin_i \mu_i$ \\ 
\hline
$\Delta_{i}$ & $\mu_i-\mu_{i^\star}$ \\ 
\hline
$\Delta_{\text{min}}$ &  $\min_i \Delta_i$\\ 
\hline
$\Delta_{\text{max}}$ &  $\max_i \Delta_i$\\ 
\hline
$\bm{s}$ & $[0 , \alpha_2, \alpha_3, \cdots, 1]'$\\ 
\hline
$\bm{f}$ & $[1,g(\alpha_2), g(\alpha_3), \cdots, 0]'$\\ 
\hline
$\bm{\theta}_{t}$ &  $c\bm{s}+r_t\bm{f}$\\ 
\hline
$\bm{\Theta}_{t}$ &  $\sum_{i=1}^{t-1}\bm{\theta}_i$\\ 
\hline
$\eta_{t}$ & Learning rate\\ 
\hline
$\alpha$ & Hyper parameter in FTPL, W-FTPL policies\\
\hline
$\beta$, $\delta$ & Hyperparameters in W-FTPL policy\\ 
\hline
$\bm{\gamma}$ & IID Gaussian vector\\ 
\hline
$\mathcal{X}$ & Collection of one hot vectors in $\{0,1\}^K$\\ 
\hline
$T_s$ & Waiting time for W-FTPL policy\\
\hline
\end{tabular}
\caption{Notations}
\label{tab:notations}
\end{table}

\section{Policies}
\label{sec:policies}

\emph{Our Policy}: Our policy called Wait then Follow the Perturbed Leader (W-FTPL), is a variant of the popular FTPL policy. A formal definition of FTPL is given in Algorithm \ref{alg:FTPL}. FTPL is a randomized policy and is known to perform well for the traditional caching problem, in fact achieving order-wise optimal regret for adversarial arrivals \cite{bhattacharjee2020fundamental}.  Under FTPL, in any time slot $t$, for each $i$, FTPL considers the cumulative cost that would have been incurred had the system used $\alpha_i$ hosting fraction in the entire duration $[1, t -1]$, and then perturbs this by an  appropriately scaled version of a sample of a standard Gaussian random variable. FTPL then hosts the fraction of service for which the perturbed cost is minimum in that slot.


 
\begin{algorithm} 
	\caption{Follow The Perturbed Leader (FTPL)}
	\label{alg:FTPL}
	\KwIn{$c$, $\{\eta_t\}_{t\ge 1}$, $\{r_t\}_{t\ge 1}$}
	Set $\bm{\Theta}_{1} \leftarrow \bm{0}$\\
	Sample $\bm{\gamma} \sim \mathcal{N}(\bm{0}, \bm{I})$\\
	\For{$t=1$ to $T$}{
		host $\bm{\rho}_t\in \argmin_{\bm{\rho} \in \mathcal{X}} \langle \bm{\rho}, \bm{\Theta}_t +\eta_t\bm{\gamma} \rangle $\\ 
		$\bm{\theta}_{t}=c\bm{s}+r_t\bm{f}$\\
		Update $\bm{\Theta}_{t+1}=\bm{\Theta}_t+ \bm{\theta}_{t}$}
\end{algorithm}

We propose a policy called Wait then Follow The Perturbed Leader (W-FTPL). The key idea behind the W-FTPL policy is to not host the service for an initial wait period. This is to reduce the number of fetches made initially when the estimate of the arrivals is noisy.  The duration of this wait period is not fixed apriori and is a  function of the arrival pattern seen till that time. Following the wait period, W-FTPL mimics the FTPL policy. Refer to Algorithm \ref{alg:W-FTPL} for a formal definition of W-FTPL. Let $T_s$ be the time slot, after which W-FTPL starts acting as FTPL. Formally we define $T_s=\min\{t:t<\frac{(\max_{i\ne j}(\Theta_{t+1,i}-\Theta_{t+1,j}))^2}{\kappa^2\beta (\log M)^{1+\delta}}\}$, where $\beta>1$, $\delta>0$ are constants.

\begin{algorithm} 
	\caption{Wait then Follow The Perturbed Leader (W-FTPL)}
	\label{alg:W-FTPL}
	\KwIn{$c$, $M$, $\{\eta_t\}_{t\ge 1}$, $\{r_t\}_{t\ge1}$, $\beta$, $\delta$, $\kappa$}
	Set $\bm{\Theta_{1}} \leftarrow \bm{0}$, wait=1\\
	Sample $\bm{\gamma} \sim \mathcal{N}(\bm{0}, \bm{I})$\\
	\For{$t=1$ to $T$}
{  	\eIf{$\text{wait}=1$}{$\rho_t=0$}{
	host $ \bm{\rho}_t\in \argmin_{\bm{\rho} \in \mathcal{X}} \langle \bm{\rho}, \bm{\Theta}_t +\eta_t\bm{\gamma} \rangle $} 
	$\bm{\theta}_{t}=c\bm{s}+r_t\bm{f}$\\
	Update $\bm{\Theta}_{t+1} = \bm{\Theta}_{t} + \bm{\theta}_t$\\
	$\text{wait} = \min\{\text{wait},\mathbbm{1}_{t\ge \frac{(\max_{i\ne j}(\Theta_{t+1,i}-\Theta_{t+1,j}))^2}{\kappa^2\beta (\log M)^{1+\delta}}}\}$ 
}
\end{algorithm}

\emph{Retro-Renting (RR)} \cite{narayana2021renting}: The RR policy is a deterministic hosting policy, and it either hosts the complete service or hosts nothing in a slot. The key idea behind this policy is to use recent arrival patterns to make hosting decisions. We refer the reader to Algorithm 1 in \cite{narayana2021renting} for a formal definition of the RR policy. The performance of RR with respect to the competitive ratio was analyzed in \cite{narayana2021renting}. 

$\alpha-$RR \cite{LN_partial_hosting}  is a variant of RR where one partial level of hosting  the service is allowed. The formal definition of $\alpha-$RR, borrowed from \cite[Algorithm 1]{LN_partial_hosting} is given in Algorithm \ref{alg:alpha_RR}. Similar to RR, the key idea behind $\alpha-$RR policy is to check whether the current hosting status is optimal in the hindsight using recent arrival patterns to make hosting decisions. $\alpha-$RR checks the optimal offline policy cost for the recent request arrivals and chooses the fraction of service with the minimum cost. For $\rho_{t}^\mathcal{P}\in\{0,1\}$, $\alpha-$RR behaves as RR \cite{LN_partial_hosting}. Therefore we refer $\alpha-$RR as RR in this paper.

\begin{algorithm}
	\caption{$\alpha$-RetroRenting ($\alpha$-RR)}\label{alg:alpha_RR}
	\SetAlgoLined
	\SetKwFunction{FtotalCost}{totalCost}
	\SetKwProg{Fn}{Function}{:}{end}
	Input: Fetch cost $M$, partial hosting level $\alpha_2$, latency cost under partial hosting $g(\alpha_2)$,  rent cost $c$, request arrival sequence $\{x_l\}_{l\geq0}^t$\\
	Output:  service hosting strategy $r_{t+1}$, $t > 0$\\
	Initialize:  $r_1= t_{\text{recent}} = 0$\\
	\For {\textbf{each} time-slot $t$}{
		$I_t=(M$, $g(\alpha_2)$, $t$,  $t_{\text{recent}}$, $c$, $\{x_l\}_{l\geq t_{\text{recent}}}^t )$\\
		$R_0^{(\tau_0)} = [\underbrace{r_t,r_t,\ldots,r_t}_{\tau_0-t_{\text{recent}}},\underbrace{0,0,\ldots, 0}_{t-\tau_0}]$ \\
		$R_\alpha^{(\tau_{\alpha})}= [\underbrace{r_t,r_t,\ldots,r_t}_{\tau_{\alpha}-t_{\text{recent}}},\underbrace{\alpha_2,\alpha_2,\ldots, \alpha_2}_{t-\tau_{\alpha}}]$ \\
		$R_1^{(\tau_1)} = [\underbrace{r_t,r_t,\ldots,r_t}_{\tau_1-t_{\text{recent}}},\underbrace{1,1,\ldots, 1}_{t-\tau_1}]$ \\
		$\text{minCost}(0)=\displaystyle\min_{\tau_0\in (t_{\text{recent}},t)} \FtotalCost(R_0^{(\tau_0)},I_t)$\\
		$\text{minCost}(\alpha_2)=\displaystyle\min_{\tau_{\alpha}\in (t_{\text{recent}},t)} \FtotalCost(R_\alpha^{(\tau_{\alpha})},I_t)$\\
		$\text{minCost}(1)=\displaystyle\min_{\tau_1\in (t_{\text{recent}},t)} \FtotalCost(R_1^{(\tau_1)},I_t)$\\
		$r_{t+1} = \displaystyle \argmin_{i \in \{0, \alpha_2, 1\}} \text{minCost}(i)$ \\
		\If{$r_{t+1} \neq r_t$}
		{$t_{\text{recent}} =  t$
		}
	}
	\Fn{\FtotalCost{$R, I_t$}}{
		$g(0)=1$, $g(1)=0$\;
		cost = $R(1)\times c+x_1 \times g(R(1))$\;
		\For{$j\gets 2$ \KwTo $t-t_{\text{recent}}$ }{
			cost = cost $+ R(j)\times c+x_j \times g(R(j))$ \\
			\hspace{.67in}$+ M \times \left|R(j)-R(j-1)\right|$\;
		}
		\KwRet cost;
	}
\end{algorithm}

\section{Main Results and Discussion}
\label{sec:mainResults}
In this section, we state and discuss our key results. 
\subsection{Regret Analysis}
Our first result characterizes the regret performance of the policies discussed in Section \ref{sec:policies} and the fundamental limit on the performance of any online policy for adversarial arrivals. 
\begin{theorem}[Adversarial Arrivals]	\label{thm:reg_adv}
	Let the arrivals be generated by an oblivious adversary under the constraint that at most $\kappa$ requests arrive in each time-slot. Then,
	\begin{enumerate}[(a)]
		\item $
		\mathcal{R}^{\mathcal{P}}_{A}(T) =\Omega(\sqrt{T})$ for any online policy $\mathcal{P}$.
		\item $\mathcal{R}^{\text{RR}}_A(T) =\Omega({T}).$
		\item For $\eta_{t}=\alpha\sqrt{t}$, $\alpha>0$, 
		\begin{align*}
			\mathcal{R}^{\text{FTPL}}_A(T)\le& \sqrt{2T\log K}\left( \alpha  + \frac{4\kappa^2}{\alpha}\right)
			 +\frac{K^2M(c+2\kappa)}{2\alpha\sqrt{\pi}} \sqrt{T+1}.
		\end{align*}
		\item For $\eta_{t}=\alpha\sqrt{t}$, $\alpha>0$, 
		$$\mathcal{R}_A^{\text{FTPL}}(T) \ge  M\alpha_{2}/K.$$ \color{black}
		\item For $\eta_{t}=\alpha\sqrt{t}$, $\alpha>0$, $\beta>1$, $\delta>0$,  
		 \begin{align*}
				\mathcal{R}^{\text{W-FTPL}}_A(T)\le&  \sqrt{\beta T (\log M)^{1+\delta}}
				+\sqrt{2T\log K} \left(\alpha + \frac{4\kappa^2}{\alpha} \right)\\
				&+\frac{K^2M(c+2\kappa)}{2\alpha\sqrt{\pi}}\sqrt{T+1}.
			\end{align*}	
\end{enumerate}
\end{theorem}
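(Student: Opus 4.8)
The plan is to exploit the key structural fact that the per-slot cost vector $\bm{\theta}_t=c\bm{s}+r_t\bm{f}$, and hence the running totals $\bm{\Theta}_t$, depend neither on the hosting decisions nor on the Gaussian perturbation $\bm{\gamma}$. Consequently, for a fixed oblivious request sequence $r$ the waiting time $T_s$ is \emph{deterministic} (it is read off from the $\bm{\Theta}_t$'s alone), and for every slot $t>T_s$ the vector chosen by W-FTPL coincides exactly with the one a plain FTPL run with the same $\bm{\gamma}$ and $\eta_t=\alpha\sqrt{t}$ would choose, since both minimise $\langle\bm{\rho},\bm{\Theta}_t+\eta_t\bm{\gamma}\rangle$ over the same $\bm{\Theta}_t$. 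I would therefore split the horizon at $T_s$ and write the regret against the best static level $i^\star=\argmin_i\{\Theta_{T+1,i}+M\alpha_i\}$ as the sum of a \emph{wait-phase} term $\Theta_{T_s+1,1}-\Theta_{T_s+1,i^\star}$ (no fetch occurs while idling at level $0$) and an \emph{FTPL-phase} term covering slots $T_s+1,\dots,T$.

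For the wait-phase term I would read the first summand $\sqrt{\beta T(\log M)^{1+\delta}}$ directly off the stopping rule. By definition of $T_s$, for every $t<T_s$ the waiting condition held, i.e. $t\ge(\max_{i\ne j}(\Theta_{t+1,i}-\Theta_{t+1,j}))^2/(\kappa^2\beta(\log M)^{1+\delta})$, which rearranges to $\max_{i\ne j}(\Theta_{t+1,i}-\Theta_{t+1,j})\le\kappa\sqrt{t\,\beta(\log M)^{1+\delta}}$. Applying this at $t=T_s-1$ and using that every coordinate of $\bm{\theta}_{T_s}$ lies in $[0,\kappa]$ by Assumption~\ref{assump:cost<k} to pass from $\bm{\Theta}_{T_s}$ to $\bm{\Theta}_{T_s+1}$, I would bound $\Theta_{T_s+1,1}-\Theta_{T_s+1,i^\star}\le\max_{i\ne j}(\Theta_{T_s+1,i}-\Theta_{T_s+1,j})$ by a quantity of order $\sqrt{\beta T(\log M)^{1+\delta}}$. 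This is precisely the intended role of the $\log M$ factor in the threshold: a larger fetch cost forces a longer, but still $\OO(\sqrt{T})$, wait before committing.

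For the FTPL-phase term I would invoke the analysis already carried out for Theorem~\ref{thm:reg_adv}(c). Because W-FTPL and the coupled FTPL run make identical decisions on $T_s+1,\dots,T$, the service and rent costs agree, and the only fetch discrepancy is the single transition out of the idle state at slot $T_s+1$, contributing at most $M$, which I would absorb as a lower-order term. Hence the FTPL-phase regret against $i^\star$ is at most the FTPL regret bound of part (c) evaluated on the sub-horizon $[T_s+1,T]$; since that bound is increasing in the horizon length I may replace the window length by $T$, recovering the remaining two summands $\sqrt{2T\log K}(\alpha+4\kappa^2/\alpha)$ and $K^2M(c+2\kappa)\sqrt{T+1}/(2\alpha\sqrt{\pi})$. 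Adding the wait-phase and FTPL-phase bounds and taking the supremum over $r$ then yields the claimed inequality.

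The main obstacle I anticipate is the wait-phase estimate: the stopping rule only controls the gap for $t<T_s$, so I must handle carefully the off-by-one passage from $\bm{\Theta}_{T_s}$ to $\bm{\Theta}_{T_s+1}$ (the slot at which W-FTPL actually commits to hosting) and argue that one additional slot enlarges the gap by at most $\OO(\kappa)$, a lower-order contribution, while tracking the $\kappa$-factors introduced by the $\kappa^2$ normalisation in the threshold. A secondary subtlety is justifying that the part-(c) bound transfers to the \emph{warm-started} FTPL phase, in which $\bm{\Theta}_{T_s+1}\ne\bm{0}$; here I would observe that the stability/telescoping argument underlying part (c) depends only on the per-slot increments $\bm{\theta}_t\in[0,\kappa]^K$ and on the schedule $\eta_t=\alpha\sqrt{t}$, and not on the initial value of $\bm{\Theta}$, so the same estimate applies verbatim on $[T_s+1,T]$.
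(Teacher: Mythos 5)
Your proposal addresses only part (e) of the five-part theorem, but for that part it follows essentially the same route as the paper: split the horizon at $T_s$, bound the wait-phase regret by $\OO\bigl(\kappa\sqrt{\beta T(\log M)^{1+\delta}}\bigr)$ directly from the stopping rule, couple W-FTPL with an FTPL run sharing the same $\bm{\gamma}$ on slots after $T_s$, and reuse the part-(c) bound for that phase. Your explicit handling of the off-by-one passage from $\bm{\Theta}_{T_s}$ to $\bm{\Theta}_{T_s+1}$ and of the warm-started FTPL phase are refinements of steps the paper treats implicitly, and the remaining parts (a)--(d) would still need separate arguments.
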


The key take-away from Theorem \ref{thm:reg_adv} is that RR has linear regret and FTPL, W-FTPL have order-optimal regret, i.e., $\OO(\sqrt{T})$ with respect to the time horizon under adversarial arrivals. The proof of Theorem \ref{thm:reg_adv} is in Section \ref{sec:proofs}.

Our second result characterizes the regret performance of the policies discussed in Section \ref{sec:policies} for i.i.d. stochastic arrivals. 
\begin{theorem}[Stochastic Arrivals]\label{thm:reg_stch}
	Let the arrivals in each time-slot be i.i.d. stochastic with mean $\mu$ and $\Delta_{\text{min}}\ne 0$. Then we have
	\begin{enumerate}[(a)]
		\item $\mathcal{R}^{\text{RR}}_S(T) =\Omega({T}).$
		\item For $\eta_{t}=\alpha\sqrt{t-1}$, $\alpha>0$,
		\begin{align*}
		\mathcal{R}^{\text{FTPL}}_S(T) 
		\le& \left( \sqrt{2\log K}+\frac{2\sqrt{2h_1}\log K}{\Delta_{\text{min}}} \right) \left( \alpha  + \frac{4\kappa^2}{\alpha}\right)
		+ \frac{16\alpha^2+4\kappa^2}{\Delta_{\text{min}}}\\ 
		&+(16\alpha^2+3\kappa^2) \frac{M}{\Delta_{\text{min}}^2},
		\end{align*} 
		where $h_1= 4\max\{8\alpha^2,\kappa^2\}$.
	\item For $\eta_{t}=\alpha\sqrt{t-1}$, $\alpha>0$, $\beta>1$, $\delta>0$,  and $M$ large enough we have,
		\begin{align*}
			\mathcal{R}^{\text{W-FTPL}}_S(T)
			\le  1&+\beta \kappa^2(\log M)^{1+\delta}\left(\frac{4}{\Delta_{\text{min}}^2}+\frac{16\alpha^2+3\kappa^2}{\Delta_{\text{min}}^2 \Delta_{\text{max}}^2}\right)\\
			&+(16\alpha^2+4\kappa^2)\left(\frac{1}{\Delta_{\text{min}}}+\frac{1}{\Delta_{\text{min}}^2}\right)\\
			&+\left( \sqrt{2\log K}+\frac{2\sqrt{2h_1}\log K}{\Delta_{\text{min}}} \right) \left( \alpha  + \frac{4\kappa^2}{\alpha}\right),
		\end{align*}
		where $h_1= 4\max\{8\alpha^2,\kappa^2\}$.
	\end{enumerate}
\end{theorem}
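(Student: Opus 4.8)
The plan is to split the horizon at the random stopping time $T_s$ and analyze the two phases separately, reusing the stochastic FTPL analysis of part~(b) for the post-wait phase. A first, crucial point: I would work with the exact regret $\mathcal{R}^{\text{W-FTPL}}_S(T)=\E[\mathcal{C}^{\text{W-FTPL}}(T,r)]-\min_i\{\mu_iT+M\alpha_i\}$ and \emph{not} drop the $M\alpha_i$ term as in \eqref{eq:Regret_stch}. The reason is that when W-FTPL leaves the idle level $\alpha_1=0$ at the end of the wait period and jumps to the empirical leader, it makes one ``necessary'' fetch of cost at most $M\alpha_{i^\star}$; keeping $M\alpha_{i^\star}$ in the oracle cost lets this single fetch cancel, which is what makes the residual $M$-dependence only polylogarithmic. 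Writing $I_t$ for the hosting level chosen in slot $t$, I decompose the regret into the wait-phase rent/service regret $\E[\sum_{t\le T_s}\Delta_{I_t}]$ (here $I_t=1$ and no fetch occurs), the post-wait rent/service regret $\E[\sum_{t>T_s}\Delta_{I_t}]$, the perturbation (static) regret of the leader, and the \emph{excess} fetch cost $M\,\E[\#\{\text{fetches beyond the first}\}]$.

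For the wait phase I would bound $\E[T_s]$. Since $\Theta_{t+1,i}=c\alpha_i t+g(\alpha_i)R_t$, the quantity $\max_{i\ne j}(\Theta_{t+1,i}-\Theta_{t+1,j})$ dominates $\Theta_{t+1,j^\star}-\Theta_{t+1,i^\star}$ with $j^\star=\argmax_i\mu_i$, a sum of $t$ i.i.d. increments of mean $\Delta_{\max}$ and range at most $\kappa$. A Hoeffding/sub-Gaussian bound then shows that once $t\gtrsim \kappa^2\beta(\log M)^{1+\delta}/\Delta_{\max}^2$ the stopping inequality fails only on an exponentially small event, so $\E[T_s]$ is at most $O\!\big(\kappa^2\beta(\log M)^{1+\delta}/\Delta_{\max}^2\big)$ plus a geometric tail contributing an $O(1)$ constant (this is the source of the stray ``$+1$''). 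Multiplying $\E[T_s]$ by the per-slot wait regret $\Delta_1$ and bounding the $\Delta$-ratios produces the polylogarithmic summand $\tfrac{4}{\Delta_{\min}^2}\beta\kappa^2(\log M)^{1+\delta}$; the post-wait rent/service term $\E[\sum_{t>T_s}\Delta_{I_t}]$ and the perturbation term are handled exactly as in part~(b) via $\sum_t\Pr[I_t\ne i^\star]=O(\Delta_{\min}^{-2})$ and the standard FTPL perturbation lemma, yielding the $(\sqrt{2\log K}+\cdots)(\alpha+4\kappa^2/\alpha)$ group and the $(16\alpha^2+4\kappa^2)(\Delta_{\min}^{-1}+\Delta_{\min}^{-2})$ group (the extra $\Delta_{\min}^{-2}$ absorbing the $T_s$-tail contributions).

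The heart of the proof, and the main obstacle, is the excess fetch cost. I would bound it by $M\sum_{t>T_s}\Pr[\text{a fetch occurs in slot }t]\le 2M\sum_{t>T_s}\Pr[I_t\ne i^\star]$ and then exploit that, by the stopping rule, the empirical gaps are already large at the instant FTPL switches on. Concretely, for $t>T_s$ every gap $\Theta_{t,i}-\Theta_{t,i^\star}$ grows like $\Delta_i t\ge \Delta_{\min}t$ and is at least $\Delta_{\min}T_s\gtrsim \Delta_{\min}\kappa^2\beta(\log M)^{1+\delta}/\Delta_{\max}^2$; a Gaussian tail bound on the perturbation $\eta_t\bm{\gamma}$ (with $\eta_t=\alpha\sqrt{t-1}$) against such a gap makes $\Pr[I_t\ne i^\star]\le \exp\!\big(-c\,\Delta_{\min}^2\kappa^2\beta(\log M)^{1+\delta}/(\alpha^2\Delta_{\max}^2)\big)$ up to the geometric sum over $t$. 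The key step is then that $M\cdot\exp(-c(\log M)^{1+\delta})=\exp(\log M-c(\log M)^{1+\delta})$ is controlled precisely because $\delta>0$ (and $\beta>1$), so for $M$ large enough this product is dominated by the clean expression $\tfrac{16\alpha^2+3\kappa^2}{\Delta_{\min}^2\Delta_{\max}^2}\beta\kappa^2(\log M)^{1+\delta}$ --- morally the FTPL fetch bound of part~(b) with $M$ replaced by the polylogarithmic wait length $T_s$. The delicate points I would have to get right are (i) that the stopping rule only controls the \emph{maximal} gap, so one must separately argue the gap to $i^\star$ for every suboptimal level is comparably large, and (ii) that $T_s$ is random, so the gap lower bound and the summation range both depend on $T_s$ and must be combined with the concentration estimate for $T_s$ from the previous step (conditioning on a high-probability event $\{T_s\le t^\ast\}$ and bounding the complementary event crudely). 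Assembling the four groups gives the stated bound.
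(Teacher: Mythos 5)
Your proposal addresses only part (c), taking (a) and (b) as given, so it does not cover the full statement; for part (c), however, your route is essentially the paper's. The paper likewise splits the horizon at $T_s$, bounds the wait-phase cost through $\E[T_s]\le 1+4\beta\kappa^2(\log M)^{1+\delta}/\Delta_{\text{min}}^2+\oo(1)$ (Lemma \ref{lem:E[T_s]}, built on the upper tail $\Prob(T_s>t)\le e^{-\Delta_{\text{min}}^2 t/2\kappa^2}$ of Lemma \ref{lem:Ts>T_0}), reuses the $M=0$ analysis of part (b) (Lemma \ref{lem:FTPL_reg_stch}) for the post-wait rent/service regret, and controls the fetch cost exactly by your key observation: on the high-probability event $T_s>T_0=\Theta\!\left((\log M)^{1+\delta}/\Delta_{\text{max}}^2\right)$ (Lemma \ref{lem:Ts>t0}) the switching probabilities \eqref{eq:swFTPL_inq} summed from $T_0$ onward are $e^{-\Omega((\log M)^{1+\delta})}$, so multiplying by $M$ leaves a vanishing quantity precisely because $\delta>0$, while the complementary event is charged the full FTPL fetch cost of Lemma \ref{lem:FTPL_swcost_stc} times its negligible probability. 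Your worry (i) is resolved the way the paper does it: the stopping rule is used only to lower-bound $T_s$ itself, after which the unconditional Hoeffding-based switching bound applies for $t\ge T_0$; there is no need to extract per-pair gap bounds from the stopping rule. The one genuine divergence is the first post-wait fetch: you retain $M\alpha_{i^\star}$ in the oracle's cost and cancel it against the forced fetch at $t=T_s+1$, whereas the paper works with the relaxed regret \eqref{eq:Regret_stch} and folds that transition out of $\rho_{T_s}=0$ into the sum of FTPL switching probabilities, which formally only covers transitions between two FTPL-chosen states; your explicit cancellation (supplemented by the observation that the first fetch lands on some $j\ne i^\star$ only with probability $e^{-\Omega((\log M)^{1+\delta})}$, since it costs $M\alpha_j$ rather than ``at most $M\alpha_{i^\star}$'') is the cleaner way to close that step.
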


\begin{corollary}[Stochastic Arrivals]
	Let the arrivals in each time-slot be i.i.d. stochastic with mean $\mu$ and $\Delta_{\text{min}}\ne 0$. For the W-FTPL policy, if $\rho_{t}^\mathcal{P}\in\{0,1\}$,  $\beta=\max\{(1+4\alpha/\kappa)^2, (1+\sqrt{2}/\kappa)^2\}$, then the regret scales at most logarithmically with $M$.
\end{corollary}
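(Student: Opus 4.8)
\emph{Proof idea.} The plan is to read the $M$-dependence directly off Theorem~\ref{thm:reg_stch}(c) after specializing to binary hosting, and then to explain why the prescribed $\beta$ is exactly what keeps this dependence logarithmic. When $\rho_t^{\mathcal P}\in\{0,1\}$ we have $K=2$, $\bm s=[0,1]'$ and $\bm f=[1,0]'$, so $\bm\theta_t=[r_t,c]'$, $\bm\Theta_t=[R_{t-1},c(t-1)]'$, and there is a single sub-optimal level; hence $\Delta_{\text{min}}=\Delta_{\text{max}}=|\mu-c|$ and $\log K=\log 2$. First I would check that the prescribed $\beta=\max\{(1+4\alpha/\kappa)^2,(1+\sqrt2/\kappa)^2\}>1$, so it is an admissible choice in Theorem~\ref{thm:reg_stch}(c). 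Substituting $K=2$ and $\Delta_{\text{max}}=\Delta_{\text{min}}$ into that bound, every summand is a constant in $M$ except the single term $\beta\kappa^2(\log M)^{1+\delta}\big(4/\Delta_{\text{min}}^2+(16\alpha^2+3\kappa^2)/\Delta_{\text{min}}^4\big)$; since $\alpha,\kappa,\Delta_{\text{min}},\beta$ are all independent of $M$, the regret is $\OO((\log M)^{1+\delta})$, i.e.\ at most (poly)logarithmic in $M$, in sharp contrast with the $\OO(M)$ term of Theorem~\ref{thm:reg_stch}(b) for FTPL.

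The conceptual core, and the reason the specific $\beta$ is singled out, lies in why the general bound of Theorem~\ref{thm:reg_stch}(c) already avoids any linear-in-$M$ term. I would make this explicit in the binary case by splitting the regret into the wait phase ($t\le T_s$) and the FTPL phase ($t>T_s$), where the stopping rule reduces to $T_s=\min\{t:(R_t-ct)^2>\kappa^2\beta(\log M)^{1+\delta}\,t\}$. In the wait phase, if $\mu<c$ then not hosting is optimal ($i^\star=1$) and no regret accrues; if $\mu>c$ then each waiting slot costs $\Delta_{\text{min}}$ in expectation, and since $R_t/t\to\mu$ the stopping rule yields $\E[T_s]=\OO\big(\kappa^2\beta(\log M)^{1+\delta}/\Delta_{\text{min}}^2\big)$, so the wait-phase regret is $\OO\big(\kappa^2\beta(\log M)^{1+\delta}/\Delta_{\text{min}}\big)$, again of logarithmic order.

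For the FTPL phase the danger is the fetch cost: an erroneous switch costs $M$, and this is precisely the source of the linear-in-$M$ penalty that plain FTPL pays. A fetch at time $t$ requires the scaled Gaussian perturbation $\eta_t(\gamma_2-\gamma_1)$ to overcome the true cumulative gap $|R_{t-1}-c(t-1)|\approx\Delta_{\text{min}}t$. I would bound $\Prob(\text{fetch at }t)$ by combining a Gaussian tail bound on $\gamma_2-\gamma_1\sim\mathcal N(0,2)$ with a Hoeffding bound on $R_{t-1}-\mu(t-1)$ for $\kappa$-bounded arrivals, and show that the waiting rule forces $|R_{T_s}/T_s-c|$ to exceed $\Delta_{\text{min}}/2$ with probability at least $1-1/M$. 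These two tail bounds are exactly what pin down $\beta$: the Gaussian-perturbation bound yields the candidate $(1+4\alpha/\kappa)^2$ and the arrival-concentration bound yields $(1+\sqrt2/\kappa)^2$, and taking the maximum guarantees that both are simultaneously small enough that $M\cdot\Prob(\text{fetch at }t)$ is bounded and summable over $t>T_s$.

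The step I expect to be the main obstacle is this last fetch-cost control: one must turn the random stopping time $T_s$ into a uniform-in-$t$ accuracy guarantee on $R_t/t$ that persists for all $t>T_s$, and then sum $\sum_{t>T_s}M\,\Prob(\text{fetch at }t)$ to a logarithmic quantity. Handling the coupling between $T_s$ (a function of the arrival path) and the subsequent FTPL perturbations --- cleanest by conditioning on the arrival sample path, on which $T_s$ is deterministic while the $\gamma_j$ remain independent --- is where the precise constants in $\beta$ are forced and where the argument is most delicate.
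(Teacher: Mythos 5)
Your overall route is the intended one: the paper gives no separate proof of this corollary, and it is meant to follow by specializing the proof of Theorem~\ref{thm:reg_stch}(c) to $K=2$ (so $\Delta_{\text{min}}=\Delta_{\text{max}}=|\mu-c|$) and checking that the prescribed $\beta$ kills the $M$-prefactors on the fetch-cost terms. Your later paragraphs correctly identify the mechanism and correctly attribute the two candidates in the max: $(1+4\alpha/\kappa)^2$ comes from requiring the Gaussian-perturbation tail $\exp(-\Delta_{\text{min}}^2 T_0/16\alpha^2)$ to decay at least as fast as $1/M$ when $T_0\propto(\sqrt{\beta}-1)^2\kappa^2\log M/\Delta_{\text{max}}^2$, and $(1+\sqrt{2}/\kappa)^2$ comes from the analogous requirement on the Hoeffding tail $\exp(-\Delta_{\text{min}}^2 T_0/2\kappa^2)$.

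The one genuine gap is in your first paragraph: you cannot obtain the corollary by ``reading the $M$-dependence directly off Theorem~\ref{thm:reg_stch}(c),'' because that theorem is stated for $\delta>0$ and yields $(\log M)^{1+\delta}$, which is polylogarithmic, not logarithmic. The whole point of prescribing this particular $\beta$ is that it lets you take $\delta=0$ (as the remark following the corollary indicates) and still have the residual terms of the form $M\cdot M^{-(\sqrt{\beta}-1)^2\kappa^2\Delta_{\text{min}}^2/(16\alpha^2\Delta_{\text{max}}^2)}$ and $M\cdot M^{-(\cdot)/(2\kappa^2)(\cdot)}$ remain bounded; with $\delta>0$ these are killed by $M^{(\log M)^{\delta}}$ regardless of $\beta$, which is why Theorem~\ref{thm:reg_stch}(c) does not need any particular $\beta$. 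So you must explicitly set $\delta=0$, re-enter the final steps of the proof of Theorem~\ref{thm:reg_stch}(c), and verify the exponents exceed one under the stated $\beta$ with $\Delta_{\text{min}}=\Delta_{\text{max}}$ --- your sketch gestures at this but never commits to $\delta=0$, and without that the conclusion ``logarithmic'' does not follow. (Be warned that the paper's own constants here are not fully consistent --- e.g.\ whether $T_0$ carries the $\kappa^2$ factor differs between the statement and proof of Lemma~\ref{lem:Ts>t0} --- so matching $(1+\sqrt{2}/\kappa)^2$ exactly requires care with which version of the Hoeffding exponent you use.)
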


\begin{remark} While the result in \ref{thm:reg_stch} $(c)$ is stated for $\delta>0$, it can e shown that for $\delta=0$ and $\beta$ chosen suitably large, the  regret of W-FTPL scales logarithmically with $M$.
\end{remark}
\color{black}
%
%

The key take-aways from Theorem \ref{thm:reg_stch} are that RR has linear regret with respect to time, and FTPL/W-FTPL has regret that does not scale with time. Also, the upper bound on the regret of W-FTPL scales as $(\log M)^{1+\delta}$ for $\delta\ge 0$ where as for FTPL, the upper bound on the regret scales linearly with $M$. W-FTPL thus outperforms FTPL w.r.t. to the dependence of regret on the fetch cost $M$.  We validate these results via simulations in Section \ref{sec:simulations}.

\color{black}

\subsection{Competitive Ratio Analysis}
Our next result characterizes the competitive ratio of the policies discussed in Section \ref{sec:policies}. The competitive ratio of RR is characterized in \cite{narayana2021renting, LN_partial_hosting}, and it is shown that RR has a constant competitive ratio and the competitive ratio decreases with an increase in fetch cost $M$. We present the results for FTPL and W-FTPL policies below.
\begin{theorem}[Adversarial Arrivals]\label{thm:comp_adv}
	Let the arrivals be generated by an oblivious adversary under the constraint that at most $\kappa$ requests arrive in each time-slot. Then we have
\begin{enumerate}[(a)]
	\item For $\eta_{t}=\alpha\sqrt{t-1}$, $\alpha>0$, 
	\begin{align*}
	\sigma^{\text{FTPL}}_{A}(T)\le& \frac{\kappa^2(3+2M/c)}{\min_i(c\alpha_{i}+g(\alpha_{i}) \kappa)}\max_{\alpha_{i}\ne 0}\frac{1-g(\alpha_{i})}{\alpha_{i}} \\&+\frac{\kappa^2(M+c)}{\min_i(c\alpha_{i}+g(\alpha_{i}) \kappa)}\sum_{i=2}^K\frac{16\alpha^2}{c^2\alpha_{i}^2}.
	\end{align*}
	\item For $\eta_{t}=\alpha\sqrt{t-1}$, $\alpha>0$, $\beta>1$, $\delta\ge0$,
	\begin{align*}
	\sigma^{\text{W-FTPL}}_{A}(T)\le& \frac{\kappa^2(3+2M/c)}{\min_i(c\alpha_{i}+g(\alpha_{i}) \kappa)}\max_{\alpha_{i}\ne 0}\frac{1-g(\alpha_{i})}{\alpha_{i}}\\
	&+\frac{\kappa^2(M+c)}{\min_i(c\alpha_{i}+g(\alpha_{i}) \kappa)}\sum_{i=2}^K\frac{16\alpha^2}{c^2\alpha_{i}^2}\\
	&+\frac{\kappa^2}{\min_i(c\alpha_{i}+g(\alpha_{i}) \kappa)}.
	\end{align*}
	\end{enumerate}
\end{theorem}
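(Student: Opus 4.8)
The plan is to bound the numerator $\E_{\mathcal{P}}[\mathcal{C}^{\mathcal{P}}(T,r)]$ and the denominator $\mathcal{C}^{\text{OFF-OPT}}(T,r)$ separately for every request sequence $r$, and then take the supremum of the ratio. The crucial observation is that both quantities can be tied to the cumulative request count $R_T$, so that the $T$-dependence cancels and a constant competitive ratio emerges.

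First I would establish a clean lower bound on the offline optimum. Dropping the nonnegative fetch terms from the per-slot cost \eqref{eq:cost_slot} gives $\mathcal{C}^{\text{OFF-OPT}}(T,r) \ge \sum_{t=1}^T \min_i\{c\alpha_i + g(\alpha_i)r_t\}$. Writing $\phi(r) = \min_i\{c\alpha_i + g(\alpha_i)r\}$, note that $\phi$ is concave (a minimum of affine functions) with $\phi(0) = c\alpha_1 = 0$, so $r \mapsto \phi(r)/r$ is non-increasing and hence $\phi(r_t) \ge \frac{r_t}{\kappa}\phi(\kappa)$ for every $r_t \le \kappa$. Summing over $t$ yields the key bound $\mathcal{C}^{\text{OFF-OPT}}(T,r) \ge \frac{R_T}{\kappa}\min_i\{c\alpha_i + g(\alpha_i)\kappa\}$, which produces exactly the denominator $\min_i(c\alpha_i + g(\alpha_i)\kappa)$ appearing in the statement.

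Next I would upper bound $\E[\mathcal{C}^{\text{FTPL}}(T,r)]$ by splitting it into a rent-plus-service part and a fetch part. For the rent-plus-service part I would compare FTPL's choice to the (unperturbed) leader: whenever level $i$ is the leader at slot $t$ it must beat level $0$, i.e. $(t-1)c\alpha_i \le R_{t-1}(1-g(\alpha_i))$, which both bounds the per-slot rent $c\alpha_i \le \kappa(1-g(\alpha_i))$ and caps the number of slots the leader can sit at level $i$ by $1 + R_T(1-g(\alpha_i))/(c\alpha_i)$; together with the trivial service bound $g(\rho_t)r_t \le r_t$ this controls the rent-plus-service cost by a constant multiple of $R_T$ carrying the factor $\max_{\alpha_i\neq 0}\frac{1-g(\alpha_i)}{\alpha_i}$. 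For the fetch part, note that $\gamma$ is sampled once and the perturbation is $\eta_t\gamma$ with $\eta_t = \alpha\sqrt{t-1}$; a fetch into a ``wrong'' level $i$ requires the perturbation to overcome the deterministic gap $\Theta_{t,i}-\Theta_{t,1}$, and I would bound the probability of this event by a Gaussian tail $\Prob[\gamma_1 - \gamma_i > \text{gap}/\eta_t]$. Summing these probabilities over $t$, and exploiting that the gap grows linearly in $t$ while $\eta_t$ grows only like $\sqrt{t}$ so the series converges, gives an expected fetch count into level $i$ bounded by a constant of the form $16\alpha^2/(c^2\alpha_i^2)$; multiplying by the per-fetch cost ($M$, plus the induced rent $c$) and summing over $i \ge 2$ yields the second term.

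Combining the two bounds, $\E[\mathcal{C}^{\text{FTPL}}(T,r)] \le C\,R_T$ for the appropriate constant $C$, and dividing by the lower bound $\frac{R_T}{\kappa}\min_i(c\alpha_i+g(\alpha_i)\kappa)$ gives part (a); the $R_T$ cancels, which is why the competitive ratio is constant in $T$. For part (b), W-FTPL coincides with FTPL except for the initial wait window $[1,T_s]$ during which it hosts nothing and therefore pays only the service cost $\sum_{t\le T_s} r_t \le R_T$; this contributes at most one additional unit of cost per request relative to the FTPL analysis, which after dividing by the same denominator accounts for the extra additive term $\kappa^2/\min_i(c\alpha_i+g(\alpha_i)\kappa)$, while the remainder of the W-FTPL cost is dominated by the FTPL bound. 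The hard part will be the fetch-cost estimate: one must show the expected number of perturbation-induced switches into each level is bounded by a constant independent of both $T$ and $R_T$, which is delicate because $\eta_t$ is increasing — the argument hinges on the deterministic gaps growing fast enough (linearly in $t$) to make the summed Gaussian-tail switch probabilities converge.
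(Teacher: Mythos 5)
Your overall strategy is the same as the paper's: lower-bound the offline optimum by a constant multiple of $R_T$ (Lemma \ref{lem:opt_offline_cost_LB}), upper-bound the expected FTPL cost by a constant multiple of $R_T$ plus an additive constant via a threshold $T'\propto \frac{R_T}{c}\max_{\alpha_i\ne 0}\frac{1-g(\alpha_i)}{\alpha_i}$ beyond which the gap $\Theta_{t,i}-\Theta_{t,1}$ grows linearly in $t$ while $\eta_t$ grows only like $\sqrt{t}$, and let the $R_T$'s cancel; part (b) then follows by adding the at most $R_{T_s}\le R_T$ forwarding cost of the wait phase. Your concavity argument for the denominator is in fact cleaner than the paper's chain of inequalities and gives $\mathcal{C}^{\text{OFF-OPT}}(T,r)\ge \frac{R_T}{\kappa}\min_i(c\alpha_i+g(\alpha_i)\kappa)$, a factor of $\kappa$ stronger than Lemma \ref{lem:opt_offline_cost_LB}, so it certainly suffices for the stated bound.

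The genuine gap is in how you account for FTPL's rent. Your rent bound is derived from the \emph{unperturbed} leader (``whenever level $i$ is the leader at slot $t$ it must beat level $0$''), but FTPL hosts the \emph{perturbed} argmin, so the cap $1+R_T(1-g(\alpha_i))/(c\alpha_i)$ on the slots in which the leader sits at level $i$ says nothing about the slots in which FTPL actually rents level $i$. You then try to absorb the discrepancy into the fetch analysis by charging ``$M$ plus the induced rent $c$'' per switch, but rent accrues per slot occupied, not per fetch: a single perturbation-induced fetch into level $i$ can be followed by a multi-slot stay, and the expected number of switches does not control $\sum_t c\alpha_i\Prob(\rho_t=\alpha_i)$. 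The paper avoids this by bounding, for every slot $t> T'$, the probability $p_{h,t}$ that FTPL hosts \emph{any} nonzero fraction in that slot (Lemma \ref{lem:FTPL_comp_adv}) and charging $c+M$ per such slot, while for $t\le T'$ it simply charges the worst case $c+M$ per slot, which already yields the $(2+2M/c)R_T\max_{\alpha_i\ne0}\frac{1-g(\alpha_i)}{\alpha_i}$ contribution. Your Gaussian-tail estimate is exactly the right ingredient; it just needs to be applied to the per-slot occupation event rather than to the switching event, and the leader-occupation count for $t\le T'$ should be replaced by the trivial per-slot charge.
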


The key take-away from Theorem \ref{thm:comp_adv} is that the competitive ratios of both FTPL and W-FTPL are bounded by a constant. The competitive ratio of FTPL and W-FTPL deteriorates with $M$, while the competitive ratio of RR improves with $M$ when only one partial hosting level is permitted \cite{LN_partial_hosting}. Note that the competitive ratio of FTPL and W-FTPL is bounded by a constant for the setting where any finite number of partial hosting levels are allowed. Compared to this, for RR in \cite{LN_partial_hosting}, the competitive ratio result holds only for one partial hosting level.

\begin{theorem}[Stochastic Arrivals] \label{thm:comp_stch}
	Let the arrivals in each time-slot be i.i.d. stochastic with mean $\mu$. Then we have
\begin{enumerate}[(a)]
	\item For $T$ large enough, $\sigma^{RR}_S (T)>  1$.
	\item For $\eta_{t}=\alpha \sqrt{t-1}$, $\sigma^{\text{FTPL}}_S(T) \le 1+\OO(1/T) $.
	\item For $\eta_{t}=\alpha\sqrt{t-1}$, $\alpha>0$, $\beta>1$, $\delta>0$, $\sigma^{\text{W-FTPL}}_S(T) \le 1+\OO(1/T) $.
\end{enumerate}	
\end{theorem}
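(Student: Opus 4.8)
The plan is to exploit the exact algebraic relationship between the stochastic competitive ratio and the stochastic regret, which reduces this theorem to essentially a corollary of Theorem~\ref{thm:reg_stch}. The key observation is that $\sigma_S^{\mathcal{P}}(T)$ and $\mathcal{R}_S^{\mathcal{P}}(T)$ are measured against the \emph{same} static benchmark $\min_i\{\mu_i T + M\alpha_i\}$. Since by definition $\E_{\mathcal{P},r}[\mathcal{C}^{\mathcal{P}}(T,r)] = \mathcal{R}_S^{\mathcal{P}}(T) + \min_i\{\mu_i T + M\alpha_i\}$, I would first record the identity
\begin{align*}
\sigma_S^{\mathcal{P}}(T) = \frac{\E_{\mathcal{P},r}[\mathcal{C}^{\mathcal{P}}(T,r)]}{\min_i\{\mu_i T + M\alpha_i\}} = 1 + \frac{\mathcal{R}_S^{\mathcal{P}}(T)}{\min_i\{\mu_i T + M\alpha_i\}}.
\end{align*}
Everything then follows from controlling the numerator via Theorem~\ref{thm:reg_stch} and bounding the denominator from below.

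Next I would lower bound the denominator linearly in $T$. Since $\mu_i T + M\alpha_i \ge \mu_i T \ge \mu_{i^\star} T$ for every $i$, we have $\min_i\{\mu_i T + M\alpha_i\} \ge \mu_{i^\star} T$, where $\mu_{i^\star} = \min_i \mu_i > 0$: indeed, for each level either $\alpha_i > 0$, forcing $c\alpha_i > 0$, or $\alpha_i = 0$, giving $\mu_i = g(0)\mu = \mu > 0$ by the non-degeneracy assumption. Hence the benchmark is $\Theta(T)$. For parts~(b) and~(c), I would then plug the constant-in-$T$ regret bounds of Theorem~\ref{thm:reg_stch}(b),(c) into the identity: since $\mathcal{R}_S^{\text{FTPL}}(T)$ and $\mathcal{R}_S^{\text{W-FTPL}}(T)$ are bounded by quantities independent of $T$, we obtain $\sigma_S^{\mathcal{P}}(T) \le 1 + \OO(1)/(\mu_{i^\star} T) = 1 + \OO(1/T)$ for both policies.

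For part~(a) I would invoke Theorem~\ref{thm:reg_stch}(a), which gives $\mathcal{R}_S^{\text{RR}}(T) = \Omega(T)$; in particular the RR regret is strictly positive for all large $T$. Combined with the positive denominator, the identity yields $\sigma_S^{\text{RR}}(T) = 1 + \mathcal{R}_S^{\text{RR}}(T)/\min_i\{\mu_i T + M\alpha_i\} > 1$ for $T$ large enough (in fact bounded away from $1$, since numerator and denominator are both $\Theta(T)$). The heavy lifting is entirely in Theorem~\ref{thm:reg_stch}, so the only genuine obstacle here is the bookkeeping around the benchmark: one must verify $\mu_{i^\star} > 0$ so that the denominator is truly $\Theta(T)$ rather than $\oo(T)$ (which would break the $1 + \OO(1/T)$ conclusion), and confirm that the regret bounds used for (b),(c) are indeed $T$-independent constants. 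I expect no difficulty beyond these checks.
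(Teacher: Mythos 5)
Your proposal is correct and follows essentially the same route as the paper: both write $\sigma_S^{\mathcal{P}}(T)=1+\mathcal{R}_S^{\mathcal{P}}(T)/\min_i\{\mu_iT+M\alpha_i\}$, lower bound the benchmark by $\mu_{i^\star}T$, and then invoke the regret bounds of Theorem~\ref{thm:reg_stch} (constant for FTPL/W-FTPL, $\Omega(T)$ for RR). Your explicit check that $\mu_{i^\star}>0$ is a detail the paper leaves implicit but is indeed the only bookkeeping needed.
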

The key take-away from Theorem \ref{thm:comp_stch} is that for $T$ large enough, FTPL and W-FTPL outperform RR in terms of competitive ratio for stochastic arrivals.
 
%
%
  In Section \ref{sec:simulations} we supplement the results stated in this section through simulations.
\section{Simulations}
\label{sec:simulations}
In this section, we compare RR, FTPL and W-FTPL policies via simulations using synthetic request sequences, as well as real trace data from \cite{iosup2008grid}. In all our simulations, we consider $\eta_t=0.1\sqrt{t}$ for FTPL and W-FTPL, and $\beta=6$ for W-FTPL.

\subsection{Synthetic arrivals}
In \figurename ~\ref{fig:RR_worst}, we plot the regret of different policies as a function of time horizon $T$. The parameters considered are $c=0.1$, $M=50$, $\kappa=5$. The request arrival considered is similar to the one used in the proof of Theorem \ref{thm:comp_adv}(b), i.e., we divide time into frames, and each frame starts with $\lceil \frac{M}{\kappa-c}\rceil$ slots of $\kappa$ requests each followed by  $\lceil \frac{M}{c}\rceil$ slots of zero requests. We consider 100 such frames for this experiment. The regret is averaged over 50 experiments. We observe that RR has linear regret in this case which agrees with Theorem \ref{thm:reg_adv}.

\begin{figure}[ht]
	\centering
	\includegraphics[width=\linewidth]{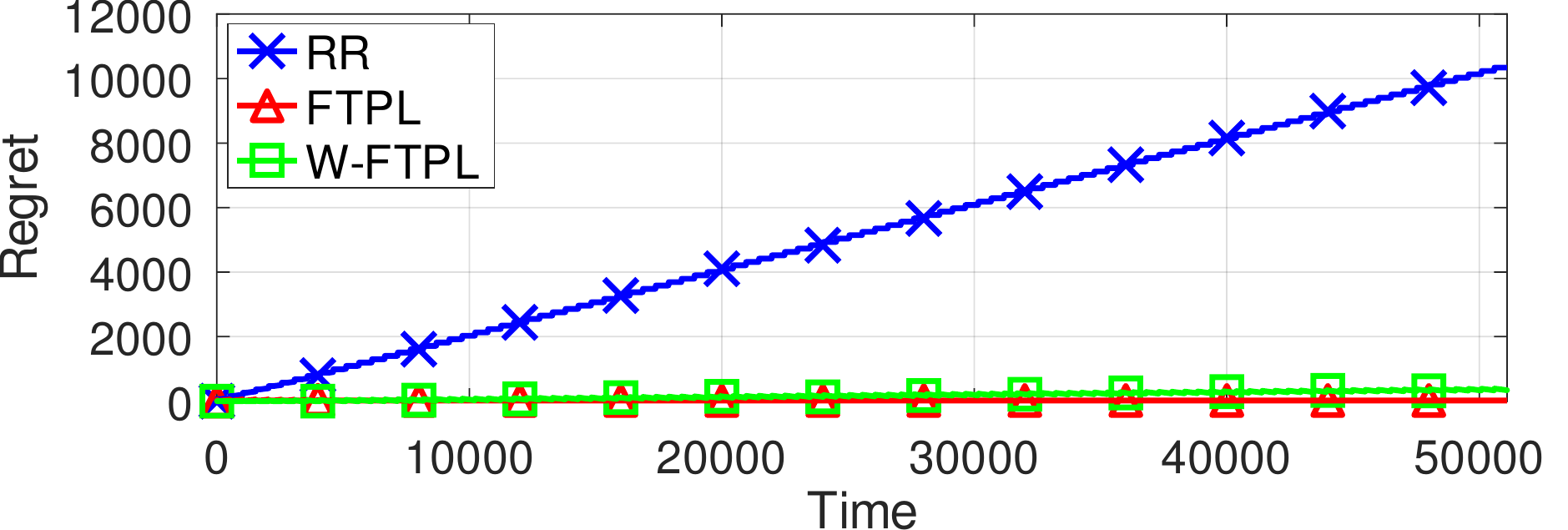}
	\caption{Regret as a function of  Time horizon ($T$)}
	\label{fig:RR_worst}
\end{figure}

\subsection{Stochastic arrivals}
 For the stochastic case, we consider arrivals in each slot to be Bernoulli with parameter $\mu$. We will restrict attention to the case where there is only one partial level of hosting, i.e., $\rho_{t}^\mathcal{P}\in \{0,\alpha_{2},1\}$. We consider $\alpha_{2}=0.5$ and $g(\alpha_{2})=0.45$. In \figurename \ref{fig:reg_T} and \figurename \ref{fig:reg_M}, we consider $\mu=0.4$, $c=0.45$.  All the results plotted are averaged over 50 independent experiments. In \figurename \ref{fig:reg_T},  we fix $M=5$ and compare the regret performance of the policies as a function of time. We observe that W-FTPL performs better than other policies, and RR has linear regret, which agrees with Theorem \ref{thm:reg_stch}. In \figurename \ref{fig:reg_M}, we compare the performance of the policies with respect to $M$ for $T=10^4$. We observe that W-FTPL performs better than other policies, and the main difference between the cost of policies is due to fetch cost. Also note that W-FTPL has better dependence on $M$ as compared to FTPL, which agrees with Theorem \ref{thm:reg_stch}. From Theorem \ref{thm:reg_stch} we observe that for large values of $M$ the regret of FTPL is proportional to $M/\Delta_{\text{min}}^2$ where as W-FTPL is proportional to $\log M /\Delta_{\text{min}}^2$. Therefore in \figurename \ref{fig:reg_c}, we compare the performance of the policies by varying rent cost $c$ for $M=500$, $\mu = 0.4$, and $T=10^4$. Again we note that W-FTPL outperforms RR and FTPL. As $c$ gets closer to $\mu$, the FTPL policy does multiple fetches, and as a result, the regret increases when $c$ is close to $\mu$.

\begin{figure}[ht]
	\centering
	\includegraphics[width=\linewidth]{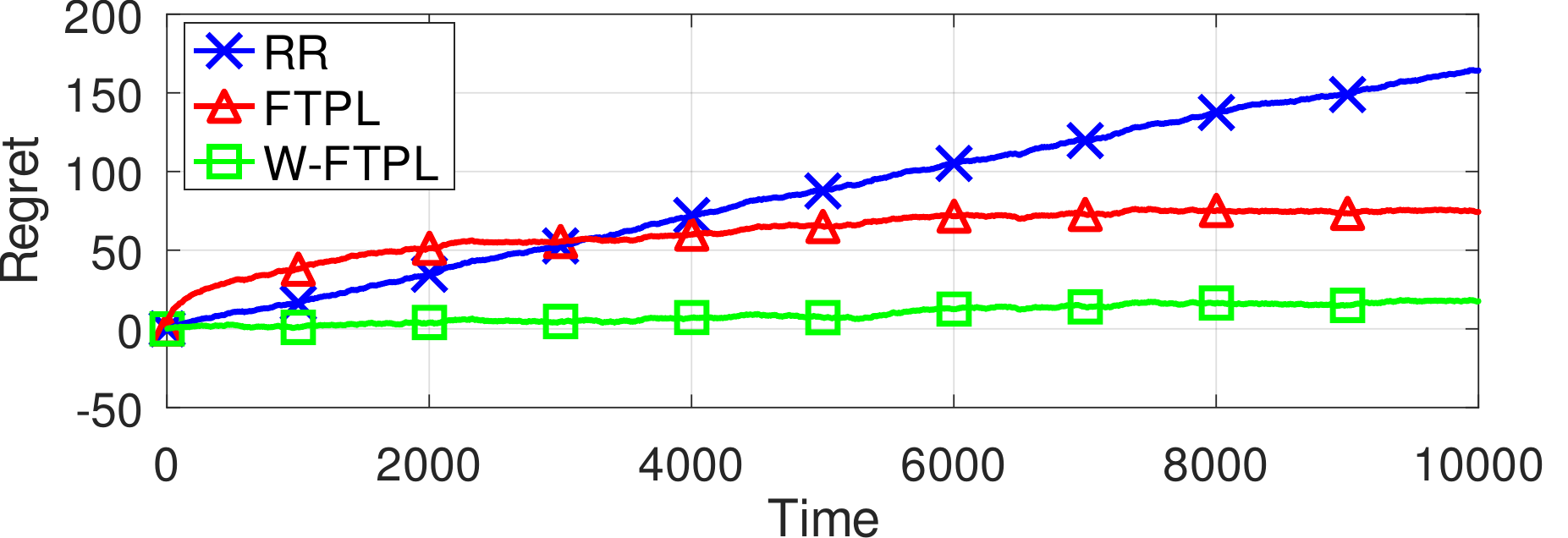}
	\caption{Regret as a function of time horizon ($T$)}
	\label{fig:reg_T}
\end{figure}

\begin{figure}[ht]
	\centering
	\includegraphics[width=\linewidth]{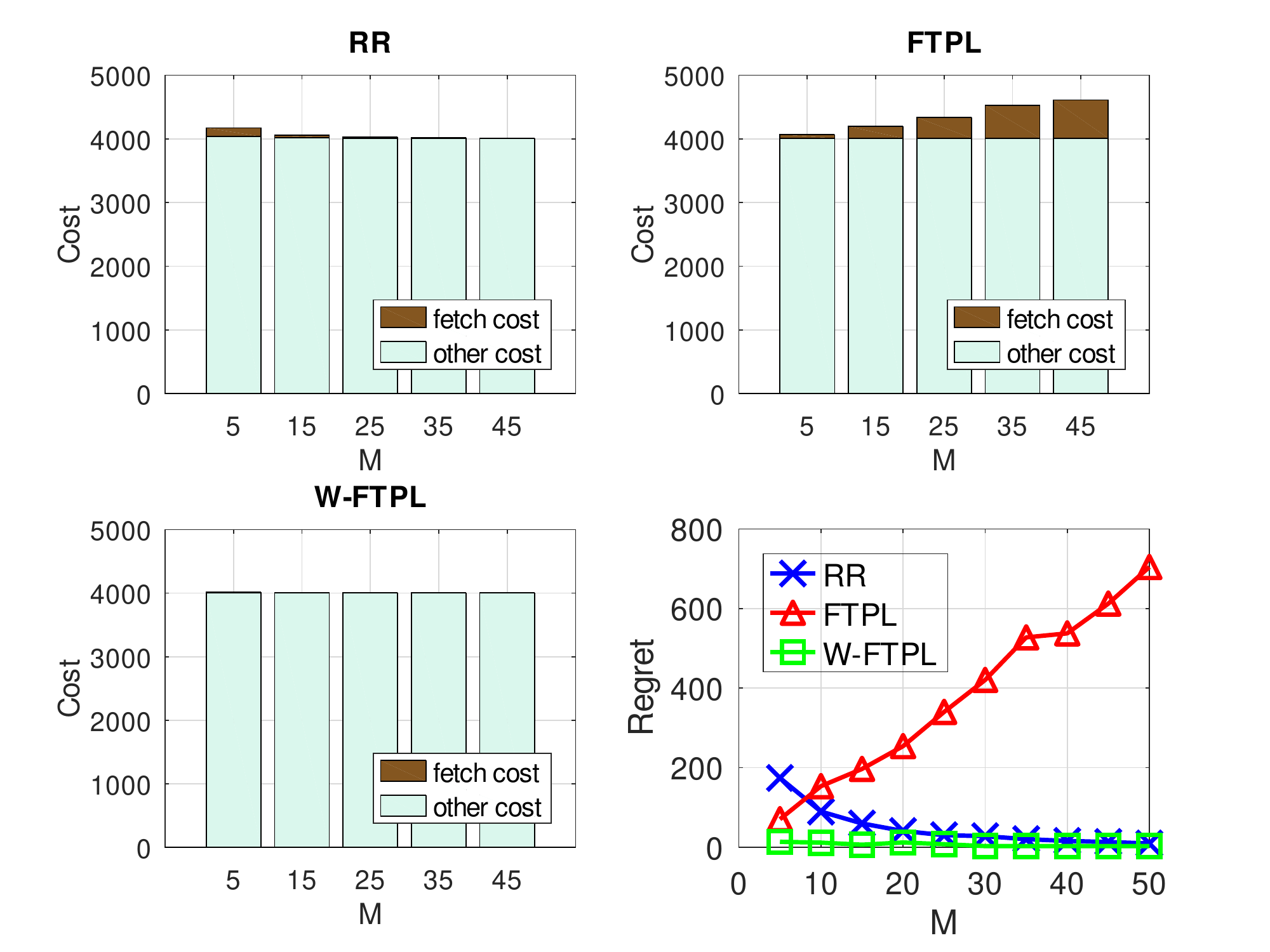}
	\caption{Cost, Regret as a function of fetch cost ($M$)}
	\label{fig:reg_M}
\end{figure}

\begin{figure}[ht]
	\centering
	\includegraphics[width=\linewidth]{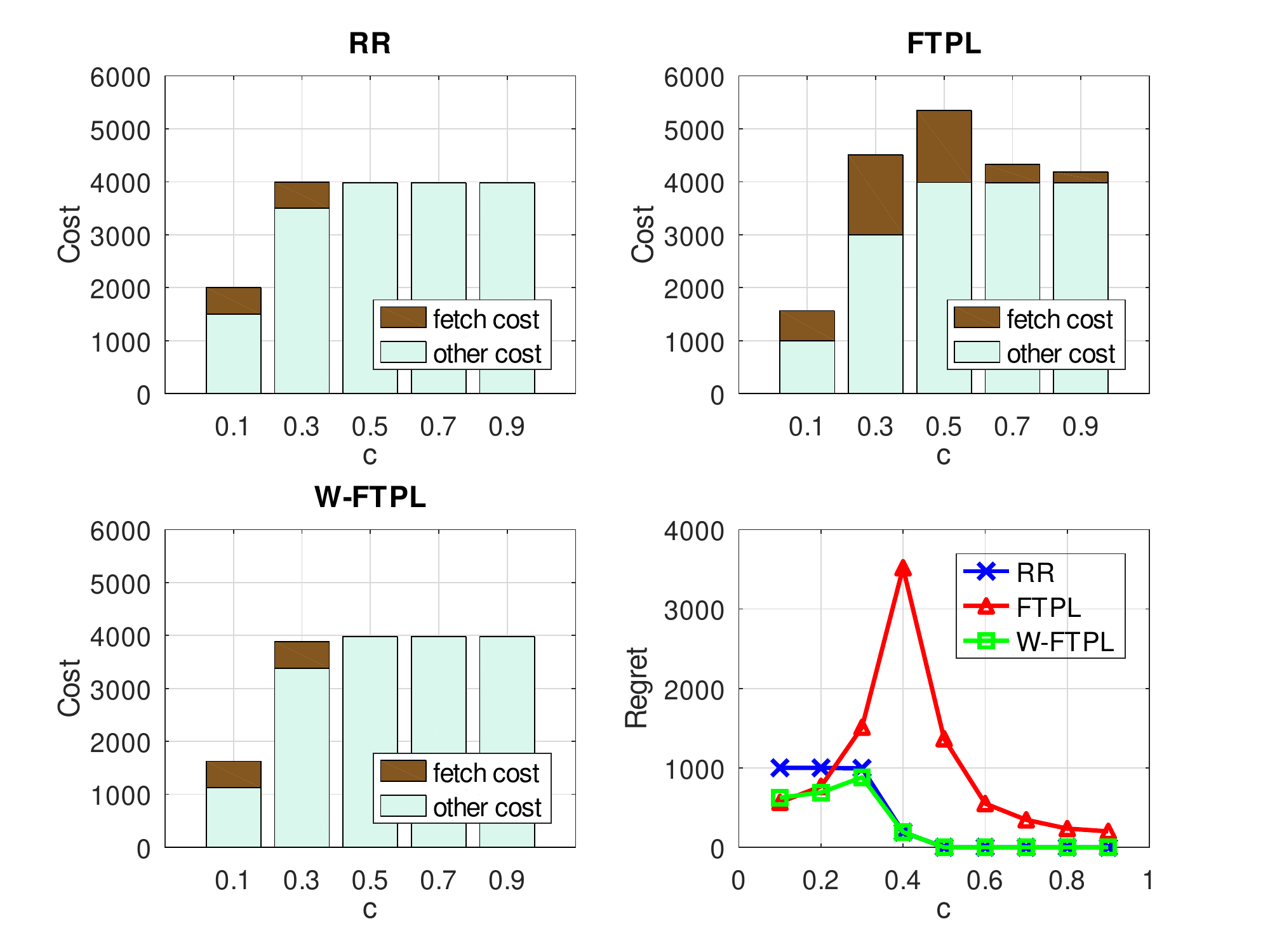}
	\caption{Cost, Regret as a function of  rent cost ($c$)}
	\label{fig:reg_c}
\end{figure}

\subsection{Trace driven arrivals}
For trace-driven simulations, we use the data collected by Grid Workloads Archive \cite{iosup2008grid}, which consists of requests arriving at computational servers. Among the traces mentioned in \cite{iosup2008grid} we use DAS-2 traces, which were provided by the Advanced School for Computing and Imaging (ASCI). We consider slot duration as 1 hour and plot a snapshot of  the trace in \figurename~\ref{fig:trace_arrvl}. We consider $\kappa=300$ for all the experiments in this subsection. All the results plotted for trace driven experiments are averaged over 50 independent experiments. 

We consider partial service hosting with one partial level for our simulations. This dataset provides information on the number of processors required to serve each request, and this number lies between 1 and 128 for all requests. For this experiment, we consider hosting the entire service as equivalent to having 128 processors at the edge and hosting the partial service as having 16 processors at the edge.  It follows that the hosting status $\rho_t^{\mathcal{P}} \in \{0, 16/125(=0.125), 1\}$. To compute $g(0.125)$, we use the dataset to compute the fraction of requests that require more than 16 processors for service and obtain that $g(0.125)=0.0328$.


\begin{figure}[ht]
	\centering
	\includegraphics[width=\linewidth]{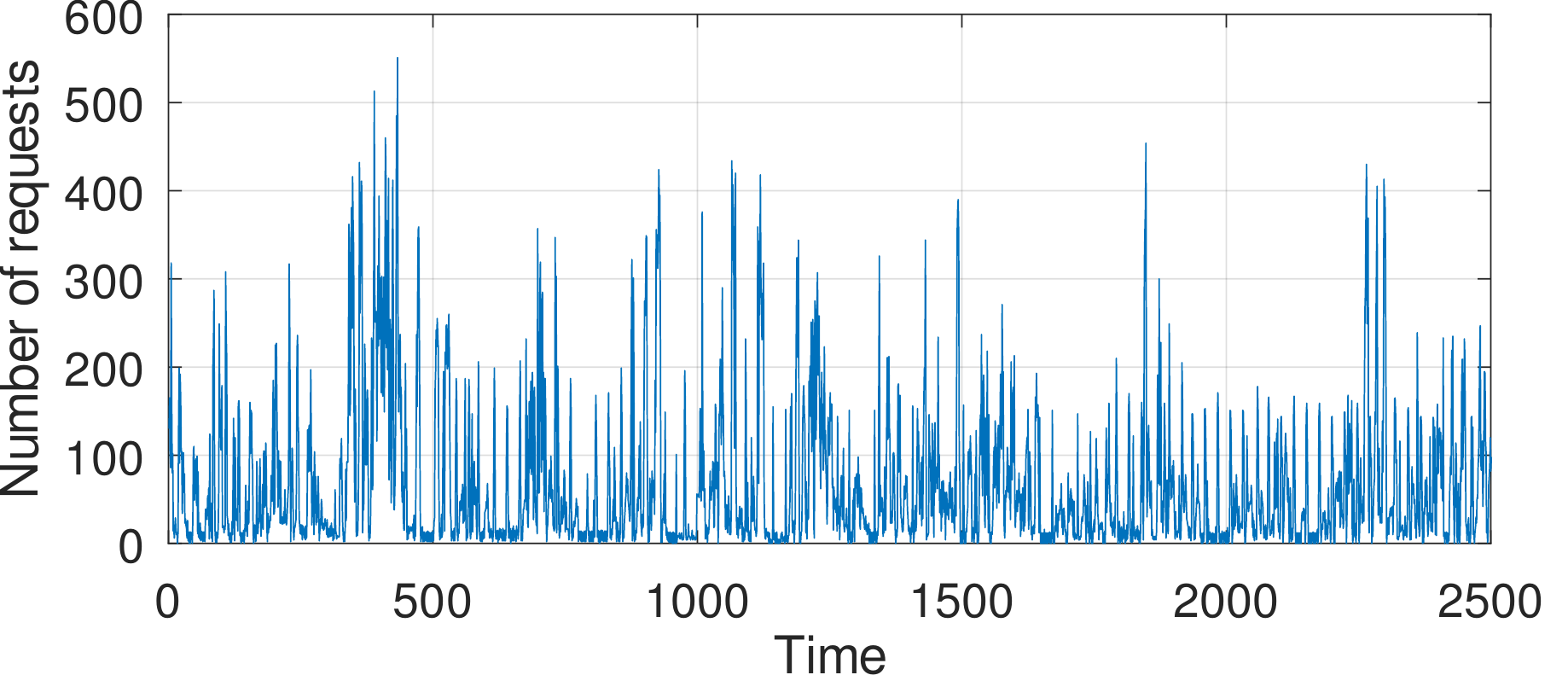}
	\caption{Arrivals in a slot for trace data considered }
	\label{fig:trace_arrvl}
\end{figure}

\begin{figure}[ht]
	\centering
	\includegraphics[width=\linewidth]{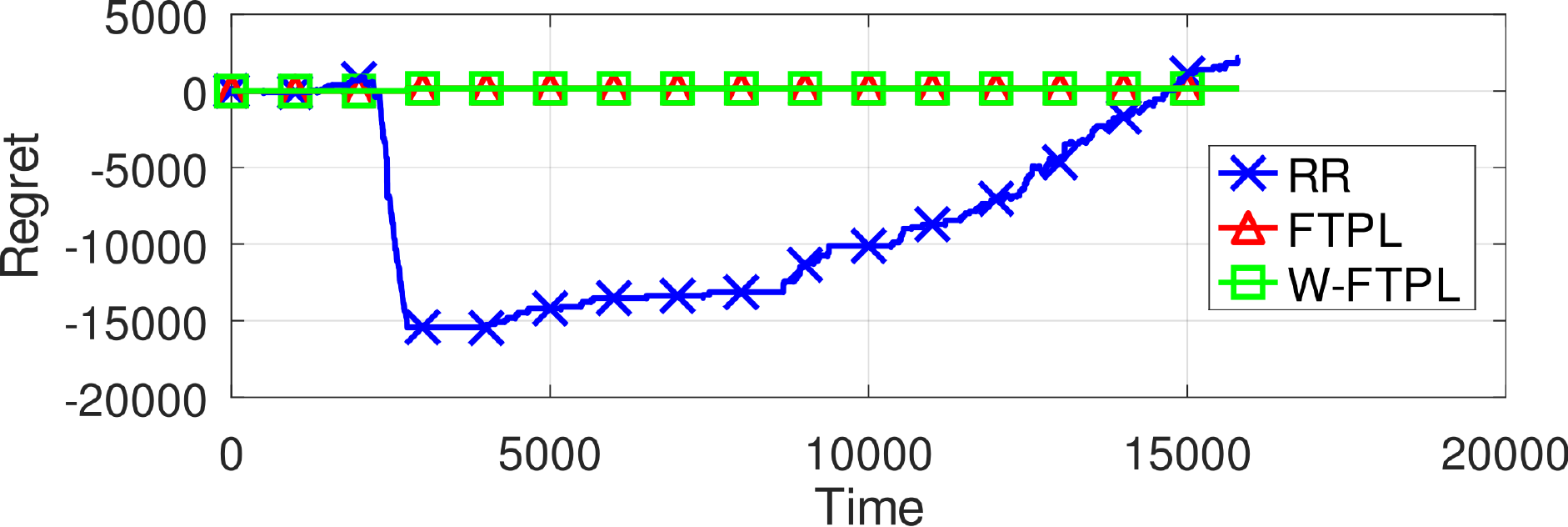}
	\caption{Regret as a function of time horizon}
	\label{fig:reg_alpha_T}
\end{figure}

\begin{figure}[ht]
	\centering
	\includegraphics[width=\linewidth]{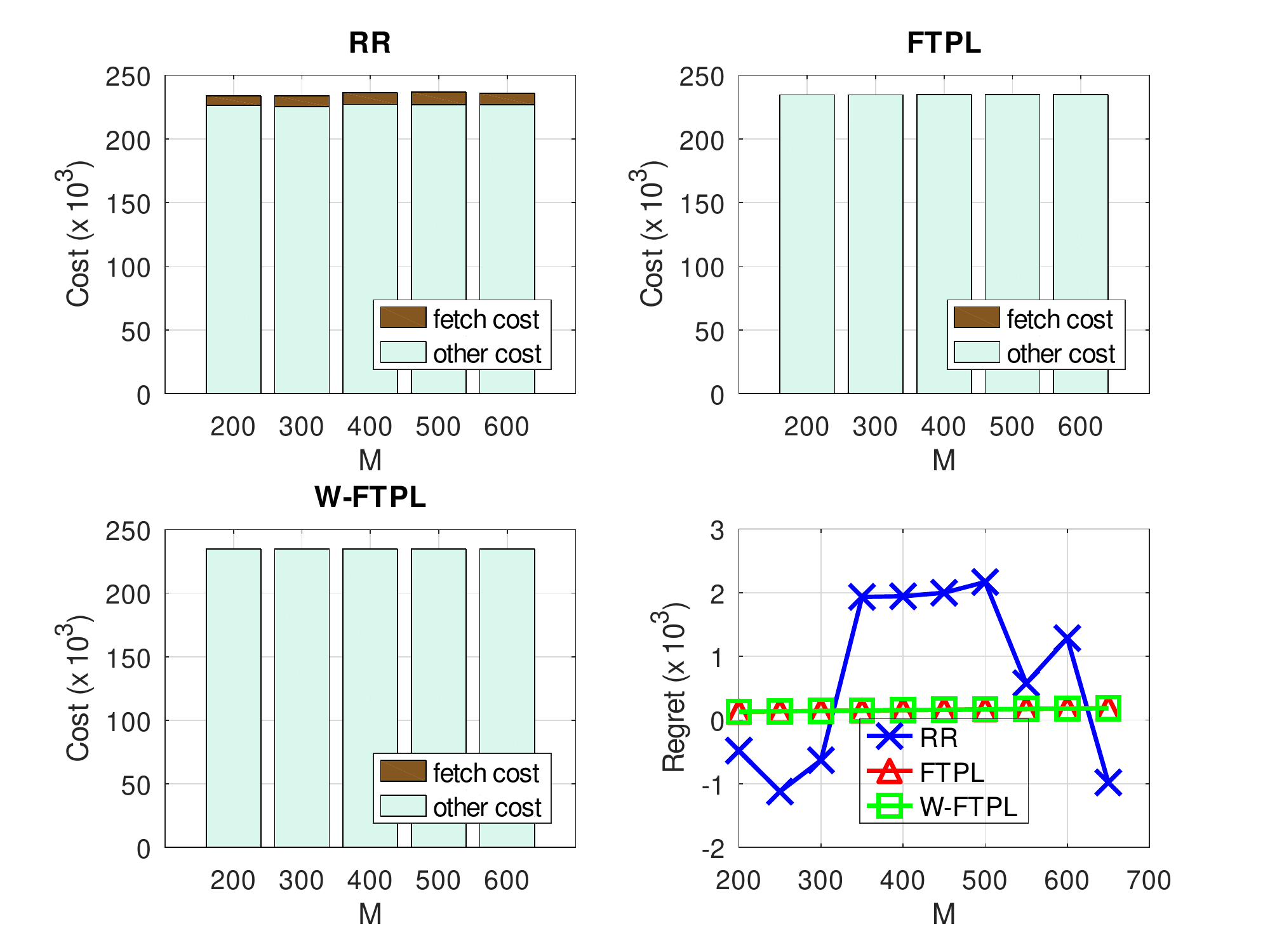}
	\caption{Cost, Regret as a function of fetch cost ($M$)}
	\label{fig:reg_M_adv}
\end{figure}

\begin{figure}[ht]
	\centering
	\includegraphics[width=\linewidth]{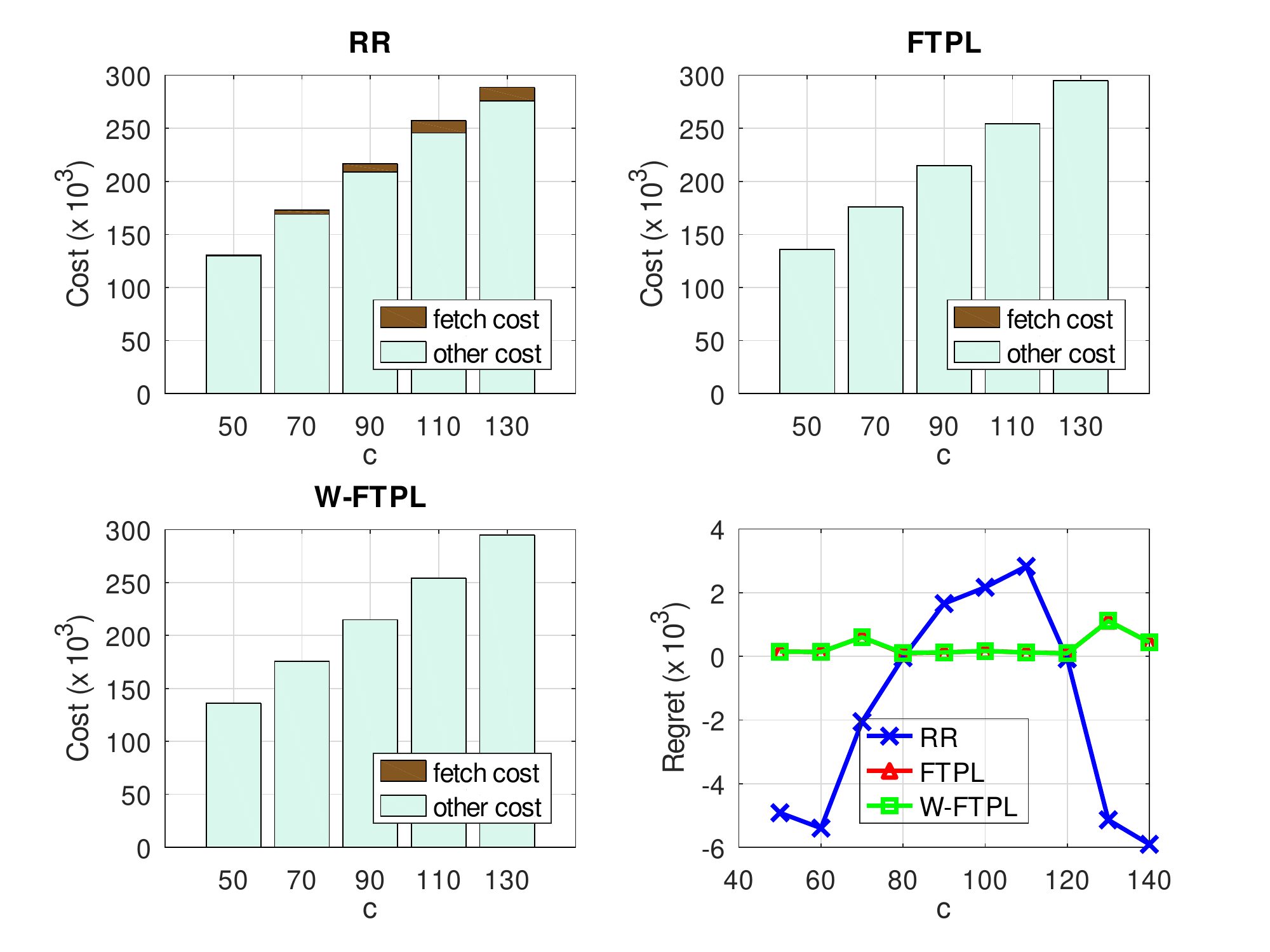}
	\caption{Cost, Regret as a function of  rent cost ($c$)}
	\label{fig:reg_c_adv}
\end{figure}

In \figurename \ref{fig:reg_alpha_T}, we consider $M=500$, $c=100$, $\kappa=300$ and compare the regret performance of policies as a function of time. We observe that RR performs better than other policies for most values of time horizon. This is because RR attempts to mimic the optimal offline policy, which is not a static policy for this arrival sequence. Note that FTPL and W-FTPL attempt to mimic the optimal static policy and perform worse than RR. 

 In \figurename \ref{fig:reg_M_adv}, we compare the performance of different policies with respect to $M$ and fix $c=100$, $\kappa=300$. We observe that RR performs better than the other policies for some values of $M$ because of the same reason mentioned before. In \figurename \ref{fig:reg_c_adv}, we compare the performance of different policies by varying rent cost $c$ and fix $M=500$, $\kappa=300$. Again we note that RR outperforms W-FTPL and FTPL in some cases.

Our simulation results thus indicate that FTPL and W-FTPL outperform RR for i.i.d. stochastic arrivals. In the case where the arrival process is such that the offline optimal policy is not static, RR can outperform FTPL and W-FTPL. Additionally, RR can have linear regret with respect to time, while FTPL and W-FTPL have sub-linear regret with respect to time for all arrival processes considered. 
\section{Proof of theorems}
\label{sec:proofs}
\subsection{Proof of Theorem \ref{thm:reg_adv}(a)}
\begin{proof}[Proof of Theorem \ref{thm:reg_adv}(a)]
Recall that regret of any policy is given by
	\begin{align*}
		\mathcal{R}^{\mathcal{P}}_{A}(T,r)=&\sum_{t=1}^Tc\rho_t+ g(\rho_t) r_t+ M (\rho_t-\rho_{t-1})^+\\
		&- \min_i\left\{c\alpha_iT+g(\alpha_i)R_T+M\alpha_{i}\right\}.
	\end{align*}
		Note that $\sup_{r} \mathcal{R}^{\mathcal{P}}_A(T,r) \ge \E_r[\mathcal{R}^{\mathcal{P}}_{A}(T,r)]$  where the expectation is w.r.t some specified distribution over request sequences. We lower bound $\E_r[\mathcal{R}^{\mathcal{P}}_{A}(T,r)]$ to get a lower bound on $\mathcal{R}^{\mathcal{P}}_A(T)$.
	\begin{align*}
		\E_r[\mathcal{R}^{\mathcal{P}}_{A}(T,r)]=&\sum_{t=1}^T c\rho_t+ g(\rho_t)\E_r[r_t] + M (\rho_t-\rho_{t-1})^+\\
		& - \E_r [\min_i\left\{c\alpha_iT+g(\alpha_i)R_T+M\alpha_{i}\right\}]\\
		\ge &\sum_{t=1}^T c\rho_t+ g(\rho_t)\E_r[r_t] \\
		& - \E_r [\min_i\left\{c\alpha_iT+g(\alpha_i)R_T\right\}]-M.
	\end{align*}
	 Let $\ell=\argmin_{i\ne 1} \frac{\alpha_{i}}{1-g(\alpha_{i})}$, $X\sim \text{Ber}(\frac{c\alpha_{\ell}}{\kappa(1-g(\alpha_{\ell}))})$, and  $r_t =\kappa X$ and i.i.d overtime, therefore $\E_r[r_t]=\frac{c\alpha_{\ell}}{1-g(\alpha_{\ell})}$. From Assumption \ref{assump:g(f)_convex} we have $\alpha_i+g(\alpha_{i})r_t\le \kappa$ for all $1\le i\le K$ and $0\le r_t\le \kappa$ therefore $\frac{c\alpha_{\ell}}{\kappa(1-g(\alpha_{\ell}))}\le1$ and $X$ is a valid Bernoulli random variable. 
	\begin{align*}
		\E_r[\mathcal{R}^{\mathcal{P}}_{A}(T,r)]
		\ge& \sum_{t=1}^T c\rho_t+ g(\rho_t)\frac{c\alpha_{\ell}}{1-g(\alpha_{\ell})}\\
		& - \E_r [\min_i\left\{c\alpha_iT+g(\alpha_i)R_T\right\}]-M\\		
		= & \sum_{t=1}^T (1-g(\rho_t))\frac{c\rho_{t}}{(1-g(\rho_t))}\mathds{1}_{\rho_{t}\ne 0}+ g(\rho_t) \frac{c\alpha_{\ell}}{1-g(\alpha_{\ell})}\\
		& - \E_r [\min_i\left\{c\alpha_iT+g(\alpha_i)R_T\right\}]-M\\
		\overset{(a)}{\ge} & \sum_{t=1}^T (1-g(\rho_t))\frac{c\alpha_{\ell}}{(1-g(\alpha_\ell))}+ g(\rho_t) \frac{c\alpha_{\ell}}{1-g(\alpha_{\ell})}\\
		& - \E_r [\min_i\left\{c\alpha_iT+g(\alpha_i)R_T\right\}]-M\\
		\overset{(b)}{\ge} & \frac{cT\alpha_{\ell}}{1-g(\alpha_{\ell})}- \E_r[\min\{R_T,cT\alpha_{\ell}+g(\alpha_{\ell})R_T\}]-M\\
		\overset{(c)}{=} & \frac{1}{2}\E_r[|R_T-cT\alpha_{\ell}-g(\alpha_{\ell})R_T|]-M\\
		=& \frac{\kappa(1-g(\alpha_\ell))}{2}\E_r\left[\left|\frac{R_T}{\kappa} - \frac{Tc\alpha_{\ell}}{\kappa(1-g(\alpha_\ell))}\right|\right] - M\\
		\overset{(d)}{=}&\Omega(\sqrt{T}),
	\end{align*}		
	where $(a)$ follows from definition of $\ell$, $(b)$ follows because of $\min_{1\le k\le K} a_k\le \min\{a_i,a_j\}$ for $1\le i,j\le K$, $(c)$ follows from Lemma \ref{lem:min_E} in Section \ref{sec:appendix}, and $(d)$ follows from Lemma \ref{lem:binomial_t} in Section \ref{sec:appendix}. 
\end{proof}

\subsection{Proof of Theorem \ref{thm:reg_adv}(b)}
Note that the RR algorithm in \cite{narayana2021renting} can have  only one intermediate level of partial hosting i.e., $\rho_{t}^{\text{RR}}\in \{0,\alpha_{2},1\}$. So throughout the is subsection we consider $\rho_{t}^{\text{RR}}\in \{0,\alpha_{2},1\}$.  We use the following lemmas to prove Theorem \ref{thm:reg_adv}(b).
\begin{lemma}\label{lem:RR_min_evict}
	RR does not fetch any non zero fraction of service for at least $\left\lceil  \frac{M\alpha_{i'}}{\kappa-c\alpha_{i^\prime} - g(\alpha_{i^\prime})\kappa}\right \rceil$ slots after eviction of complete service, where \\$\alpha_{i'} = \argmin_{\alpha_i\ne 0}\frac{M\alpha_{i}}{\kappa-c\alpha_{i}-\kappa g(\alpha_i)}$.
\end{lemma}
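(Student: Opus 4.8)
The plan is to exploit the structural property that RR (equivalently $\alpha$-RR, since here $\rho_t^{\text{RR}}\in\{0,\alpha_2,1\}$) switches to a nonzero hosting level only when the \emph{hindsight} cost of the switch, including its fetch cost, is justified by the request savings accumulated over a recent window. First I would fix the timeline: let $t_e$ be the slot at which the complete service is evicted, so that $t_{\text{recent}}=t_e$ and $\rho_{t_e+1}=0$. A crucial bookkeeping point is that the decision taken in iteration $t$ only sets $\rho_{t+1}$; hence the service stays un-hosted in every slot up to and including the iteration at which a re-fetch is first triggered, and it is exactly this one-slot offset that produces the ceiling in the statement.

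Next I would quantify the per-slot gain of hosting level $\alpha_i$ rather than nothing. Since $g(0)=1$ and $r_l\le\kappa$, the cost saved in slot $l$ equals $r_l-(c\alpha_i+g(\alpha_i)r_l)=r_l(1-g(\alpha_i))-c\alpha_i\le \kappa-\kappa g(\alpha_i)-c\alpha_i$, and Assumptions \ref{assump:k<c}--\ref{assump:g(f)_convex} guarantee $\kappa-\kappa g(\alpha_i)-c\alpha_i\ge \alpha_i(\kappa-c)\ge 0$. Summing over any window $[\tau,t]$ with $t_e<\tau\le t$ considered by RR (all of whose slots carry $\rho=0$ so long as no re-fetch has yet occurred), the total hindsight saving from switching to $\alpha_i$ is at most $(t-t_e)\bigl(\kappa-\kappa g(\alpha_i)-c\alpha_i\bigr)$, because every such window has length at most $t-t_e$ and the per-slot saving is bounded above by the nonnegative constant just derived.

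I would then read off the RR / $\alpha$-RR decision rule: comparing $\mathrm{totalCost}$ of the all-zero schedule against that of the schedule switching to $\alpha_i$ at $\tau$, the switch is chosen only if the accumulated saving reaches the fetch cost $M\alpha_i$. Together with the window bound, a fetch of level $\alpha_i$ is impossible while $(t-t_e)\bigl(\kappa-\kappa g(\alpha_i)-c\alpha_i\bigr)<M\alpha_i$, i.e.\ while $t-t_e<\frac{M\alpha_i}{\kappa-c\alpha_i-\kappa g(\alpha_i)}$. Minimizing this threshold over all nonzero levels, the minimizer being $\alpha_{i'}$ by definition, shows that no nonzero fraction is fetched as long as $t-t_e<x:=\frac{M\alpha_{i'}}{\kappa-c\alpha_{i'}-\kappa g(\alpha_{i'})}$. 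Combining with the decision-to-effect offset, $\rho$ remains $0$ throughout at least the slots $t_e+1,\dots,t_e+\lceil x\rceil$, i.e.\ for at least $\lceil x\rceil$ slots.

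The main obstacle I anticipate is the careful handling of two points. First, I must argue that $(t-t_e)\bigl(\kappa-\kappa g(\alpha_i)-c\alpha_i\bigr)$ genuinely upper-bounds the saving over \emph{every} admissible window; this relies on the per-slot saving being at most a nonnegative constant, so that the bound grows with the window length and is maximized at the longest admissible window. Second, I must get the ceiling exactly right by tracking that the triggering comparison can succeed only at iteration $t_e+\lceil x\rceil$ yet alters $\rho$ one slot later; a naive count would lose a slot. The underlying economic inequality is routine once the $\mathrm{totalCost}$ expressions are expanded, so the window bookkeeping and the off-by-one timing are where the real care lies.
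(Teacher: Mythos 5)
Your proof is correct and follows essentially the same route as the paper's: bound the per-slot saving of hosting $\alpha_i$ over hosting nothing by $\kappa-c\alpha_i-\kappa g(\alpha_i)$ (nonnegative by Assumptions 1--2), note that a fetch requires the accumulated saving over some window to exceed $M\alpha_i$, and minimize the resulting time threshold over nonzero levels to obtain $\alpha_{i'}$. The only differences are presentational — the paper argues by contradiction and checks only the full window since eviction, while you verify the bound for every candidate switching time $\tau$ and track the decision-to-effect offset explicitly, which is slightly more careful but not a different argument.
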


\begin{proof}
	Without loss of generality we consider $t=0$ is when RR last evicted the service. If RR policy fetches $\alpha_{i}\ne0$ fraction of service in slot $t+1$, then $c\alpha_{i} t+g(\alpha_{i})R_{t}+M\alpha_{i} < R_{t}$. 
	
	We now prove the lemma by contradiction. Let us assume that RR fetches  $\alpha_{i}$ fraction of the service in time slot $t_f+1$ and $t_f< \left\lceil \frac{M\alpha_{i'}}{\kappa-c\alpha_{i'} - g(\alpha_{i'})\kappa}\right\rceil.$ Then  by using the definition of $\alpha_{i'}$, we get $t_f<\frac{M\alpha_{i}}{\kappa-c\alpha_{i} - g(\alpha_{i})\kappa}$ and $c\alpha_{i} t_f+g(\alpha_{i})\kappa{t_f}+M\alpha_{i}>\kappa t_f\ge R_{t_f}$ which is a contradiction.
\end{proof}

\begin{lemma}\label{lem:RR_min_fetch}
	Once RR fetches any non zero fraction of service, it hosts (some positive fraction of the service) for at least $\lceil\frac{M}{c}\rceil$ slots before it evicts the completely.
\end{lemma}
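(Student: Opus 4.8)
The plan is to argue by contradiction, paralleling the proof of Lemma \ref{lem:RR_min_evict}. Without loss of generality I would take $t=0$ to be the slot in which RR last fetched a nonzero fraction, so that RR begins hosting a positive fraction from slot $1$ onward and $R_t=\sum_{l=1}^{t}r_l$ counts the requests accumulated since that fetch. Suppose, for contradiction, that RR evicts the service completely, i.e.\ sets $\rho=0$ at the start of slot $t_e+1$, having hosted a positive fraction for fewer than $\lceil M/c\rceil$ slots; since the positive-hosting stretch occupies slots $1,\dots,t_e$, this means $t_e<\lceil M/c\rceil$.

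Next I would write down the condition under which RR decides to evict completely. Following the hindsight-optimality logic of RR, RR switches to $\rho=0$ in slot $t_e+1$ only if, over some window $[\tau,t_e]$ lying inside the current positive-hosting stretch (so $\tau\ge 1$), the cost of having hosted nothing, together with the penalty $M\alpha_{j}$ that would later be paid to re-fetch the currently hosted fraction $\alpha_j$, is no larger than the cost of continuing to host $\alpha_j$. Hosting nothing incurs service cost $\sum_{l=\tau}^{t_e}r_l$ (as $g(0)=1$), while hosting $\alpha_j$ incurs rent $c\,\alpha_j(t_e-\tau+1)$ and service cost $g(\alpha_j)\sum_{l=\tau}^{t_e}r_l$, so the eviction condition reads
\begin{align*}
\bigl(1-g(\alpha_j)\bigr)\sum_{l=\tau}^{t_e}r_l + M\alpha_j \;\le\; c\,\alpha_j\,(t_e-\tau+1).
\end{align*}
The crucial observation is that this threshold is essentially independent of the hosted fraction. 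Because $\sum_{l=\tau}^{t_e}r_l\ge 0$ and $g(\alpha_j)\le 1$, the first term on the left is nonnegative, so the inequality forces $M\alpha_j\le c\,\alpha_j(t_e-\tau+1)$; cancelling $\alpha_j>0$ gives $t_e-\tau+1\ge M/c$, and since $t_e-\tau+1$ is an integer, $t_e-\tau+1\ge \lceil M/c\rceil$. As $\tau\ge 1$ we get $t_e\ge t_e-\tau+1\ge \lceil M/c\rceil$, contradicting $t_e<\lceil M/c\rceil$. This establishes the claim.

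The main obstacle I anticipate is formalizing this complete-eviction condition rigorously for the $\alpha$-RR policy of Algorithm \ref{alg:alpha_RR}, whose \texttt{totalCost} comparison minimizes over switch times and over \emph{all} candidate levels rather than only the current one. I would need to verify that whichever positive level $\alpha_j$ is compared against $0$ in the $\argmin$, the re-fetch penalty scales as $M\alpha_j$, so that $\alpha_j$ cancels and the resulting bound $\lceil M/c\rceil$ is level-independent. A secondary bookkeeping point is that an intermediate change between two positive levels resets the hindsight window $t_{\text{recent}}$; however, the derived bound applies to the last window immediately preceding the complete eviction, and the total positive-hosting duration can only be longer than this window, so the conclusion still holds.
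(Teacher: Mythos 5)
Your proposal is correct and follows essentially the same route as the paper: both argue by contradiction from the hindsight eviction condition, observe that the nonnegative service-cost term can be dropped so that the re-fetch penalty $M\alpha_j$ must be covered by the rent savings $c\alpha_j\times(\text{window length})$, cancel $\alpha_j>0$, and conclude the hosting stretch must last at least $\lceil M/c\rceil$ slots. Your version is slightly more careful than the paper's (explicitly tracking the hindsight window $[\tau,t_e]$ and the possibility of intermediate positive-level changes, which the paper glosses over), but the core argument is identical.
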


\begin{proof}
	Without loss of generality we consider $t=0$ is when RR last fetched the service and say it hosted $\alpha_i\ne 0 $ fraction of the service. For the RR policy to evict the service in slot $t+1$, we need $ c\alpha_{i}t + g(\alpha_{i})R_{t}>  M\alpha_{i}+R_{t}$, or equivalently $R_{t}< \frac{c\alpha_i t-M\alpha_i}{1-g(\alpha_{i})}$. 
	
	We now prove the lemma by contradiction. Let us assume that RR evicts the service in time slot $t_e+1$ and $t_e<\lceil \frac{M}{c}\rceil$. Then from the assumption above,  we must have $R_{t_e}<\frac{c\alpha_i t_e-M\alpha_i}{1-g(\alpha_i)}<0$ which is a contradiction. 
\end{proof}
	
\begin{proof}[Proof of Theorem \ref{thm:reg_adv}(b)]
We split the entire time duration $T$ into frames of size $t_f+t_e$ and consider the  arrival sequence in each frame as  $r_t=\kappa$ for $1 \le t \le t_f$ and $r_t=0$ for $t_f+1\le t \le t_f+t_e$, where $\alpha_{i'} = \argmin_{\alpha_i\ne 0}\frac{M\alpha_{i}}{\kappa-c\alpha_{i}-\kappa g(\alpha_i)}$, $t_f=\left\lceil \frac{M\alpha_{i'}}{\kappa-c\alpha_{i'} - g(\alpha_{i'})\kappa} \right\rceil$,  and $t_e=\lceil\frac{M}{c}\rceil$. 

From Lemma \ref{lem:RR_min_evict}  we know that RR does not fetch any fraction of service till time $t_f$. At $t=t_f+1$ RR fetches some positive fraction of the service because $c\alpha_{i'}t_f + g(\alpha_{i'})R_{t_f} + M\alpha_{i'} < R_{t_f}$ holds by definition of $t_f$.
 
From Lemma \ref{lem:RR_min_fetch} we know that RR does not evict the service for $t\in[t_f+1, t_f+t_e]$. At $t=t_f+t_e+1$ RR evicts the  service because for $\alpha_i\ne 0$,  $c\alpha_{i}t_e+g(\alpha_{i})\sum_{t=t_f+1}^{t_f+t_e}r_t<M\alpha_{i}+\sum_{t=t_f+1}^{t_f+t_e}r_t$ holds by definition of $t_e$.

So amongst the first $t_f + t_e$ time-slots, the RR policy does not host for the first $t_f$ slots and then hosts some positive fraction of service for the following $t_e$ slots. Let $i^*=\argmin_i c\alpha_iT+g(\alpha_{i}R_T+M\alpha_i)$ be the fraction of service to be hosted  by optimal static policy. If the optimal static policy is to host $\alpha_{i^*}\ne 0$ fraction of service then we have a regret of at least $(\kappa -c\alpha_{i^*} -g(\alpha_{i^*})\kappa) t_f-M\alpha_{i^*}$ for initial frame and $(\kappa -c\alpha_{i^*} -g(\alpha_{i^*})\kappa) t_f=(\kappa -c\alpha_{i^*} -g(\alpha_{i^*})\kappa) \lceil \frac{M\alpha_{i'}}{\kappa-c\alpha_{i'} - g(\alpha_{i'})\kappa} \rceil>0$ form second frame onwards because of forwarding, which is a constant and does not depend on $T$. If the optimal static policy is not to host the service, then we have a regret of $c\alpha_{i}t_e>0$ till $t_f+t_e$ for some $\alpha_{i}\ne 0$, which is a constant and does not depend on $T$. So in either case, a regret larger than some positive constant(say $d$) is occurred in a frame of size $t_f+t_e$. Since we split the entire time duration $T$ into frames of size $t_f+t_e=\lceil \frac{M\alpha_{i'}}{\kappa-c\alpha_{i^\prime}-g(\alpha_{i^\prime})\kappa}\rceil + \lceil\frac{M}{c} \rceil$ and the request sequence is repeated in each frame, we get linear regret.

\begin{align*}
	\mathcal{R}^{\text{RR}}_A(T)&\ge \sum_{f=1}^{\lfloor T/(t_f+t_e) \rfloor} d	- M\alpha_{i^*}\\
	&\ge \lfloor T/(t_f+t_e) \rfloor d- M\alpha_{i^*}\\
	&\ge d\left(\frac{T}{\frac{M\alpha_{i'}}{\kappa-c\alpha_{i^\prime}-g(\alpha_{i^\prime})\kappa}+ \frac{M}{c} +2}-1\right)- M\alpha_{i^*} \\
	&=\Omega(T).
\end{align*}	
\end{proof}

\subsection{Proof of Theorem \ref{thm:reg_adv}(c)}
  The optimal static policy only fetches the optimal fraction of service to be hosted once. Therefore,
\begin{align}
\mathcal{R}^{\text{FTPL}}_{A}(T,r)=&\E[\mathcal{C}^\text{FTPL}(T,r)] \nonumber\\
&- \min_i\{c\alpha_iT+g(\alpha_i)R_T+M\alpha_{i}\} \nonumber\\
\le& \E[\mathcal{C}^\text{FTPL}(T,r)] - \min_i\{c\alpha_iT+g(\alpha_i)R_T\}.
 \label{eq:FTPL_reg}
\end{align}
Since FTPL policy does not consider fetch cost $M$ while making decisions we can decouple the fetch cost and the non fetch cost incurred by the FTPL policy. Therefore we can also decouple the expected regret incurred by FTPL into expected regret without fetch cost ($M=0$) and the fetch cost incurred by FTPL to bound \eqref{eq:FTPL_reg}.
We first bound the expected regret of the FTPL policy with fetch cost $M=0$ and then add it's expected fetch cost to get the final regret bound.
\begin{lemma}
	The regret for 	FTPL policy  with non decreasing learning rate $\{\eta_t\}_{t=1}^T$ and $M=0$ is given by
	\begin{align*}
	\mathcal{R}^{\text{FTPL}}_{A}(T)&\le \sqrt{2\log K}\left( \eta_T  + \kappa^2 \sum_{t=1}^{T}\frac{1}{\eta_t}\right) .
	\end{align*}
	\label{lem:FTPL_reg_adv}
\end{lemma}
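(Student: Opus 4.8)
The plan is to first reduce to a standard experts problem. Setting $M=0$ removes the fetch term, so the per-slot cost of the FTPL play is exactly $\langle\bm\rho_t,\bm\theta_t\rangle$, the optimal static cost is $\min_i\{c\alpha_iT+g(\alpha_i)R_T\}=\min_{\bm\rho\in\mathcal X}\langle\bm\rho,\bm\Theta_{T+1}\rangle$, and Assumption \ref{assump:cost<k} forces each coordinate $\theta_{t,i}=c\alpha_i+g(\alpha_i)r_t$ to lie in $[0,\kappa]$. Thus the quantity to bound is the textbook FTPL regret $\E_{\bm\gamma}[\sum_{t=1}^T\langle\bm\rho_t,\bm\theta_t\rangle]-\min_{\bm\rho\in\mathcal X}\langle\bm\rho,\bm\Theta_{T+1}\rangle$ for linear losses bounded coordinatewise by $\kappa$ over $K$ experts. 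Since the bound will only use $r_t\le\kappa$, it holds for every realization of $r$ and hence for the adversarial supremum defining $\mathcal{R}^{\text{FTPL}}_A(T)$.

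My approach is the Gaussian-smoothed potential / Be-the-Leader method. I would introduce the deterministic potential $\psi_t=\E_{\bm\gamma}[\min_{\bm\rho\in\mathcal X}\langle\bm\rho,\bm\Theta_{t+1}+\eta_t\bm\gamma\rangle]$ and telescope $\sum_t(\psi_t-\psi_{t-1})$. Writing $\bm\Theta_{t+1}=\bm\Theta_t+\bm\theta_t$, each increment splits into a term that incorporates the new loss $\bm\theta_t$ and a term accounting for the change of learning rate from $\eta_{t-1}$ to $\eta_t$. The envelope theorem identifies the expected FTPL play with the gradient of the smoothed potential, so the first part matches the expected per-slot cost up to a second-order (curvature) correction, while the learning-rate part is controlled precisely because $\{\eta_t\}$ is non-decreasing. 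Collecting the boundary and learning-rate contributions gives the perturbation penalty: using $\bm\Theta_1=\bm 0$ together with the sub-Gaussian maximal inequality $\E[\max_{i\le K}\gamma_i]\le\sqrt{2\log K}$ for i.i.d. standard normals, these terms contribute at most $\eta_T\sqrt{2\log K}$.

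The technical heart is the stability (curvature) term, which measures how much the perturbed leader moves when the single bounded loss $\bm\theta_t$ is added. I would bound it per round by arguing that the argmin flips only when the perturbation $\eta_t\bm\gamma$ places two experts within a window of width $\OO(\kappa)$, an event of probability $\OO(\kappa/\eta_t)$ by boundedness of the Gaussian density, and that each such flip costs at most $\kappa$; carried through the smoothed potential this yields a per-round bound of order $\sqrt{2\log K}\,\kappa^2/\eta_t$. Summing over $t$ gives $\sqrt{2\log K}\,\kappa^2\sum_{t=1}^T 1/\eta_t$, and adding the perturbation penalty produces the claimed $\sqrt{2\log K}\,(\eta_T+\kappa^2\sum_t 1/\eta_t)$.

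I expect this stability step to be the main obstacle. It requires the anti-concentration/smoothness of the Gaussian perturbation, equivalently an operator-norm bound on the Hessian of the Gaussian-smoothed $\min$ of $K$ linear functions (which scales like $\sqrt{2\log K}/\eta_t$), and careful handling of the time-varying learning rate, where monotonicity of $\eta_t$ is essential to keep the learning-rate-change contributions either nonpositive or absorbable into the $\eta_T$ term. The perturbation penalty, by contrast, is routine once the maximal inequality is invoked.
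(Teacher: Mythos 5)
Your proposal follows essentially the same route as the paper: the Gaussian-smoothed potential $\bm{\Phi}_t(\bm{\theta})=\E_{\bm{\gamma}}[\min_{\bm{\rho}\in\mathcal{X}}\langle\bm{\rho},\bm{\theta}+\eta_t\bm{\gamma}\rangle]$, the gradient/envelope identity $\langle\nabla\bm{\Phi}_t(\bm{\Theta}_t),\bm{\theta}_t\rangle=\E_{\bm{\gamma}}[\langle\bm{\rho}_t,\bm{\theta}_t\rangle]$ with a Taylor remainder, Jensen to compare $\bm{\Phi}_T(\bm{\Theta}_{T+1})$ with the static optimum, the Gaussian maximal inequality for the $\eta_T\sqrt{2\log K}$ perturbation penalty, and a Hessian operator-norm bound (the paper cites Lemma 7 of \cite{abernethy2014online} and Lemma 2 of \cite{cohen2015following}, rescaled by $\kappa$) for the $\kappa^2\sqrt{2\log K}/\eta_t$ per-round stability term. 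The decomposition, the role of monotonicity of $\eta_t$, and the way each of the three terms is controlled all match the paper's argument.
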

\begin{proof}
	The proof of this theorem  follows along the same lines as Theorem 1 of \cite{cohen2015following} and, Proposition 4.1 of \cite{mukhopadhyay2021online}. Recall that $\rho_{t}^{\mathcal{P}}$ represents the fraction of service hosted in slot $t$ by policy $\mathcal{P}$ and $\bm{\rho}_t^\mathcal{P}$ is a one hot vector where position of one represents the level of service hosted at the edge.
	In slot $t$ FTPL hosts fraction of service $\rho_{t}^{\text{FTPL}}=\langle\bm{\rho}_{t}^{\text{FTPL}},\bm{s} \rangle$, where $\bm{\rho}_t^{\text{FTPL}}= \argmin_{\bm{\rho}\in \mathcal{X}} \langle \bm{\rho},\bm{\Theta}_t+\eta_t \bm{\gamma}\rangle$.
	For ease of notation we suppress the policy in super script and denote $\bm{\rho}_t^{\mathcal{P}}$ as $\bm{\rho}_t$ and $\rho_{t}^{\mathcal{P}}$ as $\rho_{t}$ in further discussion. Define a time varying potential function 
	\begin{align}
	\bm{\Phi}_t(\bm{\theta})= \E_{\bm{\gamma}} \left[ \min_{\bm{\rho}\in \mathcal{X}} \langle \bm{\rho},\bm{\theta} + \eta_t \bm{\gamma} \rangle \right], 
	\end{align}
	 and observe that the gradient of $\bm{\Phi}_t$ at $\bm{\Theta}_t$ is $\E_{\bm{\gamma}}[\bm{\rho}_t]$. So $\langle \nabla\bm{\Phi}_t(\bm{\Theta}_t), \bm{\theta}_t \rangle =\E_{\bm{\gamma}}[\langle \bm{\rho}_t, \bm{\theta}_t \rangle]$. Consequently
	\begin{align*}
	\E_{\bm{\gamma}}[\langle \bm{\rho}_t, \bm{\theta}_t \rangle] &= \langle \nabla\bm{\Phi}_t(\bm{\Theta}_t),\bm{\Theta}_{t+1}- \bm{\Theta}_t \rangle, \\
	&= \bm{\Phi}_t(\bm{\Theta}_{t+1})- \bm{\Phi}_t(\bm{\Theta}_t)-\frac{1}{2} \langle \bm{\theta}_t, \nabla^2\Phi_t(\tilde{\bm{\theta}}_t) \bm{\theta}_t) \rangle,
	\end{align*}
	where $\tilde{\bm{\theta}}_t$ is the line segment joining $\bm{\Theta_{t+1}}$ and $\bm{\Theta_t}$ which follows from Taylor's expansion of $\bm{\Phi}_t(\bm{\Theta}_{t+1})$. 
	\begin{align*}
	&\sum_{t=1}^{T}\E_{\bm{\gamma}}\left[ \langle \bm{\rho}_t, \bm{\theta}_t \rangle \right]\\
	&=\sum_{t=1}^T \bm{\Phi}_t(\bm{\Theta}_{t+1})- \bm{\Phi}_t(\bm{\Theta}_t)-\frac{1}{2} \langle \bm{\theta}_t, \nabla^2\Phi_t(\tilde{\bm{\theta}}_t) \bm{\theta}_t) \rangle,\\
	&=\bm{\Phi}_{T}(\bm{\Theta}_{T+1}) - \bm{\Phi}_{1}(\bm{\Theta}_{1}) +\sum_{t=2}^T \bm{\Phi}_{t-1}(\bm{\Theta}_{t})- \bm{\Phi}_t(\bm{\Theta}_t)\\
	&-\sum_{t=1}^T \frac{1}{2} \langle \bm{\theta}_t, \nabla^2\Phi_t(\tilde{\bm{\theta}}_t) \bm{\theta}_t \rangle. \numberthis \label{eq:reg_adv_thetas}
	\end{align*}
	By Jensen’s inequality 
	\begin{align*}
	\bm{\Phi}_{T}(\bm{\Theta}_{T+1}) & = \E \left[ \min_{\bm{\rho} \in \mathcal{X}}\langle \bm{\rho}, \bm{\Theta}_{T+1}+\eta_t \bm{\gamma} \rangle \right]\\
	&\le \min_{\bm{\rho} \in \mathcal{X}} \E \left[ \langle \bm{\rho}, \bm{\Theta}_{T+1}+\eta_t \bm{\gamma} \rangle \right]\\
	&\overset{(a)}{=} \min_{\bm{\rho} \in \mathcal{X}}\langle \bm{\rho}, \bm{\Theta}_{T+1} \rangle,
	\end{align*}
	where $(a)$ follows because $\gamma_i$'s are standard Gaussian random variables. Recall that $\min_{\bm{\rho} \in \mathcal{X}}\langle \bm{\rho}, \bm{\Theta}_{T+1} \rangle$ is the loss incurred by the offline static policy. Therefore by rearranging \eqref{eq:reg_adv_thetas} and using Jensen’s inequality we get
	\begin{align*}
	\mathcal{R}^{\text{FTPL}}_A(T) \le& -\bm{\Phi}_{1}(\bm{\Theta_{1}}) + \sum_{t=2}^T \bm{\Phi}_{t-1}(\bm{\Theta}_{t})- \bm{\Phi}_t(\bm{\Theta}_t)\\
	&-\sum_{t=1}^T \frac{1}{2} \langle \bm{\theta}_t, \nabla^2\Phi_t(\tilde{\bm{\theta}}_t) \bm{\theta}_t \rangle. \numberthis \label{eq:Reg_ftpl_adv}
	\end{align*} 
	We bound each term in the RHS separately to get the final result. 
	
	The first term is $-\bm{\Phi}_{1}(\bm{\Theta}_{1})=-\bm{\Phi}_{1}(0)=-\eta_1\E_{\bm{\gamma}}[\min_{\bm{\rho}} \langle \bm{\rho}, \bm{\gamma} \rangle]$\\$=\eta_1\E_{\bm{\gamma}}[\max_{\bm{\rho}\in \mathcal{X}} \langle \bm{\rho}, \bm{\gamma} \rangle] $ since Gaussian random variables are symmetric. By using the result of Lemma 9 in \cite{cohen2015following}, we get 
	\begin{equation}
	-\bm{\Phi}_{1}(0)\le \eta_1 \sqrt{2 \log K}. \label{eq:FTPL_reg_term1}
	\end{equation}
	
	The second term can be analyzed by considering 
	\begin{align*}
	&\bm{\Phi}_{t-1}(\bm{\Theta}_{t})- \bm{\Phi}_t(\bm{\Theta}_t) \nonumber\\
	&=\E_{\bm{\gamma}} [ \min_{\bm{\rho}\in \mathcal{X}} \langle \bm{\rho},\bm{\theta}_{t} + \eta_{t-1} \bm{\gamma} \rangle- \min_{\bm{\rho}\in \mathcal{X}} \langle \bm{\rho},\bm{\theta_t} + \eta_t \bm{\gamma} \rangle ]\\ 
	&\overset{(a)}{\le} \E_{\bm{\gamma}} [ \max_{\bm{\rho}\in \mathcal{X}} \langle \bm{\rho}, (\eta_{t-1}-\eta_{t}) \bm{\gamma} \rangle ] \\
	&\le |\eta_t-\eta_{t-1}|\sqrt{2\log K}, \numberthis \label{eq:FTPL_reg_term2}
	\end{align*} 
	 where $(a)$ follows because $\max_{\bm{y}\in \mathcal{Y}}(f(\bm{y})-g(\bm{y}) )\ge \max_{\bm{y}\in \mathcal{Y}}f(\bm{y})-\max_{\bm{y}\in \mathcal{Y}}g(\bm{y})$ for any arbitrary functions $f$, $g$. We  obtain \eqref{eq:FTPL_reg_term2} by using Lemma 9 in \cite{cohen2015following}.
	
	Now we bound the third term in \eqref{eq:Reg_ftpl_adv}. Let $H=\nabla^2\Phi_t(\tilde{\bm{\theta}}_t)$. By using Lemma 7 of \cite{abernethy2014online} we have,
	\begin{align*}
		H_{i,j}=\frac{1}{\eta_t} \E[\hat{x}(\tilde{\bm{\theta}_t}+\eta_t \bm{\gamma} )_i \bm{\gamma}_j].		
	\end{align*}
	If we replace $\eta_{t}$ with $\eta_{t}/\kappa$ and $\theta_{t}$ with $\theta_{t}/\kappa$ and apply Lemma 2 of \cite{cohen2015following} we get 
	\begin{align*}
		-\bigg\langle \frac{\bm{\theta}_t}{\kappa}, \kappa H \frac{\bm{\theta}_t}{\kappa} \bigg\rangle \le \frac{2\kappa}{\eta_{t}} \sqrt{2\log K} \\
		\implies -\langle \bm{\theta}_t, \nabla^2\Phi_t(\tilde{\bm{\theta}}_t) \bm{\theta}_t \rangle \le \frac{2\kappa^2}{\eta_{t}} \sqrt{2\log K}. \numberthis  \label{eq:FTPL_reg_term3}
	\end{align*}	
	By using \eqref{eq:FTPL_reg_term1}, \eqref{eq:FTPL_reg_term2} and \eqref{eq:FTPL_reg_term3}, we bound \eqref{eq:Reg_ftpl_adv}, 
	\begin{align*}
		\mathcal{R}^{\text{FTPL}}_A(T)\le & \eta_1 \sqrt{2\log K}+\sum_{t=2}^{T} |\eta_t-\eta_{t-1}|\sqrt{2\log K}\\
		&+\sum_{t=1}^{T} \frac{2\kappa^2}{\eta_t} \sqrt{2\log K}\\
		\le& \eta_T \sqrt{2\log K} + \sum_{t=1}^{T} \frac{2\kappa^2}{\eta_t} \sqrt{2\log K}\\
		\le&\eta_T \sqrt{2\log K} + 2\kappa^2 \sqrt{2\log K} \sum_{t=1}^{T}\frac{1}{\eta_t}.
	\end{align*}
\end{proof}

Now we bound the expected fetch cost incurred by FTPL policy.
\begin{lemma}\label{lem:FTPL_swcost_adv}
	The expected fetch cost under FTPL policy upto time $T$ denoted by $\E[\mathcal{C}^{\text{FTPL}}_{F}(T)]$ with learning rate $\eta_t=\alpha \sqrt{t}$ is bounded as follows	
	$$\E[\mathcal{C}^{\text{FTPL}}_{F}(T)] \le \frac{MK^2 (c+2\kappa)}{2\alpha\sqrt{\pi}}{\sqrt{T+1}}.$$	
\end{lemma}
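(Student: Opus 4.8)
The plan is to control the fetch cost by the aggregate size of hosting-level switches and to bound each switch by a Gaussian anti-concentration estimate. Since the arrivals are fixed and the only randomness is the single perturbation $\bm{\gamma}$, I would first write
\[
\E[\mathcal{C}^{\text{FTPL}}_{F}(T)] \;=\; M\sum_{t=1}^{T}\E_{\bm{\gamma}}\big[(\rho_t-\rho_{t-1})^+\big] \;=\; M\sum_{t=1}^{T}\sum_{i<j}(\alpha_j-\alpha_i)\,\Prob\big[\bm{\rho}_{t-1}=\bm{e}_i,\ \bm{\rho}_t=\bm{e}_j\big],
\]
where $\bm{e}_i\in\mathcal{X}$ is the one-hot vector with its $1$ in position $i$, and the inner sum runs only over pairs with $\alpha_i<\alpha_j$, since only an increase in the hosted fraction incurs a fetch. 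Bounding $\alpha_j-\alpha_i\le 1$, it then suffices to control each pairwise switching probability and sum over $t$ and over the $\le K^2$ pairs. Note that, unlike the earlier regret bound, this argument does not use the potential function; it uses only the explicit argmin rule of FTPL.

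For a fixed pair $(i,j)$ I would keep only the two pairwise comparisons implied by the argmin conditions: selecting level $i$ at slot $t-1$ forces $\Theta_{t-1,i}+\eta_{t-1}\gamma_i\le\Theta_{t-1,j}+\eta_{t-1}\gamma_j$, and selecting level $j$ at slot $t$ forces $\Theta_{t,j}+\eta_{t}\gamma_j\le\Theta_{t,i}+\eta_{t}\gamma_i$. Setting $D:=\gamma_i-\gamma_j$, these read
\[
\frac{\Theta_{t,j}-\Theta_{t,i}}{\eta_t}\;\le\; D\;\le\;\frac{\Theta_{t-1,j}-\Theta_{t-1,i}}{\eta_{t-1}}.
\]
Since $\gamma_i,\gamma_j$ are independent standard Gaussians, $D\sim\mathcal{N}(0,2)$, whose density is bounded by $1/(2\sqrt{\pi})$; hence the switching probability is at most $1/(2\sqrt{\pi})$ times the length of this interval. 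This is exactly the origin of the $1/(2\sqrt{\pi})$ factor in the claim, and using only the necessary pairwise comparisons is legitimate since we seek an upper bound.

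Summing over $t$, the contribution of the pair $(i,j)$ is at most $1/(2\sqrt{\pi})$ times $\sum_t(w_{t-1}-w_t)^+\le\sum_t|w_t-w_{t-1}|$, the total variation of $w_t:=(\Theta_{t,j}-\Theta_{t,i})/\eta_t$; adding the per-slot probabilities in this way is valid even though the intervals overlap across $t$, because each is bounded by its length times the maximal density. Using $\bm{\Theta}_t=\bm{\Theta}_{t-1}+\bm{\theta}_{t-1}$ and the triangle inequality,
\[
|w_t-w_{t-1}|\;\le\;\frac{|\theta_{t-1,j}-\theta_{t-1,i}|}{\eta_t}\;+\;\big|\Theta_{t-1,j}-\Theta_{t-1,i}\big|\Big(\frac{1}{\eta_{t-1}}-\frac{1}{\eta_t}\Big),
\]
and since each coordinate of $\bm{\theta}_s=c\bm{s}+r_s\bm{f}$ lies in $[0,c+\kappa]$, every single-slot difference is $\OO(c+\kappa)$ while $|\Theta_{t-1,j}-\Theta_{t-1,i}|\le (t-2)(c+\kappa)$.

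The hard part is the second term: $|\Theta_{t-1,j}-\Theta_{t-1,i}|$ may grow linearly in $t$, so a naive estimate would produce linear rather than $\sqrt{T}$ growth. The key is that for $\eta_t=\alpha\sqrt{t}$ one has $\tfrac{1}{\eta_{t-1}}-\tfrac{1}{\eta_t}=\tfrac{1}{\alpha}\big(\tfrac{1}{\sqrt{t-1}}-\tfrac{1}{\sqrt{t}}\big)=\OO(t^{-3/2})$, so the linear factor is damped to an $\OO(t^{-1/2})$ summand; the first term is likewise $\OO(t^{-1/2})$ after division by $\eta_t=\alpha\sqrt{t}$. Hence the per-pair total variation is $\OO\big((c+2\kappa)\sqrt{T+1}/\alpha\big)$, where $\sum_{t=1}^{T}t^{-1/2}\le 2\sqrt{T+1}$ pins down the constant and accounts for the $\sqrt{T+1}$ in the statement (boundary effects at $t=1$, where $\rho_0=0$ and $w_1=0$, contribute only an $\OO(1)$ term absorbed here). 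Multiplying by the $\le K^2$ pairs and by $M/(2\sqrt{\pi})$ then yields the claimed bound $\dfrac{MK^2(c+2\kappa)}{2\alpha\sqrt{\pi}}\sqrt{T+1}$.
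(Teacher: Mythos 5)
Your proposal is correct and follows essentially the same route as the paper: decompose the fetch cost into pairwise upward-switch events, observe that the two argmin conditions confine $\gamma_i-\gamma_j$ to the interval $\left[\tfrac{\Theta_{t,j}-\Theta_{t,i}}{\eta_t},\tfrac{\Theta_{t-1,j}-\Theta_{t-1,i}}{\eta_{t-1}}\right]$, apply the Gaussian density bound $\tfrac{1}{2\sqrt{\pi}}$ to its length, and use $\tfrac{1}{\eta_{t-1}}-\tfrac{1}{\eta_t}=\OO(t^{-3/2})$ to damp the linearly growing cumulative term to an $\OO(t^{-1/2})$ summand. The only (harmless) deviation is in constant-tracking: the paper exploits the signs of $\alpha_j-\alpha_i$ and $g(\alpha_i)-g(\alpha_j)$ to bound the two pieces by $c$ and $\kappa$ respectively, whereas your cruder $c+\kappa$ bounds still land under the stated constant only because $c\le\kappa$ (Assumption \ref{assump:k<c}) and the pair count is $\binom{K}{2}\le K^2/2$ rather than $K^2$.
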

\begin{proof} The proof of this theorem follows similar a line of arguments as in  \cite{mukhopadhyay2021online} for the case with $N=2$ and cache size 1 case.
	The fetch cost is incurred only when the service is fetched. The overall cost of fetching is $\sum_{t=1}^{T}M (\rho_{t+1}-\rho_{t})^+$. So, the expected fetch cost is given by  
	\begin{align*}
	\E[\mathcal{C}^{\text{FTPL}}_{F}(T)]&= \E\left[ \sum_{t=1}^{T}M (\rho_{t+1}-\rho_{t})\mathds{1}_{\{\rho_{t+1}>\rho_{t}\}} \right]\\
	&\le M\sum_{t=1}^T \Prob(\rho_{t+1}>\rho_{t})\\
	&\le  M\sum_{t=1}^T \sum_{j>i} \Prob(\rho_{t+1}=\alpha_{j},\rho_{t}=\alpha_{i}).\\
	\end{align*}
	
	Let $\mathcal{E}_{i,j}^{(t)}$ be the event that $\rho_{t}=\alpha_{i}$, $\rho_{t+1}=\alpha_{j}$. $\mathcal{E}_{i,j}^{(t)}$ occurs only if $\Theta_{t,i}+\eta_t \gamma_i< \Theta_{t,j}+\eta_t \gamma_j$, $\Theta_{t+1,i}+\eta_{t+1} \gamma_i> \Theta_{t+1,j}+\eta_{t+1} \gamma_j$. Let $\delta_\alpha=\alpha_j-\alpha_i$, $\delta_g=g(\alpha_i)-g(\alpha_j)$, note that $0<\delta_\alpha\le1$, $0<\delta_g\le 1$.
	\begin{align*}
	&\Prob(\mathcal{E}_{i,j}^{(t)})\\
	&\le\Prob\left(\Theta_{t,i}+\eta_{t} \gamma_i< \Theta_{t,j}+\eta_{t} \gamma_j,\right.\\
	&\quad\qquad \left.\Theta_{t+1,i}+\eta_{t+1} \gamma_i> \Theta_{t+1,j}+\eta_{t+1} \gamma_j\right)\\
	&= \Prob\left( \frac{\Theta_{t+1,j}-\Theta_{t+1,i}}{\sqrt{2}\eta_{t+1}} < \frac{(\gamma_i-\gamma_j)}{\sqrt{2}} <\frac{\Theta_{t,j}-\Theta_{t,i}}{\sqrt{2}\eta_{t}} \right)\\ 
	&\overset{(a)}{\le} \frac{\Theta_{t,j}-\Theta_{t,i}}{2\sqrt{\pi}\eta_{t}}- \frac{\Theta_{t+1,j}-\Theta_{t+1,i}}{2\sqrt{\pi}\eta_{t+1}}\\ 
	&=\frac{1}{2\sqrt{\pi}}\left( (c(t-1)\delta_\alpha- \delta_gR_{t-1})\left(\frac{1}{\eta_{t}} -\frac{1}{\eta_{t+1}}\right) \right)\\
	&\quad- \frac{1}{2\sqrt{\pi}} \left(\frac{c\delta_\alpha-\delta_g r_{t}}{\eta_{t+1}}\right)\\ 
	&\le \frac{1}{2\sqrt{\pi}} \left(c t\left(\frac{1}{\eta_{t}}-\frac{1}{\eta_{t+1}}\right)+\frac{\delta_g r_{t}-c\delta_\alpha}{\eta_{t+1}}\right)\\
	&\overset{(b)}{\le} \frac{1}{2\sqrt{\pi}} \left(c t \left(\frac{1}{\eta_{t}} - \frac{1}{\eta_{t+1}}\right) + \frac{\kappa}{\eta_{t+1}}\right) \numberthis \label{eq:FTPL_sw_eta}\\
	&=\frac{1}{2 \alpha\sqrt{\pi}} \left(c t\left( \frac{1}{\sqrt{t}}-\frac{1}{\sqrt{t+1}}\right)+\frac{\kappa}{\sqrt{t+1}}\right)\\
	&=\frac{1}{2 \alpha\sqrt{\pi}} \left( c\sqrt{\frac{t}{t+1}}  \frac{1}{\sqrt{t}+\sqrt{t+1}} +\frac{\kappa}{\sqrt{t+1}}\right)\\
	&\overset{(c)}{\le}\frac{(c+2\kappa)}{4 \alpha\sqrt{\pi(t+1)}}.
	\end{align*}
	 Here $(a)$ follows from the fact that $\Prob(a<Z<b)\le \frac{b-a}{\sqrt{2\pi}}$,  where $Z\sim\mathcal{N}(0,1)$, $(b)$ follows because $\delta_g\le 1$ and $r_t\le \kappa$ and, $(c)$ follows because $\frac{\sqrt{t}}{\sqrt{t}+\sqrt{t+1}}\le \frac{1}{2}$. 
	Therefore, 
	\begin{align*}
	\E[\mathcal{C}_f^{\text{FTPL}}(T)] &\le \frac{M(c+2\kappa)}{4\alpha\sqrt{\pi}} \sum_{t=1}^T \sum_{j>i} \frac{1}{\sqrt{t+1}}\\
	& \le \frac{M K^2(c+2\kappa)}{2\alpha\sqrt{\pi}}{\sqrt{T+1}}.
	\end{align*}	
\end{proof}
\begin{proof}[Proof of Theorem  \ref{thm:reg_adv}(c)]
	By using Lemma \ref{lem:FTPL_reg_adv} and \ref{lem:FTPL_swcost_adv} we get,
	\begin{align*}
	\mathcal{R}^{\text{FTPL}}_A(T)\le&\sqrt{2\log K}\left( \alpha\sqrt{T}  + 2\kappa^2 \sum_{t=1}^{T}\frac{1}{\alpha\sqrt{t}}\right) \\
	&+\frac{K^2M(c+2\kappa)}{2\alpha\sqrt{\pi}} \sqrt{T+1}\\
	\le& \sqrt{2T\log K}\left( \alpha  + \frac{4\kappa^2}{\alpha}\right)\\ &+\frac{K^2M(c+2\kappa)}{2\alpha\sqrt{\pi}} \sqrt{T+1}.
	\end{align*}
\end{proof}

\subsection{Proof of Theorem \ref{thm:reg_adv}(d)}
\begin{proof}[Proof of Theorem \ref{thm:reg_adv}(d)]
	We characterize the fetch cost under FTPL to analyze the impact of $M$ on the total cost incurred under FTPL. Fetch cost is incurred if $\rho_{t-1}=0$, $\rho_{t}=1$. Consider a request sequence $r_1=1$ and $r_t=0$ for all $t\ge2$. The optimal static policy is to not host any fraction of service i.e., to forward all the requests to back-end server. 
	
	
	The probability of hosting $\alpha_{i}$, $1\le i\le K$ fraction of service in first slot is  $1/K$. Hosting any non zero fraction of service will incur at least $M\alpha_{2}$ fetch cost. Therefore the expected fetch cost of FTPL policy will be at least $M\alpha_{2}/K$  which is a lower bound on the expected cost of FTPL policy for this request sequence. For FTPL we can decouple the fetch cost and the non fetch cost and  FTPL does not consider $M$ while making decisions. Therefore  $M\alpha_{2}/K$ is also a lower bound for regret of FTPL. This completes the proof.
	
\end{proof}

\subsection{Proof of Theorem \ref{thm:reg_adv}(e)}

\begin{proof}[Proof of Theorem \ref{thm:reg_adv}(e)]
	Recall that $T_s=\min\{t:t<\frac{(\max_{i\ne j}(\Theta_{t+1,i}-\Theta_{t+1,j}))^2}{\kappa^2\beta (\log M)^{1+\delta}}\}$. 	For arrival sequences with requests arriving in time-slots 1 to $T$, we have two possible cases, namely, $T_s\ge T$ and $T_s<T$. We consider each case to bound the regret of W-FTPL policy.
	
	\textit{Case I ($T_s\ge T$)}: Let $r^{(1)}$ be a request sequence chosen by adversary such that $T_s\ge T$, then 
	\begin{align*}
		|\max_{i\ne j}(\Theta_{T+1,j}-\Theta_{T+1,i})|&<\kappa\sqrt{\beta T (\log M)^{1+\delta}}\\
		\implies |\max_{i}(\Theta_{T+1,1}-\Theta_{T+1,i})|&<\kappa\sqrt{\beta T (\log M)^{1+\delta}}\\
		\overset{(a)}{\implies}\mathcal{R}^{\text{W-FTPL}}_A(T) &<\kappa\sqrt{\beta T (\log M)^{1+\delta}},
	\end{align*}
	$(a)$ follows because $\rho_{t}=0$ for  $t<T_s$ for W-FTPL.
	
	\textit{Case II ($T_s<T$)}: Let $r^{(2)}$ be a request sequence chosen by adversary such that $T_s<T$, then by using case I we can bound 
	\begin{align*}
	 |\max_{i\ne j}\Theta_{T_s,i}-\Theta_{T_s,j}|&<\kappa\sqrt{\beta (T_s-1) (\log M)^{1+\delta}}\\
	\implies \mathcal{R}^{\text{W-FTPL}}_{A}(T_s-1)&<\kappa\sqrt{\beta T (\log M)^{1+\delta}}.
	\end{align*}
		
	Thus combining both cases we get  upper bound on regret in wait phase as
	\begin{align}
	\mathcal{R}^{\text{W-FTPL}}_{A}(T_s-1)&<\kappa\sqrt{\beta T (\log M)^{1+\delta}}. \label{eq:WFTPL_wait}
	\end{align}
	Note that W-FTPL policy follows FTPL policy after its waiting time i.e., from time $T_s$. Therefore W-FTPL and FTPL take same decisions and have same cost from time $T_s$ to $T$. Thus the regret in a slot is also same for W-FTPL and FTPL from time $T_s$ to $T$. If $T_s\ge T$ then only wait phase will be there and regret in FTPL phase is considered to be zero. We denote regret from time $t_1$ to $t_2$ as $\mathcal{R}_A^{\mathcal{P}}(t_1:t_2)$.
	\begin{align*}
		\mathcal{R}_A^{\text{W-FTPL}}(T)&=\mathcal{R}_A^{\text{W-FTPL}}(T_s-1) +	\mathcal{R}_A^{\text{W-FTPL}}(T_s:T)\\
		&=\mathcal{R}_A^{\text{W-FTPL}}(T_s-1)+\mathcal{R}_A^{\text{FTPL}}(T_s:T)\\
		&\le \mathcal{R}_A^{\text{W-FTPL}}(T_s-1)+\mathcal{R}_A^{\text{FTPL}}(1:T). \numberthis \label{eq:ftpl_wftpl_adv_reg}
	\end{align*}

	By \eqref{eq:WFTPL_wait}, \eqref{eq:ftpl_wftpl_adv_reg} and Theorem \ref{thm:reg_adv}(c)  we get
		 
	\begin{align*}
	\mathcal{R}^{\text{W-FTPL}}_A(T)\le& \kappa\sqrt{\beta T (\log M)^{1+\delta}}\\
	&+\sqrt{2T\log K} \left( \alpha+ \frac{4\kappa^2}{\alpha^2}\right)\\ &+\frac{K^2M(c+2\kappa)}{2\alpha\sqrt{\pi}} \sqrt{T+1}.
	\end{align*}	
\end{proof}

\subsection{Proof of Theorem \ref{thm:reg_stch}(a)}

\begin{proof}[Proof of Theorem \ref{thm:reg_stch}(a)]
	From Lemma \ref{lem:RR_min_fetch} we know that once RR fetches any non-zero fraction of service, it hosts (some positive fraction of the service) for at least $\left \lceil \frac{M}{c}\right \rceil$ slots before it evicts completely. From Lemma \ref{lem:RR_min_evict}, we know that, once RR  evicts the complete service from the edge, then it does not fetch any non zero fraction of the service for at least $\left \lceil \frac{M\alpha_{i'}}{\kappa-c\alpha_{i'}-g(\alpha_{i'})\kappa}\right \rceil$ slots. 
	We divide the entire time $T$ into frames of size $f_s=\left \lceil \frac{M}{c}\right \rceil + \left \lceil \frac{M\alpha_{i'}}{\kappa-c\alpha_{i'}-g(\alpha_{i'})\kappa}\right \rceil$ slots. 
	Let us define an event $\mathcal{F}$ as the frame that starts with $\left \lceil \frac{M\alpha_{i'}}{\kappa-c\alpha_{i'}-g(\alpha_{i'})\kappa}\right\rceil$ slots with  $\kappa$ arrivals in each slot and followed by $\left\lceil \frac{M}{c}\right \rceil$ slots of zeros arrivals.  Let the probability of event $\mathcal{F}$ occurring be $\Prob(\mathcal{F})$ (independent of $T$). Conditioned on event $\mathcal{F}$, if optimal static policy is to host non zero fraction of service then we have regret at least  $(\kappa -c\alpha_i +g(\alpha_i)\kappa)\left  \lceil \frac{M\alpha_{i'}}{\kappa-c\alpha_{i'}-g(\alpha_{i'})\kappa}\right \rceil>0$ in a frame where $\alpha_i\ne 0$. If optimal is not to host any fraction of service then we have regret of $c\alpha_{i}\left \lceil\frac{M}{c}\right \rceil>0$ in a frame where $\alpha_{i}\ne 0$. Therefore conditioned on event $\mathcal{F}$ RR always has a finite nonzero regret say $d(>0)$ which is independent of $T$. Therefore,
	\begin{align*}
	\mathcal{R}^{\text{RR}}_S(T)&= \E_{r} \left[\sum_{f=1}^{\lfloor T/f_s \rfloor} d \mathds{1}(\text{event $\mathcal{F}$ occured})\right]\\
	&\ge \left(\frac{T}{f_s}-1\right) \Prob(\mathcal{F})d \\
	&\ge \left(\frac{T}{\frac{M}{\kappa-c\alpha_{i'}-g(\alpha_{i'})\kappa}+ \frac{M}{c} +2}-1\right) \Prob(\mathcal{F})d\\
	&=\Omega(T).
	\end{align*}
	So even in the stochastic case RR observes linear regret.	
\end{proof}

\subsection{Proof of Theorem \ref{thm:reg_stch}(b)}
\begin{lemma}	\label{lem:FTPL_subOpt_prob}
	Probability of hosting a sub-optimal fraction of service $\alpha_j\ne \alpha_{i^\star}$ under the FTPL policy  in time slot $t+1$ is bounded as follows
	\begin{align*}
		\Prob(\rho_{t+1}=\alpha_j)\le \exp\left(-\frac{t^2\Delta_j^2}{16\eta_{t+1}^2}\right) + \exp\left(\frac{-\Delta_j^2t}{2\kappa^2}\right).
	\end{align*} 
\end{lemma}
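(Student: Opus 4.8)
The plan is to reduce the event $\{\rho_{t+1}=\alpha_j\}$ to a single pairwise comparison against the optimal level $i^\star$, and then to split the resulting ``bad'' event into a Gaussian-deviation part and an arrival-deviation part, each handled by a standard concentration bound. Since $\bm{\rho}_{t+1}=\argmin_{\bm{\rho}\in\mathcal{X}}\langle\bm{\rho},\bm{\Theta}_{t+1}+\eta_{t+1}\bm{\gamma}\rangle$ and the feasible set is the collection of one-hot vectors, hosting $\alpha_j$ forces the perturbed cumulative cost of coordinate $j$ to beat that of coordinate $i^\star$, so $\Prob(\rho_{t+1}=\alpha_j)\le\Prob(\Theta_{t+1,j}+\eta_{t+1}\gamma_j\le\Theta_{t+1,i^\star}+\eta_{t+1}\gamma_{i^\star})$. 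First I would record that $\Theta_{t+1,k}=ct\alpha_k+R_t g(\alpha_k)$, so the gap $D_t:=\Theta_{t+1,j}-\Theta_{t+1,i^\star}$ satisfies $\E[D_t]=t(\mu_j-\mu_{i^\star})=t\Delta_j>0$, where the only randomness is carried by $R_t=\sum_{l=1}^t r_l$ through the coefficient $g(\alpha_j)-g(\alpha_{i^\star})$.

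Next I would rewrite the bad event as $\{D_t\le\eta_{t+1}(\gamma_{i^\star}-\gamma_j)\}$ and split it at the midpoint $t\Delta_j/2$: if both $D_t>t\Delta_j/2$ and $\eta_{t+1}(\gamma_{i^\star}-\gamma_j)<t\Delta_j/2$ held, the event could not occur, hence
\begin{align*}
\Prob(\rho_{t+1}=\alpha_j)\le \Prob\!\left(D_t\le \tfrac{t\Delta_j}{2}\right)+\Prob\!\left(\eta_{t+1}(\gamma_{i^\star}-\gamma_j)\ge \tfrac{t\Delta_j}{2}\right).
\end{align*}
The second term is a pure Gaussian tail: $\gamma_{i^\star}-\gamma_j\sim\mathcal{N}(0,2)$, so $\eta_{t+1}(\gamma_{i^\star}-\gamma_j)\sim\mathcal{N}(0,2\eta_{t+1}^2)$, and the bound $\Prob(\mathcal{N}(0,\sigma^2)\ge a)\le\exp(-a^2/2\sigma^2)$ with $a=t\Delta_j/2$ and $\sigma^2=2\eta_{t+1}^2$ yields exactly $\exp(-t^2\Delta_j^2/16\eta_{t+1}^2)$, the first term in the claim.

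For the first term I would center using $\E[D_t]=t\Delta_j$, so $\{D_t\le t\Delta_j/2\}=\{(g(\alpha_j)-g(\alpha_{i^\star}))(R_t-\mu t)\le -t\Delta_j/2\}$. Depending on the sign of $g(\alpha_j)-g(\alpha_{i^\star})$ this becomes an upper- or lower-tail event for the i.i.d.\ sum $R_t$ of variables in $[0,\kappa]$, with threshold $\tfrac{t\Delta_j}{2|g(\alpha_j)-g(\alpha_{i^\star})|}\ge \tfrac{t\Delta_j}{2}$ since $|g(\alpha_j)-g(\alpha_{i^\star})|\le 1$; the degenerate case $g(\alpha_j)=g(\alpha_{i^\star})$ makes $D_t$ deterministic and the probability zero. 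Applying Hoeffding's inequality with deviation $t\Delta_j/2$ gives $\exp(-2(t\Delta_j/2)^2/(t\kappa^2))=\exp(-t\Delta_j^2/2\kappa^2)$, matching the second term. I expect the only real subtlety to be this last step: correctly tracking the sign and magnitude of the coefficient $g(\alpha_j)-g(\alpha_{i^\star})$ so that the bounded-difference threshold is never smaller than $t\Delta_j/2$, which is what lets a single clean Hoeffding bound cover both $\alpha_j>\alpha_{i^\star}$ and $\alpha_j<\alpha_{i^\star}$ simultaneously.
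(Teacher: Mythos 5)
Your proposal is correct and follows essentially the same route as the paper's proof: reduce the event to the pairwise comparison with coordinate $i^\star$, split at the midpoint $t\Delta_j/2$ into a Gaussian perturbation tail (bounded via $Q(x)\le e^{-x^2/2}$ applied to $(\gamma_{i^\star}-\gamma_j)/\sqrt{2}$) and an arrival-deviation event (bounded via Hoeffding), using $|g(\alpha_j)-g(\alpha_{i^\star})|\le 1$ to ensure the required deviation of $R_t$ is at least $t\Delta_j/2$. The only difference is cosmetic: the paper handles the cases $\alpha_j>\alpha_{i^\star}$ and $\alpha_j<\alpha_{i^\star}$ separately, whereas you unify them through the absolute value of the coefficient.
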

\begin{proof}
	FTPL hosts $\alpha_j$ fraction of service if $c\alpha_it+g(\alpha_i)R_t+\eta_{t+1}\gamma_i > c\alpha_jt+g(\alpha_j)R_t+\eta_{t+1}\gamma_j$ in time slot $t+1$, for all $i\ne j$.  Let $p_{t,j}=\Prob(\rho_{t+1}=\alpha_{j})$. 
	\begin{align*}
	p_{t,j}\le\Prob(c\alpha_jt&+g(\alpha_j)R_t+\eta_{t+1}\gamma_j \\
	& <  c\alpha_{i^\star}t +g(\alpha_{i^\star})R_t +\eta_{t+1}\gamma_{i^\star})\\
	=\Prob(\eta_{t+1}&(\gamma_{i^\star}-\gamma_j) \\
	&\ge c(\alpha_j-\alpha_{i^\star})t+ (g(\alpha_j)-g(\alpha_{i^\star}))R_t).
	\end{align*}
	There can be two possibilities one is $\alpha_j>\alpha_{i^\star}$ or $\alpha_j<\alpha_{i^\star}$ we bound the probability $p_{t,j}$ by considering each case separately.\\
	\textit{Case 1 ($\alpha_j>\alpha_{i^\star}$):} Since $g(.)$ is a decreasing function $g(\alpha_j)<g(\alpha_{i^\star})$. Let $\mathcal{E}_j$ be the event that  $R_t<\mu t + t\Delta_j/2$.
	\begin{align*}
	p_{t,j}\le& \Prob(\eta_{t+1}(\gamma_{i^\star}-\gamma_j)\ge c(\alpha_j-\alpha_{i^\star})t \\
	&\qquad\qquad + (g(\alpha_j)-g(\alpha_{i^\star}))R_t, \mathcal{E}_j)
	+\Prob(\mathcal{E}_j^c)\\
	\le&\Prob(\eta_{t+1}(\gamma_{i^\star}-\gamma_j)\ge t\Delta_j-(g(\alpha_{i^\star})-g(\alpha_j))t\Delta_j/2)\\
	&+ \Prob(\mathcal{E}_j^c)\\
	\le&\Prob(\eta_{t+1}(\gamma_{i^\star}-\gamma_j)\ge t\Delta_j/2 )+ \Prob(\mathcal{E}_j^c)\\
	\overset{(a)}{\le}&\exp\left(-\frac{t^2\Delta_j^2}{16\eta_{t+1}^2}\right) + \exp\left(\frac{-\Delta_j^2t}{2\kappa^2}\right),
	\end{align*}
	where $(a)$ is obtained by using the fact that complementary CDF of standard Gaussian $Q(x)\le e^{-x^2/2}$ for $x>0$ and Hoeffding's inequality \cite{hoeffding1994probability}.
	
	\textit{Case 2 ($\alpha_j<\alpha_{i^\star}$):} Let $\mathcal{E}_j$ be an event $R_t>\mu t - t\Delta_j/2$.
	\begin{align*}
	p_{t,j}\le& \Prob(\eta_{t+1}(\gamma_{i^\star}-\gamma_j)\ge c(\alpha_j-\alpha_{i^\star})t \\
	&\qquad\qquad+ (g(\alpha_j)-g(\alpha_{i^\star}))R_t, \mathcal{E}_j)+\Prob(\mathcal{E}_j^c)\\
	\le&\Prob(\eta_{t+1}(\gamma_{i^\star}-\gamma_j)\ge t\Delta_j-(g(\alpha_{j})-g(\alpha_{i^\star}))t\Delta_j/2)\\ 
	&+ \Prob(\mathcal{E}_j^c)\\
	\le&\Prob(\eta_{t+1}(\gamma_{i^\star}-\gamma_j)\ge t\Delta_j/2 )+ \Prob(\mathcal{E}_j^c)\\
	\overset{(a)}{\le}&\exp\left(-\frac{t^2\Delta_j^2}{16\eta_{t+1}^2}\right) + \exp\left(\frac{-\Delta_j^2t}{2\kappa^2}\right),
	\end{align*}
	where $(a)$ is obtained by using the fact that $Q(x)\le e^{-x^2/2}$ for $x>0$ and Hoeffding's inequality \cite{hoeffding1994probability}.
	
	Therefore by combining both cases we get 
	\begin{equation}\label{eq:sub_arm_prob}
		\Prob(\rho_{t+1}=\alpha_i)\le \exp\left(-\frac{t^2\Delta_j^2}{16\eta_{t+1}^2}\right) + \exp\left(\frac{-\Delta_j^2t}{2\kappa^2}\right).
	\end{equation} 
\end{proof}

\begin{lemma} \label{lem:FTPL_reg_stch}
	For $M=0$,
	\begin{align*}
	\mathcal{R}_{S}^{\text{FTPL}}(T)\le& \left( \sqrt{2\log K}+\frac{2\sqrt{2h_1}\log K}{\Delta_{\text{min}}} \right) \left( \alpha  + \frac{4\kappa^2}{\alpha}\right)\\
	&+ \frac{16\alpha^2+4\kappa^2}{\Delta_{\text{min}}},
	\end{align*}
	where $h_1= 2\max\{16\alpha^2,2\kappa^2\}$.
\end{lemma}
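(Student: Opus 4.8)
The plan is to start from the standard expected-regret decomposition over the finite action set and then treat early and late time-slots with two different tools, so that the per-arm sum $\sum_{i\neq i^\star}(\cdot)/\Delta_i$ of the crude estimate is replaced by a $\log K$-type dependence. Since $M=0$ and the decision $\rho_t$ is made before $r_t$ is revealed, the expected per-slot cost is $\mu_i$ whenever $\rho_t=\alpha_i$, so
\[
\mathcal{R}^{\text{FTPL}}_S(T)\le \sum_{t=1}^{T}\sum_{i\neq i^\star}\Delta_i\,\Prob(\rho_t=\alpha_i).
\]
I would fix a threshold $\tau$ and split this double sum into an \emph{early phase} $t\le\tau$ and a \emph{late phase} $t>\tau$, choosing $\tau$ so that the two contributions reproduce the two groups of terms in the claim.

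For the early phase, observe that $\sum_{t=1}^{\tau}\sum_{i\neq i^\star}\Delta_i\Prob(\rho_t=\alpha_i)$ is exactly the stochastic regret of FTPL over horizon $\tau$ (with $M=0$). By Lemma \ref{lem:adv_stch_reg} this is at most the adversarial regret $\mathcal{R}^{\text{FTPL}}_A(\tau)$, which Lemma \ref{lem:FTPL_reg_adv} bounds by $\sqrt{2\log K}\bigl(\eta_\tau+\kappa^2\sum_{t=1}^{\tau}1/\eta_t\bigr)$. Substituting $\eta_t=\alpha\sqrt{t-1}$ (handling the degenerate slot $t=1$, whose regret is at most $\Delta_{\max}$, separately) and using $\sum_{t\le\tau}1/\sqrt{t-1}\le 2\sqrt{\tau}$ gives an early-phase bound of the shape $\sqrt{2\tau\log K}\,(\alpha+4\kappa^2/\alpha)$. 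Choosing $\tau$ so that $\sqrt{\tau}=1+2\sqrt{h_1\log K}/\Delta_{\min}$ makes this equal to $\bigl(\sqrt{2\log K}+2\sqrt{2h_1}\log K/\Delta_{\min}\bigr)(\alpha+4\kappa^2/\alpha)$, i.e. the first two terms of the claim.

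For the late phase I would invoke Lemma \ref{lem:FTPL_subOpt_prob}. With $\eta_{t}=\alpha\sqrt{t-1}$ both perturbation terms collapse to pure geometric factors $\exp(-\tfrac{(t-1)\Delta_i^2}{16\alpha^2})$ and $\exp(-\tfrac{(t-1)\Delta_i^2}{2\kappa^2})$, so summing the geometric series from $t=\tau+1$ and using $e^{a}-1\ge a$ yields per-arm contributions of the form $\tfrac{16\alpha^2}{\Delta_i}e^{-(\tau-1)\Delta_i^2/16\alpha^2}$ and $\tfrac{2\kappa^2}{\Delta_i}e^{-(\tau-1)\Delta_i^2/2\kappa^2}$. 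The role of the threshold is that, because $h_1=2\max\{16\alpha^2,2\kappa^2\}$ and $\tau-1\ge h_1\log K/\Delta_{\min}^2$, each residual exponential (even for the arm with gap $\Delta_{\min}$) is at most $e^{-\log K}=1/K$; hence summing over the at most $K$ sub-optimal arms and bounding $1/\Delta_i\le 1/\Delta_{\min}$ collapses the sum to $\tfrac{16\alpha^2+4\kappa^2}{\Delta_{\min}}$, the last term of the claim. If $\tau>T$ the late phase is empty and the early-phase bound $\mathcal{R}^{\text{FTPL}}_A(T)\le\sqrt{2\tau\log K}(\alpha+4\kappa^2/\alpha)$ already suffices.

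The main obstacle is the bookkeeping around the threshold: one must verify that a single choice of $\tau$ simultaneously (i) produces the exact constants in the early-phase adversarial bound and (ii) is large enough, through the definition of $h_1$, that every late-phase exponential is small enough that the sum over all $K$ arms does not reintroduce a factor of $K$. Pinning down the precise constants (for instance the $4\kappa^2$ rather than $2\kappa^2$) requires a little care with the geometric-series estimate and the first-slot term, but this is routine once the two-phase split and the value of $\tau$ are fixed.
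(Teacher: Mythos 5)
Your proof follows essentially the same route as the paper's: the same decomposition $\mathcal{R}_{S}^{\text{FTPL}}(T)\le\sum_{t}\sum_{i\ne i^\star}\Delta_i\,\Prob(\rho_t=\alpha_i)$, the same split at a threshold of order $h_1\log K/\Delta_{\text{min}}^2$, with the adversarial bound of Lemma \ref{lem:FTPL_reg_adv} covering the early phase and Lemma \ref{lem:FTPL_subOpt_prob} plus a geometric summation (the $\log K$ built into the threshold absorbing the factor of $K$ from the union over sub-optimal arms) covering the late phase. The argument is correct; only the constant bookkeeping differs slightly from the paper, which takes $t_0=\lceil h_1\log K/\Delta_{\text{min}}^2\rceil$ and uses the monotonicity of $u e^{-u^2}$ to replace each $\Delta_i$-term by its $\Delta_{\text{min}}$ counterpart before summing over arms, rather than summing each arm's geometric series first.
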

\begin{proof} Let $t_0=\lceil h_1\frac{\log K}{\Delta_{\text{min}}^2}\rceil$ where $h_1=2\max\{16\alpha^2,2\kappa^2\}$. From \eqref{eq:Regret_stch} we have
	\begin{align*}
	\mathcal{R}_{S}^{\text{FTPL}}(T)\le &\left( \sum_{t=1}^T  c\rho_{t}+g(\rho_t)\mu-\mu_{i^\star}\right)\\
	=&\sum_{t=1}^T \left( \sum_{i} \mu_i \Prob(\rho_{t}=\alpha_{i})\right) -\mu_{i^\star}\\
	=& \sum_{t=1}^T \sum_{i\ne i^\star}\Delta_i \Prob(\rho_{t}=\alpha_{i})\\
	=& \sum_{t=1}^{t_0} \sum_{i\ne i^\star}\Delta_i \Prob(\rho_{t}=\alpha_{i})\\
	&+\sum_{t=t_0+1}^T \sum_{i\ne i^\star}\Delta_i \Prob(\rho_{t}=\alpha_{i})\\
	\overset{(a)}{\le}& \sqrt{2t_0\log K}\left( \alpha+\frac{4\kappa^2}{\alpha}\right)\\
	&+ \sum_{t=t_0+1}^T \sum_{i\ne i^\star}\Delta_i \Prob(\rho_{t}=\alpha_{i}),
	\end{align*}	
	where $(a)$ is obtained by using adversarial regret bound when $M=0$. Now we consider each term in RHS separately and bound them.
	\begin{align*}
		\sqrt{2t_0\log K}\le& \sqrt{2\log K}\left( 1+\frac{\sqrt{h_1\log K}}{\Delta_{\text{min}}}\right).
	\end{align*}
	 By using Lemma \ref{lem:FTPL_subOpt_prob} we have
	\begin{align*}
		\Delta_{j} \Prob(\rho_{t+1}&=\alpha_{j})\\
		\le& \Delta_j\left[  \exp\left(-\frac{t^2\Delta_j^2}{16\eta_{t+1}^2}\right) + \exp\left(\frac{-\Delta_j^2t}{2\kappa^2}\right)\right]  \\
		= & \Delta_j\left[  \exp\left(-\frac{t\Delta_j^2}{16\alpha^2}\right) + \exp\left(\frac{-\Delta_j^2t}{2\kappa^2}\right)\right]\\
		\overset{(b)}{\le}& \Delta_{\text{min}}\left[  \exp\left(-\frac{t\Delta_{\text{min}}^2}{16\alpha^2}\right) + \exp\left(\frac{-\Delta_{\text{min}}^2t}{2\kappa^2}\right)\right].
	\end{align*}
	Here $(b)$ is obtained using the fact that $f(u)=ue^{-u^2}$ is a decreasing function over $u \in [1,\infty)$, and for $t>t_0$, $\frac{t\Delta_{j}^2}{16\alpha^2} > \frac{h_1\log K\Delta_{j}^2}{16\alpha^2\Delta_{\text{min}}^2} \ge 1$ and  $\frac{t\Delta_{j}^2}{2\kappa^2} > \frac{h_1\log K\Delta_{j}^2}{2\kappa^2\Delta_{\text{min}}^2} \ge 1$. Therefore, 
	\begin{align*}
		\mathcal{R}_{S}^{\text{FTPL}}&(T)\\
		 \le&\left(  \sqrt{2\log K}+\frac{2\sqrt{2h_1}\log K}{\Delta_{\text{min}}}\right) \left( \alpha  + \frac{4\kappa^2}{\alpha}\right)\\
		&+\sum_{t=t_0}^{T} K\Delta_{\text{min}}\left[  \exp\left(-\frac{t\Delta_{\text{min}}^2}{16\alpha^2}\right) + \exp\left(\frac{-\Delta_{\text{min}}^2t}{2\kappa^2}\right)\right]\\
		\overset{(c)}{\le}& \left( \sqrt{2\log K}+\frac{2\sqrt{2h_1}\log K}{\Delta_{\text{min}}}\right)  \left( \alpha  + \frac{4\kappa^2}{\alpha}\right)\\
		& + \frac{16\alpha^2+2\kappa^2}{\Delta_{\text{min}}}+2\Delta_{\text{min}}\\
		\le& \left( \sqrt{2\log K}+\frac{2\sqrt{2h_1}\log K}{\Delta_{\text{min}}} \right) \left( \alpha  + \frac{4\kappa^2}{\alpha}\right)\\
		&+ \frac{16\alpha^2+4\kappa^2}{\Delta_{\text{min}}}.
	\end{align*}
	Here $(c)$ follows because for $t>t_0$ we have $K\exp(-t\Delta_{\text{min}}^2/16\alpha^2)<1$, $K\exp(-t\Delta_{\text{min}}^2/2\kappa^2)<1$ and $\sum_{t=1}^{\infty}e^{-at}\le 1/a$ for $a>0$.
\end{proof}

\begin{lemma}	\label{lem:FTPL_swcost_stc}
	Fetch cost under FTPL policy with learning rate $\eta_t=\alpha\sqrt{t-1}$ is bounded as follows	
	$$\E[\mathcal{C}_f^{\text{FTPL}}(T)]\le M \frac{16\alpha^2+2\kappa^2}{\Delta_{\text{min}}^2}.$$
\end{lemma}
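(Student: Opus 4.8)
The plan is to reduce the expected fetch cost to a sum of one-slot \emph{upward-switch} probabilities and then control each such probability by the sub-optimal hosting probability already established in Lemma~\ref{lem:FTPL_subOpt_prob}. A fetch is charged only when the hosting level strictly increases, so I would first write
\begin{align*}
\E[\mathcal{C}_f^{\text{FTPL}}(T)] = M\,\E\Big[\sum_{t=1}^{T}(\rho_{t+1}-\rho_t)^+\Big] \le M\sum_{t=1}^{T}\Prob(\rho_{t+1}>\rho_t),
\end{align*}
using $(\rho_{t+1}-\rho_t)^+\le \Indc_{\{\rho_{t+1}>\rho_t\}}$ since all hosting fractions lie in $[0,1]$.

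The key step is the observation that an upward switch can occur only when the system is away from the optimal level $\alpha_{i^\star}$ in one of the two adjacent slots: if $\rho_{t+1}>\rho_t$ and $\rho_{t+1}=\alpha_{i^\star}$ then necessarily $\rho_t<\alpha_{i^\star}$, while if $\rho_{t+1}\ne\alpha_{i^\star}$ then the target level itself is sub-optimal. Thus every fetch witnesses a slot $s\in\{t,t+1\}$ with $\rho_s\ne\alpha_{i^\star}$, and I would charge the fetch to that sub-optimal decision. In the single-partial-level setting this charging is exact, since each upward switch is then bounded by $\Prob(\rho_s=\alpha_j)$ for the unique sub-optimal level $\alpha_j$, so that $\sum_t \Prob(\rho_{t+1}>\rho_t)\le \sum_{t}\Prob(\rho_t=\alpha_j)$ with $\Delta_j=\Delta_{\text{min}}$.

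Finally I would invoke Lemma~\ref{lem:FTPL_subOpt_prob} with the learning rate $\eta_{t+1}=\alpha\sqrt{t}$ (so that $\eta_{t+1}^2=\alpha^2 t$), which turns the first exponent into $\exp(-t\Delta_j^2/(16\alpha^2))$, and then sum the resulting geometric series via $\sum_{t\ge1}e^{-at}\le 1/a$:
\begin{align*}
\sum_{t\ge1}\Big[\exp\Big(-\tfrac{t\Delta_j^2}{16\alpha^2}\Big)+\exp\Big(-\tfrac{t\Delta_j^2}{2\kappa^2}\Big)\Big]\le \frac{16\alpha^2}{\Delta_j^2}+\frac{2\kappa^2}{\Delta_j^2}=\frac{16\alpha^2+2\kappa^2}{\Delta_j^2}\le \frac{16\alpha^2+2\kappa^2}{\Delta_{\text{min}}^2},
\end{align*}
where the last inequality uses $\Delta_j\ge\Delta_{\text{min}}$; multiplying by $M$ yields the claim. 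The main obstacle is the charging in the second step: one must argue that each fetch is accounted for by a \emph{single} decaying sub-optimal hosting event, without picking up an extra factor from counting both adjacent slots or an extra dependence on the number of levels $K$. This is immediate when only one partial level is permitted, whereas in the general multilevel case one must verify that the charging telescopes so that the geometric tails still collapse to the single $\Delta_{\text{min}}^{-2}$ term rather than to a sum $\sum_{j\ne i^\star}\Delta_j^{-2}$.
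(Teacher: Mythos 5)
Your opening reduction is the same as the paper's: both bound the expected fetch cost by $M\sum_{t=1}^{T}\Prob(\rho_{t+1}>\rho_t)$. From there the routes genuinely diverge. The paper expands $\Prob(\rho_{t+1}>\rho_t)\le\sum_{j>i}\Prob(\rho_t=\alpha_i,\rho_{t+1}=\alpha_j)$ and bounds each transition event directly: the event forces the Gaussian difference $(\gamma_i-\gamma_j)/\sqrt{2}$ into the interval between $(\Theta_{t+1,j}-\Theta_{t+1,i})/(\sqrt{2}\eta_{t+1})$ and $(\Theta_{t,j}-\Theta_{t,i})/(\sqrt{2}\eta_t)$, and a case analysis on the sign of $\Delta_{ij}$ combined with Hoeffding's inequality on $R_t$ yields a per-pair bound $\exp(-(t-1)\Delta_{ij}^2/16\alpha^2)+\exp(-(t-1)\Delta_{ij}^2/2\kappa^2)$, which is then summed over the $\OO(K^2)$ pairs and over $t$. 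You instead charge each upward switch to a sub-optimal hosting event in slot $t$ or $t+1$ and reuse Lemma~\ref{lem:FTPL_subOpt_prob}. This is cleaner in that it requires no new concentration argument, and when there is a single sub-optimal level the charging is indeed tight and delivers the stated constant.

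The obstacle you flag at the end is, however, a real gap that your proposal does not close: for general $K$, the inclusion $\{\rho_{t+1}>\rho_t\}\subseteq\{\rho_t\ne\alpha_{i^\star}\}\cup\{\rho_{t+1}\ne\alpha_{i^\star}\}$ followed by the union bound over the $K-1$ sub-optimal levels gives $2M\sum_{j\ne i^\star}(16\alpha^2+2\kappa^2)/\Delta_j^2$, which exceeds the claimed bound by a factor of order $K$, and no telescoping rescues this — the union over levels is intrinsic to the charging scheme, so the $K$-free constant is simply not obtainable this way. In fairness, the paper's own derivation terminates at $MK^2(16\alpha^2+3\kappa^2)/\Delta_{\text{min}}^2$, which also fails to match the $K$-free constant in the lemma statement, and its replacement of $|\Delta_{ij}|$ by $\Delta_{\text{min}}$ for pairs in which neither index is $i^\star$ is itself unjustified (two sub-optimal levels can satisfy $|\mu_i-\mu_j|<\Delta_{\text{min}}$). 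So both arguments establish the same scaling $\OO(M(\alpha^2+\kappa^2)/\Delta_{\text{min}}^2)$ in $M$, $\alpha$, $\kappa$, $\Delta_{\text{min}}$, and neither, as written, proves the lemma verbatim for general $K$. One further small point: your geometric sum starts the exponential decay at $t=1$, but in the first slot $\eta_1=0$ and the bound of Lemma~\ref{lem:FTPL_subOpt_prob} is vacuous, so that slot must be added separately — this is exactly what produces the paper's $3\kappa^2$ in place of your $2\kappa^2$.
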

\begin{proof}
	The fetch cost is incurred when we fetch extra fraction of service. Similar to the proof of Lemma \ref{lem:FTPL_swcost_adv} we get
	\begin{align*}
	\E[\mathcal{C}^{\text{FTPL}}_{f}(T)]&\le M\sum_{t=1}^T \Prob(\rho_t<\rho_{t+1})\\
	&= M\sum_{t=1}^T \sum_{j>i}\Prob(\rho_t=\alpha_{i}, \rho_{t+1}=\alpha_{j}).
	\end{align*}
	Let $\mathcal{E}_{i,j}^{(t)}$ be the event that $\rho_{t}=\alpha_{i}$, $\rho_{t+1}=\alpha_{j}$. $\mathcal{E}_{i,j}^{(t)}$ occurs if $\Theta_{t,i}+\eta_t \gamma_i< \Theta_{t,j}+\eta_t \gamma_j$, $\Theta_{t+1,i}+\eta_{t+1} \gamma_i> \Theta_{t+1,j}+\eta_{t+1} \gamma_j$.
	\begin{align*}
		\Prob(\mathcal{E}_{i,j}^{(t)}) 	
		&=\Prob\left(\frac{\Theta_{t+1,j}-\Theta_{t+1,i}}{\sqrt{2}\eta_{t+1}} \le \frac{\gamma_i-\gamma_j}{\sqrt{2}} \le \frac{\Theta_{t,j}-\Theta_{t,i}}{\sqrt{2}\eta_{t}}\right). 	
	\end{align*}
	We consider two cases $\mu_i>\mu_j$ and $\mu_i<\mu_j$ and bound the probability.\\
	\textit{Case 1 ($\mu_i>\mu_j$):} Let $\mathcal{E}_1$ be the event that $|R_{t-1}-\mu (t-1)| \le |\Delta_{ij}| (t-1)/2$,
	\begin{align*}
	\Prob(&\mathcal{E}_{i,j}^{(t)})\\
	\le&\Prob\left(\frac{\gamma_i-\gamma_j}{\sqrt{2}}\le \frac{\Theta_{t,j}-\Theta_{t,i}}{\sqrt{2}\eta_{t}}\right)\\
	\le&\Prob\left(\frac{\gamma_i-\gamma_j}{\sqrt{2}}\le \frac{\Theta_{t,j}-\Theta_{t,i}}{\sqrt{2}\eta_{t}},\mathcal{E}_1\right)+\Prob(\mathcal{E}_1^c)\\
	=&\Prob\left(\frac{\gamma_i-\gamma_j}{\sqrt{2}}\le \frac{c(\alpha_{j}-\alpha_i)(t-1)}{\sqrt{2}\eta_{t}}\right.\\
	&\hspace{2.7cm} \left. -\frac{(g(\alpha_i)-g(\alpha_j))R_{t-1}}{\sqrt{2}\eta_{t}} ,\mathcal{E}_1\right)+\Prob(\mathcal{E}_1^c)\\
	\le&\Prob\left(\frac{\gamma_i-\gamma_j}{\sqrt{2}}\le \frac{-|\Delta_{ij}|(t-1)}{2\sqrt{2}\eta_{t}}\right)+\Prob(\mathcal{E}_1^c)\\
	\overset{(a)}{\le}& \exp\left(-\frac{(t-1)^2\Delta_{ij}^2}{16\eta_{t}^2}\right)+ \exp\left(\frac{-\Delta_{ij}^2(t-1)}{2\kappa^2}\right)\\
	\le& \exp \left(-\frac{(t-1)^2 \Delta_{\text{min}}^2} {16\eta_{t}^2}\right)+ \exp\left(\frac{-\Delta_{\text{min}}^2(t-1)}{2\kappa^2}\right)\\
	\le& \exp \left(-\frac{(t-1) \Delta_{\text{min}}^2} {16\alpha^2}\right)+ \exp\left(\frac{-\Delta_{\text{min}}^2(t-1)}{2\kappa^2}\right),
	\end{align*}
	where $(a)$ is obtained by using the fact that $Q(x)\le e^{-x^2/2}$ for $x>0$ and Hoeffding's inequality \cite{hoeffding1994probability}.
	
	\textit{Case 2 ($\mu_i<\mu_j$):} Let $\mathcal{E}_2$ be the event that $|R_{t}-\mu t| \le |\Delta_{ij}| t/2$,
	\begin{align*}
		\Prob&(\mathcal{E}_{i,j}^{(t)})\\
		\le&\Prob\left(\frac{\Theta_{t+1,j}-\Theta_{t+1,i}}{\sqrt{2}\eta_{t+1}} \le \frac{\gamma_i-\gamma_j}{\sqrt{2}}\right)\\
		\le&\Prob\left(\frac{c(\alpha_{j}-\alpha_i)t-(g(\alpha_i)-g(\alpha_j))R_{t}}{\sqrt{2}\eta_{t}} \le \frac{\gamma_i-\gamma_j}{\sqrt{2}}\right)\\ 	
		\le&\Prob\left(\frac{\gamma_i-\gamma_j}{\sqrt{2}}\ge \frac{c(\alpha_{j}-\alpha_i)t-(g(\alpha_i)-g(\alpha_j))R_{t}}{\sqrt{2}\eta_{t}} ,\mathcal{E}_2\right)\\
		&+\Prob(\mathcal{E}_2^c)\\
		\le&\Prob\left(\frac{\gamma_i-\gamma_j}{\sqrt{2}}\ge \frac{|\Delta_{ij}|t}{2\sqrt{2}\eta_{t}}\right)+\Prob(\mathcal{E}_2^c)\\
		\overset{(b)}{\le}& \exp\left(-\frac{t^2\Delta_{ij}^2}{16\eta_{t}^2}\right)+ \exp\left(\frac{-\Delta_{ij}^2t}{2\kappa^2}\right)\\
		\le& \exp \left(-\frac{t^2 \Delta_{\text{min}}^2} {16\eta_{t}^2}\right)+ \exp\left(\frac{-\Delta_{\text{min}}^2t}{2\kappa^2}\right)\\
		\le& \exp \left(-\frac{t \Delta_{\text{min}}^2} {16\alpha^2}\right)+ \exp\left(\frac{-\Delta_{\text{min}}^2t}{2\kappa^2}\right)\\
		\le& \exp \left(-\frac{(t-1) \Delta_{\text{min}}^2} {16\alpha^2}\right)+ \exp\left(\frac{-\Delta_{\text{min}}^2(t-1)}{2\kappa^2}\right),
	\end{align*}
	where $(b)$ is obtained by using the fact that $Q(x)\le e^{-x^2/2}$ for $x>0$ and Hoeffding's inequality \cite{hoeffding1994probability}.
	
	From combining both cases we get
	\begin{align}\label{eq:swFTPL_inq}
	\Prob(\mathcal{E}_{i,j}^{(t)})\le \exp \left(-\frac{(t-1) \Delta_{\text{min}}^2} {16\alpha^2}\right)+ \exp\left(\frac{-\Delta_{\text{min}}^2(t-1)}{2\kappa^2}\right) .
	\end{align}
	Therefore, 
	\begin{align*}
	\E[\mathcal{C}_F^{\text{FTPL}}]\le& M \sum_{t=1}^{T}\sum_{j>i}\Prob(\mathcal{E}_{i,j}^{(t)})\\
	\le &MK^2\sum_{t=1}^{T-1} \exp \left(-\frac{t \Delta_{\text{min}}^2}{16\alpha^2}\right)+ \exp\left(\frac{-\Delta_{\text{min}}^2t}{2\kappa^2}\right)\\
	&+MK^2\\
	\le& MK^2\frac{16\alpha^2+3\kappa^2}{\Delta_{\text{min}}^2}.
	\end{align*}
\end{proof}

\begin{proof}[Proof of Theorem \ref{thm:reg_stch}(b)]
	Note that FTPL policy does not consider fetch cost $M$ while taking the decisions. By using  Lemma \ref{lem:FTPL_reg_stch}, \ref{lem:FTPL_swcost_stc} we get the result stated.
\end{proof}

\subsection{Proof of Theorem \ref{thm:reg_stch}(c)}
\begin{lemma}\label{lem:T_s,mu_hat} 
	Under the W-FTPL policy, $T_s>t'$ if and only if there exist $t<t'$ such that the following condition holds:
	\begin{align*}
	(g(\alpha_{\tilde{j}_t})-g(\alpha_{\tilde{i}_t}))\hat{\mu}_t &> -c(\alpha_{\tilde{j}_t}-\alpha_{\tilde{i}_t})+\kappa\sqrt{\frac{\beta (\log M)^{1+\delta}}{t}},\\ \text{ or}\\
	(g(\alpha_{\tilde{j}_t})-g(\alpha_{\tilde{i}_t}))\hat{\mu}_t &< -c(\alpha_{\tilde{j}_t}-\alpha_{\tilde{i}_t})-\kappa\sqrt{\frac{\beta (\log M)^{1+\delta}}{t}},
	\end{align*}
	where $(\tilde{i}_t,\tilde{j}_t)=\argmax_{i\ne j}\Theta_{t+1,i}-\Theta_{t+1,j}$, $\hat{\mu}_t=R_t/t$.
\end{lemma}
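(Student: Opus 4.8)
The plan is to prove the lemma by reducing the inequality that defines $T_s$ to the two scalar conditions stated, one slot at a time, and then reading off the statement about $\{T_s>t'\}$ from the fact that $T_s$ is a first hitting time. The starting point is to make the iterate explicit: unwinding $\bm{\Theta}_{t+1}=\bm{\Theta}_t+\bm{\theta}_t$ with $\bm{\theta}_l=c\bm{s}+r_l\bm{f}$, $\bm{\Theta}_1=\bm 0$, $s_i=\alpha_i$ and $f_i=g(\alpha_i)$, I get the closed form $\Theta_{t+1,i}=c\alpha_i t+g(\alpha_i)R_t$ for every level $i$, so that for any ordered pair $(i,j)$,
\begin{equation*}
\Theta_{t+1,i}-\Theta_{t+1,j}=c(\alpha_i-\alpha_j)\,t+(g(\alpha_i)-g(\alpha_j))\,R_t .
\end{equation*}

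Next I would rewrite the slot-$t$ condition that the definition of $T_s$ tracks, namely $t<(\max_{i\ne j}(\Theta_{t+1,i}-\Theta_{t+1,j}))^2/(\kappa^2\beta(\log M)^{1+\delta})$. Since the maximum over ordered pairs of a signed gap equals the maximum \emph{absolute} gap, and $(\tilde i_t,\tilde j_t)$ is the maximizing pair, this inequality is equivalent (taking the positive square root, both sides being nonnegative) to
\begin{equation*}
\big|\Theta_{t+1,\tilde i_t}-\Theta_{t+1,\tilde j_t}\big|>\kappa\sqrt{\beta(\log M)^{1+\delta}\,t}.
\end{equation*}

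The heart of the argument, and the step I expect to be the main obstacle, is removing this absolute value by splitting on the orientation of the maximizing gap. Substituting the closed form for the pair $(\tilde j_t,\tilde i_t)$ and dividing by $t>0$ so that $R_t/t=\hat{\mu}_t$, the branch $\Theta_{t+1,\tilde j_t}-\Theta_{t+1,\tilde i_t}>\kappa\sqrt{\beta(\log M)^{1+\delta}t}$ rearranges, after isolating $(g(\alpha_{\tilde j_t})-g(\alpha_{\tilde i_t}))\hat{\mu}_t$, into exactly the first stated inequality (condition A), while the opposite branch $\Theta_{t+1,\tilde j_t}-\Theta_{t+1,\tilde i_t}<-\kappa\sqrt{\beta(\log M)^{1+\delta}t}$ rearranges into the second (condition B). Hence, at each fixed slot $t$, the stopping inequality holds if and only if condition A or condition B holds. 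The delicate bookkeeping lives here: one must keep the $\argmax$ labeling fixed across the two branches and track every sign, because the squaring in the definition of $T_s$ erases the orientation of the maximizing gap, and the division by $t$ must be matched consistently on both sides of the threshold.

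Finally, I would translate this per-slot equivalence into the statement about the waiting time. By definition $T_s$ is the first slot at which the stopping inequality is satisfied, and by the previous step that inequality coincides, slot by slot, with the event that condition A or condition B holds; feeding this identification into the first-hitting-time definition of $T_s$ (and noting that $0\le r_t\le\kappa$ keeps the threshold well defined) yields the claimed equivalence for $\{T_s>t'\}$ in terms of the occurrence of condition A or condition B at a slot $t<t'$. The remaining two steps are pure substitution and the unwinding of the definition of $T_s$; all the genuine content is in the sign analysis above.
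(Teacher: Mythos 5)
Your proof is essentially identical to the paper's: both unwind the update to get $\Theta_{t+1,i}=c\alpha_i t+g(\alpha_i)R_t$, rewrite the defining inequality of $T_s$ as $|\Theta_{t+1,\tilde{j}_t}-\Theta_{t+1,\tilde{i}_t}|>\kappa\sqrt{\beta(\log M)^{1+\delta}\,t}$ at the maximizing pair, divide by $t$ so that $R_t/t=\hat{\mu}_t$, and split on the orientation of the gap to obtain exactly the two stated conditions. One caveat you share with the paper's own proof: since conditions A/B at slot $t$ are equivalent to the \emph{stopping} inequality (hence to $\{T_s\le t\}$), the hitting-time translation literally yields ``$T_s<t'$ iff there exists $t<t'$ satisfying A or B,'' not ``$T_s>t'$'' as the lemma states --- a direction slip present in the paper's statement and proof alike, and the downstream uses in Lemmas \ref{lem:Ts>t0} and \ref{lem:Ts>T_0} indeed invoke the result in this corrected form.
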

\begin{proof}
	Let $(\tilde{i}_t,\tilde{j}_t)=\argmax_{i\ne j}\Theta_{t+1,i}-\Theta_{t+1,j}$, $\hat{\mu}_t=R_t/t$. $T_s>t'$ if and only if there exist $t<t'$ such that
	\begin{align*}
		\frac{|\max_{i\ne j}(\Theta_{t+1,i}-\Theta_{t+1,j})|}{t}&< \kappa\sqrt{\frac{\beta (\log M)^{1+\delta}}{t}}\\
		\iff \frac{|\Theta_{t+1,\tilde{j}_t}-\Theta_{t+1,\tilde{i}_t|}}{t} &> \kappa\sqrt{\frac{\beta (\log M)^{1+\delta}}{t}}\\
		&\Updownarrow\\
		 |c(\alpha_{\tilde{j}_t}-\alpha_{\tilde{i}_t}) + (g(\alpha_{\tilde{j}_t})-g(\alpha_{\tilde{i}_t}))\hat{\mu}_t| &> \kappa\sqrt{\frac{\beta (\log M)^{1+\delta}}{t}}\\
		&\Updownarrow\\
		(g(\alpha_{\tilde{j}_t})-g(\alpha_{\tilde{i}_t}))\hat{\mu}_t > -c(\alpha_{\tilde{j}_t}-\alpha_{\tilde{i}_t})&+\kappa\sqrt{\frac{\beta (\log M)^{1+\delta}}{t}}\\
		&\text{or } \\
		(g(\alpha_{\tilde{j}_t})-g(\alpha_{\tilde{i}_t}))\hat{\mu}_t < -c(\alpha_{\tilde{j}_t}-\alpha_{\tilde{i}_t})&-\kappa\sqrt{\frac{\beta (\log M)^{1+\delta}}{t}}.
	\end{align*}
This completes the proof. 
\end{proof}

\begin{lemma}\label{lem:Ts>t0}
	Under the W-FTPL policy, $T_s>\frac{(\sqrt{\beta}-1)^2 (\log M)^{1+\delta}}{\Delta_{\text{max}}^2}$ with probability at least $1-\frac{(\sqrt{\beta}-1)^2 (\log M)^{1+\delta}}{M^{2(\log M)^\delta}\Delta_{\text{max}}^2}$.
\end{lemma}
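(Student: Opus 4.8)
The plan is to show that the waiting period of W-FTPL outlasts $t_0 := \frac{(\sqrt{\beta}-1)^2(\log M)^{1+\delta}}{\Delta_{\text{max}}^2}$ unless the empirical arrival rate $\hat{\mu}_t = R_t/t$ deviates substantially from $\mu$ at some early slot, and then to control that deviation by Hoeffding's inequality. From the definition of $T_s$ (equivalently, from Lemma \ref{lem:T_s,mu_hat}), the event $\{T_s \le t_0\}$ is precisely the event that for some $t \le t_0$ the gap crosses its threshold, i.e. $\max_{i\ne j}(\Theta_{t+1,i}-\Theta_{t+1,j}) > \kappa\sqrt{\beta(\log M)^{1+\delta}\,t}$. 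So I would bound $\Prob(T_s \le t_0)$ by a union bound over these slots and estimate each term.

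The key deterministic step is to tie the gap to the fluctuation of $\hat{\mu}_t$. Writing $\Theta_{t+1,i}=c\alpha_i t+g(\alpha_i)R_t$, one has
\[
\Theta_{t+1,i}-\Theta_{t+1,j}=t(\mu_i-\mu_j)+t\big(g(\alpha_i)-g(\alpha_j)\big)(\hat{\mu}_t-\mu),
\]
so that, using $|\mu_i-\mu_j|\le\Delta_{\text{max}}$ and $|g(\alpha_i)-g(\alpha_j)|\le 1$,
\[
\max_{i\ne j}\big(\Theta_{t+1,i}-\Theta_{t+1,j}\big)\le t\Delta_{\text{max}}+|R_t-\mu t|.
\]
For $t\le t_0$ the drift term is controlled: squaring shows $t\Delta_{\text{max}}\le(\sqrt{\beta}-1)\sqrt{(\log M)^{1+\delta}t}$ is equivalent to $t\Delta_{\text{max}}^2\le(\sqrt{\beta}-1)^2(\log M)^{1+\delta}$, i.e. to $t\le t_0$. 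Hence if the gap at a slot $t\le t_0$ exceeds $\kappa\sqrt{\beta(\log M)^{1+\delta}t}$, then necessarily
\[
|R_t-\mu t|>\big(\kappa\sqrt{\beta}-\sqrt{\beta}+1\big)\sqrt{(\log M)^{1+\delta}t}\ge \kappa\sqrt{(\log M)^{1+\delta}t},
\]
the last step using $\kappa\ge 1$, so that $\kappa\sqrt{\beta}-\sqrt{\beta}+1\ge\kappa$. Thus $\{T_s\le t_0\}\subseteq\bigcup_{t\le t_0}\{|R_t-\mu t|>\kappa\sqrt{(\log M)^{1+\delta}t}\}$.

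Finally I would apply Hoeffding's inequality to $R_t=\sum_{s=1}^t r_s$ with $r_s\in[0,\kappa]$ and $\E[R_t]=\mu t$: for each $t$,
\[
\Prob\big(|R_t-\mu t|>\kappa\sqrt{(\log M)^{1+\delta}t}\big)\le 2\exp\!\big(-2(\log M)^{1+\delta}\big)=2M^{-2(\log M)^\delta}.
\]
Summing over the at most $t_0$ relevant slots gives $\Prob(T_s\le t_0)\le t_0\,M^{-2(\log M)^\delta}$ (the stray factor $2$ is absorbed by the slack $\kappa\sqrt\beta-\sqrt\beta+1\ge\kappa$, strict whenever $\kappa>1$ or $\beta>1$, or by a one-sided bound on the active envelope), which rearranges into the stated estimate $\Prob(T_s>t_0)\ge 1-\frac{(\sqrt\beta-1)^2(\log M)^{1+\delta}}{M^{2(\log M)^\delta}\Delta_{\text{max}}^2}$, since $t_0\,M^{-2(\log M)^\delta}$ equals exactly that fraction.

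The main obstacle I anticipate is the reduction, not the concentration: one must argue that the $K$-dimensional event $\{T_s>t_0\}$, phrased through $\max_{i\ne j}(\Theta_{t+1,i}-\Theta_{t+1,j})$, is governed entirely by the single scalar $\hat{\mu}_t$, and then pick $t_0$ so that the deterministic drift $t\Delta_{\text{max}}$ stays below the threshold $\kappa\sqrt{\beta(\log M)^{1+\delta}t}$ by a margin of order $\kappa\sqrt{(\log M)^{1+\delta}t}$ \emph{simultaneously} for every $t\le t_0$; the choice $t_0=(\sqrt\beta-1)^2(\log M)^{1+\delta}/\Delta_{\text{max}}^2$ is precisely what makes the squared inequality $t\Delta_{\text{max}}^2\le(\sqrt\beta-1)^2(\log M)^{1+\delta}$ hold throughout. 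Tracking the exact constant in the failure probability (in particular the two-sided versus one-sided Hoeffding factor and the cardinality $\lfloor t_0\rfloor\le t_0$ of the union) is the only delicate bookkeeping.
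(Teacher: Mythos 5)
Your proof is correct and reaches the stated bound, and its probabilistic core --- Hoeffding's inequality applied to $\hat{\mu}_t=R_t/t$ at each slot, followed by a union bound over the at most $t_0$ slots in the candidate waiting window --- is the same as the paper's. Where you genuinely differ is in the deterministic reduction. The paper routes the argument through Lemma \ref{lem:T_s,mu_hat}, which rewrites the stopping condition in terms of the maximizing pair $(\tilde{i}_t,\tilde{j}_t)$, and then argues by contradiction with a four-way case split on the sign of $\alpha_{\tilde{j}_t}-\alpha_{\tilde{i}_t}$ and on which side of the threshold is crossed. You instead use the single identity $\Theta_{t+1,i}-\Theta_{t+1,j}=t(\mu_i-\mu_j)+(g(\alpha_i)-g(\alpha_j))(R_t-\mu t)$ together with the uniform bounds $|\mu_i-\mu_j|\le\Delta_{\text{max}}$ and $|g(\alpha_i)-g(\alpha_j)|\le 1$, which collapses the case analysis into the one inequality $\max_{i\ne j}(\Theta_{t+1,i}-\Theta_{t+1,j})\le t\Delta_{\text{max}}+|R_t-\mu t|$ and makes transparent why $t_0$ is chosen as it is. A further payoff of your margin computation $(\kappa-1)(\sqrt{\beta}-1)\ge 0$ is that you prove the lemma with $t_0$ exactly as written in the statement, whereas the paper's own proof actually works with $T_0=(\sqrt{\beta}-1)^2\kappa^2(\log M)^{1+\delta}/\Delta_{\text{max}}^2$, i.e.\ it carries an extra $\kappa^2$ that the lemma statement omits; your version is the one consistent with the statement.

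Two bookkeeping caveats, neither of which is a gap relative to the paper: (i) the two-sided Hoeffding bound carries a factor of $2$ that the stated failure probability does not; the paper's proof has the identical slip (it asserts $\Prob(\mathcal{E}_t)\ge 1-M^{-2(\log M)^{\delta}}$ for a two-sided event), and your proposed remedies are reasonable for $\kappa>1$ and $M$ large, though for $\kappa=1$ the factor of $2$ genuinely remains in both your argument and the paper's; (ii) the step $\kappa\sqrt{\beta}-\sqrt{\beta}+1\ge\kappa$ requires $\kappa\ge 1$, which is implicit throughout the paper (requests are integer counts and $\mu>0$) but should be stated.
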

	
\begin{proof}
	Let $\mathcal{E}_t$ be the event that $$|\mu-\hat{\mu}_t|\le \kappa\sqrt{\frac{(\log M)^{1+\delta}}{t}},$$
	 where $\hat{\mu}_t=R_t/t$. By using Hoeffding's inequality \cite{hoeffding1994probability} we get $$\Prob(\mathcal{E}_t)\ge 1-\frac{1}{M^{2(\log M)^\delta}}.$$  Let $\mathcal{E}=\cap_{t=1}^{T_0}\mathcal{E}_t$, where $T_0=\frac{(\sqrt{\beta}-1)^2 \kappa^2(\log M)^{1+\delta}}{\Delta_{\text{max}}^2}$. Using the union bound, we get: 
	 $$\Prob(\mathcal{E})\ge 1-\frac{T_0}{M^{2(\log M)^\delta}}.$$ 
	 Let $\epsilon_t=\kappa\sqrt{\frac{\beta (\log M)^{1+\delta}}{t}}.$ We prove the lemma by contradiction. Let us consider $$T_s<\frac{(\sqrt{\beta}-1)^2 \kappa^2(\log M)^{1+\delta}}{\Delta_{\text{max}}^2},$$ 
	 and $\mathcal{E}$ holds, then by Lemma \ref{lem:T_s,mu_hat}, $\exists t<\frac{(\sqrt{\beta}-1)^2 \kappa^2(\log M)^{1+\delta}}{\Delta_{\text{max}}^2}$ such that \begin{align*}
	 	 (g(\alpha_{\tilde{j}_t})-g(\alpha_{\tilde{i}_t}))\hat{\mu}_t &> -c(\alpha_{\tilde{j}_t}-\alpha_{\tilde{i}_t})+\kappa\sqrt{\frac{\beta (\log M)^{1+\delta}}{t}},\\ &\text{ or}\\ (g(\alpha_{\tilde{j}_t})-g(\alpha_{\tilde{i}_t}))\hat{\mu}_t &< -c(\alpha_{\tilde{j}_t}-\alpha_{\tilde{i}_t})-\kappa\sqrt{\frac{\beta (\log M)^{1+\delta}}{t}},
	  \end{align*}
	  holds. There are two possibilities $\alpha_{\tilde{j}_t}<\alpha_{\tilde{i}_t}$, $\alpha_{\tilde{j}_t}>\alpha_{\tilde{i}_t}$  we consider them separately to prove the final result.\\
	 \textit{Case 1 ($\alpha_{\tilde{j}_t}<\alpha_{\tilde{i}_t}$):} It follows that
	\begin{align*}
		&(g(\alpha_{\tilde{j}_t})-g(\alpha_{\tilde{i}_t}))\hat{\mu}_t > -c(\alpha_{\tilde{j}_t}-\alpha_{\tilde{i}_t})+\epsilon_t\\
		&\implies \hat{\mu}_t-\mu >\frac{c(\alpha_{\tilde{i}_t}-\alpha_{\tilde{j}_t})+\epsilon_t} {g(\alpha_{\tilde{j}_t})-g(\alpha_{\tilde{i}_t})}-\mu\\
		&\implies \kappa\sqrt{\frac{(\log M)^{1+\delta}}{t}} > \frac{-\Delta_{\text{max}}+\epsilon_t} {g(\alpha_{\tilde{j}_t})-g(\alpha_{\tilde{i}_t})}\\
		&\implies (\sqrt{\beta}-g(\alpha_{\tilde{j}_t})+g(\alpha_{\tilde{i}_t}))\kappa\sqrt{\frac{(\log M)^{1+\delta}}{t}}< \Delta_{\text{max}}\\
		&\implies (\sqrt{\beta}-1))\kappa\sqrt{\frac{(\log M)^{1+\delta}}{t}}< \Delta_{\text{max}}\\
		&\implies t>\frac{(\sqrt{\beta}-1)^2 \kappa^2 (\log M)^{1+\delta}}{\Delta_{\text{max}}^2},
	\end{align*}
	which is a contradiction. Alternatively,
	\begin{align*}
		&(g(\alpha_{\tilde{j}_t})-g(\alpha_{\tilde{i}_t}))\hat{\mu}_t < -c(\alpha_{\tilde{j}_t}-\alpha_{\tilde{i}_t})-\epsilon_t\\
		&\implies \hat{\mu}_t-\mu <\frac{c(\alpha_{\tilde{i}_t}-\alpha_{\tilde{j}_t})-\epsilon_t} {g(\alpha_{\tilde{j}_t})-g(\alpha_{\tilde{i}_t})}-\mu\\
		&\implies -\kappa\sqrt{\frac{(\log M)^{1+\delta}}{t}} < \frac{\Delta_{\text{max}}-\epsilon_t} {g(\alpha_{\tilde{j}_t})-g(\alpha_{\tilde{i}_t})}\\
		&\implies (\sqrt{\beta}-g(\alpha_{\tilde{j}_t})+g(\alpha_{\tilde{i}_t}))\kappa\sqrt{\frac{(\log M)^{1+\delta}}{t}}< \Delta_{\text{max}}\\
		&\implies (\sqrt{\beta}-1))\kappa\sqrt{\frac{(\log M)^{1+\delta}}{t}}< \Delta_{\text{max}}\\
		&\implies t>\frac{(\sqrt{\beta}-1)^2 \kappa^2(\log M)^{1+\delta}}{\Delta_{\text{max}}^2},
	\end{align*}
		which is also a contradiction. \\
	\textit{Case 2 }($\alpha_{\tilde{j}_t}>\alpha_{\tilde{i}_t}$): It follows that:
	\begin{align*}
		&	(g(\alpha_{\tilde{j}_t})-g(\alpha_{\tilde{i}_t}))\hat{\mu}_t > -c(\alpha_{\tilde{j}_t}-\alpha_{\tilde{i}_t})+\epsilon_t\\
		&\implies \hat{\mu}_t-\mu <\frac{c(\alpha_{\tilde{j}_t}-\alpha_{\tilde{i}_t})-\epsilon_t} {g(\alpha_{\tilde{i}_t})-g(\alpha_{\tilde{j}_t})}-\mu\\
		&\implies -\kappa\sqrt{\frac{(\log M)^{1+\delta}}{t}} < \frac{\Delta_{\text{max}}-\epsilon_t} {g(\alpha_{\tilde{i}_t})-g(\alpha_{\tilde{j}_t})}\\
		&\implies t>\frac{(\sqrt{\beta}-1)^2 \kappa^2(\log M)^{1+\delta}}{\Delta_{\text{max}}^2},
	\end{align*}
	which is a contradiction. Alternatively,
	\begin{align*}
	&(g(\alpha_{\tilde{j}_t})-g(\alpha_{\tilde{i}_t}))\hat{\mu}_t < -c(\alpha_{\tilde{j}_t}-\alpha_{\tilde{i}_t})-\epsilon_t\\
	&\implies \hat{\mu}_t-\mu >\frac{c(\alpha_{\tilde{j}_t}-\alpha_{\tilde{i}_t})+\epsilon_t} {g(\alpha_{\tilde{i}_t})-g(\alpha_{\tilde{j}_t})}-\mu\\
	&\implies \kappa\sqrt{\frac{(\log M)^{1+\delta}}{t}} > \frac{-\Delta_{\text{max}}+\epsilon_t}{g(\alpha_{\tilde{i}_t})-g(\alpha_{\tilde{j}_t})}\\
	&\implies t>\frac{(\sqrt{\beta}-1)^2 \kappa^2(\log M)^{1+\delta}}{\Delta_{\text{max}}^2},
	\end{align*}
	which is also a contradiction. 
	
	Therefore $T_s>\frac{(\sqrt{\beta}-1)^2 \kappa^2(\log M)^{1+\delta}}{\Delta_{\text{max}}^2}$, if $\mathcal{E}$ occurs, which happens with probability at least $1-\frac{(\sqrt{\beta}-1)^2 \kappa^2(\log M)^{1+\delta}}{M^{2(\log M)^\delta}\Delta_{\text{max}}^2}$.
\end{proof}

\begin{lemma}
	Under the W-FTPL policy, $\Prob(T_s>t)\le \exp(-\frac{\Delta_{\text{min}}^2 t}{2\kappa^2})$ for $t>\frac{4\beta \kappa^2(\log M)^{1+\delta}}{\Delta_{\text{min}}^2}$. \label{lem:Ts>T_0}
\end{lemma}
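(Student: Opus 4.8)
The plan is to reduce the event $\{T_s>t\}$ to a \emph{one-sided} large-deviation event for the empirical arrival rate $\hat{\mu}_t=R_t/t$ and then to finish with Hoeffding's inequality. First I would record the identity that makes the analysis tractable: regardless of the hosting decisions, the W-FTPL update gives $\Theta_{t+1,i}=ct\alpha_i+R_t g(\alpha_i)=t\,\hat{\mu}_{t,i}$, where $\hat{\mu}_{t,i}:=c\alpha_i+g(\alpha_i)\hat{\mu}_t$. Hence $\max_{i\neq j}(\Theta_{t+1,i}-\Theta_{t+1,j})=t\big(\max_i\hat{\mu}_{t,i}-\min_j\hat{\mu}_{t,j}\big)$. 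Since $T_s$ is the first slot at which the switching condition holds (equivalently Lemma \ref{lem:T_s,mu_hat}), the event $\{T_s>t\}$ forces this condition to fail at slot $t$ in particular, so that
\[
\max_i\hat{\mu}_{t,i}-\min_j\hat{\mu}_{t,j}\ \le\ \epsilon_t,\qquad \epsilon_t:=\kappa\sqrt{\tfrac{\beta(\log M)^{1+\delta}}{t}}.
\]
For $t>\tfrac{4\beta\kappa^2(\log M)^{1+\delta}}{\Delta_{\text{min}}^2}$ the threshold obeys $\epsilon_t<\Delta_{\text{min}}/2$, which is exactly the regime in the statement.

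Next I would lower-bound the spread using a single, instance-fixed pair of hosting levels. Let $i^\star=\argmin_i\mu_i$ and $j^\star=\argmax_i\mu_i$, so that $\mu_{j^\star}-\mu_{i^\star}=\Delta_{\text{max}}\ge\Delta_{\text{min}}$, and set $\xi:=g(\alpha_{j^\star})-g(\alpha_{i^\star})$, a fixed constant with $|\xi|\le 1$. A direct computation gives $\hat{\mu}_{t,j^\star}-\hat{\mu}_{t,i^\star}=\Delta_{\text{max}}+\xi(\hat{\mu}_t-\mu)$, and since $\max_i\hat{\mu}_{t,i}-\min_j\hat{\mu}_{t,j}$ dominates this difference, the event $\{T_s>t\}$ forces $\Delta_{\text{max}}+\xi(\hat{\mu}_t-\mu)\le\epsilon_t<\Delta_{\text{min}}/2\le\Delta_{\text{max}}/2$. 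This yields $\xi(\hat{\mu}_t-\mu)<-\Delta_{\text{min}}/2$, and because $|\xi|\le 1$ it follows that $|\hat{\mu}_t-\mu|>\Delta_{\text{min}}/2$. The key point is that $\xi$ is a fixed nonzero constant of the instance (if $\xi=0$ then the spread equals $\Delta_{\text{max}}>\epsilon_t$ deterministically, so $\{T_s>t\}$ is empty and the bound is trivial), so the deviation has a \emph{predetermined sign}: $\hat{\mu}_t-\mu<-\Delta_{\text{min}}/2$ when $\xi>0$, and $\hat{\mu}_t-\mu>\Delta_{\text{min}}/2$ when $\xi<0$.

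Finally, applying the corresponding one-sided Hoeffding bound to the i.i.d.\ bounded increments $r_s\in[0,\kappa]$ gives
\[
\Prob(T_s>t)\ \le\ \exp\!\Big(-\tfrac{2t(\Delta_{\text{min}}/2)^2}{\kappa^2}\Big)\ =\ \exp\!\Big(-\tfrac{\Delta_{\text{min}}^2 t}{2\kappa^2}\Big),
\]
which is exactly the claimed estimate. I expect the main obstacle to be precisely this \emph{one-sidedness} argument: a naive two-sided Hoeffding bound on $|\hat{\mu}_t-\mu|>\Delta_{\text{min}}/2$ would produce an extra factor of $2$ and miss the stated constant, so the proof must commit to the fixed pair $(i^\star,j^\star)$ and exploit that $\xi(\hat{\mu}_t-\mu)$ can only be driven negative in one direction. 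The remaining steps — the $\Theta_{t+1}$ identity, translating $\{T_s>t\}$ into the single-slot spread inequality, and checking that the hypothesis on $t$ guarantees $\epsilon_t<\Delta_{\text{min}}/2$ — are routine.
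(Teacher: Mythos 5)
Your proof is correct, and while it shares the overall architecture of the paper's argument, the middle step is a genuinely different decomposition. Both proofs start by noting that $\{T_s>t\}$ forces the stopping condition to fail at slot $t$, which via $\Theta_{t+1,i}=t\bigl(c\alpha_i+g(\alpha_i)\hat{\mu}_t\bigr)$ bounds the empirical spread $\max_i\hat{\mu}_{t,i}-\min_j\hat{\mu}_{t,j}$ by $\epsilon_t=\kappa\sqrt{\beta(\log M)^{1+\delta}/t}<\Delta_{\text{min}}/2$, and both finish with a one-sided Hoeffding bound yielding exactly $\exp(-\Delta_{\text{min}}^2t/2\kappa^2)$. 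The paper, however, works with the \emph{random} maximizing pair $(\tilde{i}_t,\tilde{j}_t)=\argmax_{i\ne j}\Theta_{t+1,i}-\Theta_{t+1,j}$ and runs a four-way case analysis on the signs of $\alpha_{\tilde{j}_t}-\alpha_{\tilde{i}_t}$ and $\Delta_{\tilde{i}_t\tilde{j}_t}$, in each case replacing $|\Delta_{\tilde{i}_t\tilde{j}_t}|$ by $\Delta_{\text{min}}$ to extract a one-sided deviation of $\hat{\mu}_t$. You instead commit to the deterministic pair $(i^\star,j^\star)$ realizing $\Delta_{\text{max}}$, use that the maximal spread dominates $\hat{\mu}_{t,j^\star}-\hat{\mu}_{t,i^\star}=\Delta_{\text{max}}+\xi(\hat{\mu}_t-\mu)$, and read off a deviation of magnitude at least $\Delta_{\text{min}}/2$ whose direction is fixed in advance by the sign of the instance constant $\xi$ (with the $\xi=0$ degeneracy correctly dismissed since then the spread is deterministically $\Delta_{\text{max}}>\epsilon_t$). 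This buys two things: the random case selection collapses to a single deterministic dichotomy, and you sidestep the paper's implicit use of $|\mu_{\tilde{i}_t}-\mu_{\tilde{j}_t}|\ge\Delta_{\text{min}}$ for an arbitrary (random) pair of levels, a bound that does not follow from $\Delta_{\text{min}}=\min_{i\ne i^\star}\Delta_i$ when both levels are suboptimal. Your remark that the predetermined sign is what avoids a factor of $2$ from a two-sided bound is also exactly the right observation, and is the same mechanism the paper exploits, case by case.
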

\begin{proof} For $t>\frac{4\beta \kappa^2 (\log M)^{1+\delta}}{\Delta_{\text{min}}^2}$, we have $\frac{\Delta_{\text{min}}}{2}>\kappa\sqrt{\frac{\beta (\log M)^{1+\delta}}{t}}$. For $T_s>t$, we have
		\begin{align*} 
		(g(\alpha_{\tilde{j}_t})-g(\alpha_{\tilde{i}_t}))\hat{\mu}_t &< -c(\alpha_{\tilde{j}_t}-\alpha_{\tilde{i}_t})+\kappa\sqrt{\frac{\beta (\log M)^{1+\delta}}{t}},\\
		 \text{ and}\\
		(g(\alpha_{\tilde{j}_t})-g(\alpha_{\tilde{i}_t}))\hat{\mu}_t &> -c(\alpha_{\tilde{j}_t}-\alpha_{\tilde{i}_t})-\kappa\sqrt{\frac{\beta (\log M)^{1+\delta}}{t}}.
	\end{align*}
	Let us consider the following cases \\
	\textit{Case 1} ($\alpha_{\tilde{j}_t}<\alpha_{\tilde{i}_t}$, $\Delta_{\tilde{i}_t\tilde{j}_t}>0$):\\
	\begin{align*}
		\Prob(T_s>t)&\le\Prob\left(\hat{\mu}_t > \frac{-c(\alpha_{\tilde{j}_t}-\alpha_{\tilde{i}_t})- \kappa\sqrt{\frac{\beta (\log M)^{1+\delta}}{t}}}{(g(\alpha_{\tilde{j}_t})-g(\alpha_{\tilde{i}_t}))}\right)\\
		&\le \Prob\left( \hat{\mu}_t-\mu>\frac{\Delta_{\tilde{i}_t\tilde{j}_t}-\kappa\sqrt{\frac{\beta (\log M)^{1+\delta}}{t}}}{g(\alpha_{\tilde{j}_t})-g(\alpha_{\tilde{i}_t})}\right) \\
		&\le \Prob\left( \hat{\mu}_t-\mu>\frac{\Delta_{\text{min}}-\Delta_{\text{min}}/2} {g(\alpha_{\tilde{j}_t})-g(\alpha_{\tilde{i}_t})}\right) \\
		&\le \Prob\left( \hat{\mu}_t-\mu>\frac{\Delta_{\text{min}}}{2}\right) \\
		&\overset{(a)}{\le} \exp\left(\frac{-\Delta_{\text{min}}^2 t}{2\kappa^2}\right),
	\end{align*}
	where $(a)$ is obtained by using Hoeffding's inequality \cite{hoeffding1994probability}.\\
	\textit{Case 2} ($\alpha_{\tilde{j}_t}<\alpha_{\tilde{i}_t}$, $\Delta_{\tilde{i}_t\tilde{j}_t}<0$):
		\begin{align*}
		\Prob(T_s>t)&\le\Prob\left(\hat{\mu}_t < \frac{-c(\alpha_{\tilde{j}_t}-\alpha_{\tilde{i}_t})+ \kappa\sqrt{\frac{\beta (\log M)^{1+\delta}}{t}}}{(g(\alpha_{\tilde{j}_t})-g(\alpha_{\tilde{i}_t}))}\right)\\
		&\le \Prob\left( \hat{\mu}_t-\mu<\frac{\Delta_{\tilde{i}_t\tilde{j}_t}+\kappa\sqrt{\frac{\beta (\log M)^{1+\delta}}{t}}}{g(\alpha_{\tilde{j}_t})-g(\alpha_{\tilde{i}_t})}\right) \\
		&\le \Prob\left( \hat{\mu}_t-\mu<\frac{-\Delta_{\text{min}}+\Delta_{\text{min}}/2} {g(\alpha_{\tilde{j}_t})-g(\alpha_{\tilde{i}_t})}\right) \\
		&\le \Prob\left( \hat{\mu}_t-\mu<\frac{-\Delta_{\text{min}}}{2}\right) \\
		&\overset{(a)}{\le} \exp\left(\frac{-\Delta_{\text{min}}^2 t}{2\kappa^2}\right).
	\end{align*}	
	 where $(a)$ is obtained by using Hoeffding's inequality \cite{hoeffding1994probability}.\\
	\textit{Case 3} ($\alpha_{\tilde{j}_t}>\alpha_{\tilde{i}_t}$, $\Delta_{\tilde{i}_t\tilde{j}_t}>0$):\\
	\begin{align*}
	\Prob(T_s>t)&\le\Prob\left(\hat{\mu}_t < \frac{-c(\alpha_{\tilde{j}_t}-\alpha_{\tilde{i}_t})- \kappa\sqrt{\frac{\beta (\log M)^{1+\delta}}{t}}}{(g(\alpha_{\tilde{j}_t})-g(\alpha_{\tilde{i}_t}))}\right)\\
	&\le \Prob\left( \hat{\mu}_t-\mu<\frac{-\Delta_{\tilde{i}_t\tilde{j}_t}+\kappa\sqrt{\frac{\beta (\log M)^{1+\delta}}{t}}}{g(\alpha_{\tilde{i}_t})-g(\alpha_{\tilde{j}_t})}\right) \\
	&\le \Prob\left( \hat{\mu}_t-\mu<\frac{-\Delta_{\text{min}}+\kappa\sqrt{\frac{\beta (\log M)^{1+\delta}}{t}}}{g(\alpha_{\tilde{i}_t})-g(\alpha_{\tilde{j}_t})}\right) \\
	&\le \Prob\left( \hat{\mu}_t-\mu<\frac{-\Delta_{\text{min}}/2} {g(\alpha_{\tilde{i}_t})-g(\alpha_{\tilde{j}_t})}\right) \\
	&\le \Prob\left( \hat{\mu}_t-\mu<-\Delta_{\text{min}}/2\right) \\
	&\overset{(a)}{\le} \exp\left(\frac{-\Delta_{\text{min}}^2 t}{2\kappa^2}\right).
	\end{align*}
	 where $(a)$ is obtained by using Hoeffding's inequality \cite{hoeffding1994probability}.\\
	\textit{Case 4} ($\alpha_{\tilde{j}_t}>\alpha_{\tilde{i}_t}$, $\Delta_{\tilde{i}_t\tilde{j}_t}<0$):\\
	\begin{align*}
	\Prob(T_s>t)&\le\Prob\left(\hat{\mu}_t > \frac{-c(\alpha_{\tilde{j}_t}-\alpha_{\tilde{i}_t})+ \kappa\sqrt{\frac{\beta (\log M)^{1+\delta}}{t}}}{(g(\alpha_{\tilde{j}_t})-g(\alpha_{\tilde{i}_t}))}\right)\\
	&\le \Prob\left( \hat{\mu}_t-\mu>\frac{-\Delta_{\tilde{i}_t\tilde{j}_t}-\kappa\sqrt{\frac{\beta (\log M)^{1+\delta}}{t}}}{g(\alpha_{\tilde{i}_t})-g(\alpha_{\tilde{j}_t})}\right) \\
	&\le \Prob\left( \hat{\mu}_t-\mu>\frac{\Delta_{\text{min}}-\Delta_{\text{min}}/2} {g(\alpha_{\tilde{i}_t})-g(\alpha_{\tilde{j}_t})}\right) \\
	&\le \Prob\left( \hat{\mu}_t-\mu>\frac{\Delta_{\text{min}}}{2}\right) \\
	&\overset{(a)}{\le} \exp\left(\frac{-\Delta_{\text{min}}^2 t}{2\kappa^2}\right), 
	\end{align*}
	 where $(a)$ is obtained by using Hoeffding's inequality \cite{hoeffding1994probability}.
	The result follows by combining these cases. 
\end{proof}

\begin{lemma}
	Under the W-FTPL policy 
	$$\E[T_s]\le 1+\frac{4\beta \kappa^2(\log M)^{1+\delta}}{\Delta_{\text{min}}^2} + \frac{1}{M^{2\beta}}\frac{2\kappa^2}{\Delta_{\text{min}}^2}.$$
	\label{lem:E[T_s]}
\end{lemma}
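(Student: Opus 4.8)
The plan is to bound $\E[T_s]$ via the standard tail-sum representation for a nonnegative integer-valued random variable. Since the wait counter is initialized to $1$ in Algorithm \ref{alg:W-FTPL}, we have $T_s\ge 1$, so
\[
\E[T_s]=\sum_{t=0}^{\infty}\Prob(T_s>t).
\]
I would set $T_0=\frac{4\beta\kappa^2(\log M)^{1+\delta}}{\Delta_{\text{min}}^2}$, the threshold appearing in Lemma \ref{lem:Ts>T_0}, and split this sum into the regime $t\le T_0$ and the regime $t>T_0$, bounding each separately.

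For the first regime I would simply use the trivial bound $\Prob(T_s>t)\le 1$. The number of integers $t$ with $0\le t\le T_0$ is $\lfloor T_0\rfloor+1\le T_0+1$, so this part contributes at most $1+\frac{4\beta\kappa^2(\log M)^{1+\delta}}{\Delta_{\text{min}}^2}$, which is exactly the first two terms of the claimed bound. For the tail $t>T_0$ I would invoke Lemma \ref{lem:Ts>T_0}, which gives $\Prob(T_s>t)\le \exp\!\big(-\Delta_{\text{min}}^2 t/(2\kappa^2)\big)$ in precisely this range. Writing $a=\Delta_{\text{min}}^2/(2\kappa^2)$ and summing the resulting geometric series, the leading factor is $e^{-aT_0}=\exp\!\big(-2\beta(\log M)^{1+\delta}\big)=M^{-2\beta(\log M)^{\delta}}$; since $\delta\ge0$ and $M\ge e$ give $(\log M)^{\delta}\ge1$, this is at most $M^{-2\beta}$. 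Combining with the elementary estimate $\sum_{t\ge1}e^{-at}\le 1/a=2\kappa^2/\Delta_{\text{min}}^2$ (the same bound used at the end of Lemma \ref{lem:FTPL_reg_stch}) yields a tail contribution of order $\frac{1}{M^{2\beta}}\frac{2\kappa^2}{\Delta_{\text{min}}^2}$, completing the claimed inequality.

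The one delicate point, which I expect to be the main (though minor) obstacle, is the geometric-series and discretization bookkeeping in the tail: the series over integers exceeding $T_0$ actually sums to $e^{-am}/(1-e^{-a})$ with $m=\lfloor T_0\rfloor+1$, and the factor $1/(1-e^{-a})$ is slightly larger than the clean $1/a$. I would absorb this discrepancy into the $M^{-2\beta}$ prefactor, using that $a=\Delta_{\text{min}}^2/(2\kappa^2)\le \tfrac12$ (because $\mu_i\le\kappa$ by Assumption \ref{assump:cost<k}, hence $\Delta_{\text{min}}\le\kappa$) to control $1/(1-e^{-a})\le e^{a}/a$, and exploiting the gap between $M^{-2\beta(\log M)^{\delta}}$ and $M^{-2\beta}$, which for $\delta>0$ and $M$ large enough (as assumed in Theorem \ref{thm:reg_stch}(c)) leaves ample slack to swallow the constant $e^{a}$. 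The remainder of the argument is routine arithmetic.
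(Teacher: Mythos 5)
Your proposal is correct and follows essentially the same route as the paper: the tail-sum formula for $\E[T_s]$, a trivial bound of $1$ on $\Prob(T_s>t)$ up to the threshold $T_0=\frac{4\beta\kappa^2(\log M)^{1+\delta}}{\Delta_{\text{min}}^2}$ from Lemma \ref{lem:Ts>T_0}, and the exponential tail bound summed as a geometric series beyond it. If anything, you are more careful than the paper about the discretization constant $1/(1-e^{-a})$ versus $1/a$, correctly absorbing the slack into the gap between $M^{-2\beta(\log M)^{\delta}}$ and $M^{-2\beta}$.
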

\begin{proof}Let $T_1=\lceil \frac{4\beta \kappa^2 (\log M)^{1+\delta}}{\Delta_{\text{min}}^2}\rceil$. It follows that
	\begin{align*}
		\E[T_s]&=\sum_{t=1}^{T}\Prob(T_s>t)\\
		&=\sum_{t=1}^{T_1}\Prob(T_s>t)+\sum_{t=T_1+1}^{T}\Prob(T_s>t)\\
		&\overset{(a)}{\le} T_1+\sum_{t=T_1}^{T}\exp\left(-\frac{\Delta_{\text{min}}^2 t}{2\kappa^2}\right)\\
		&\le 1+\frac{4\beta \kappa^2(\log M)^{1+\delta}}{\Delta_{\text{min}}^2} + \frac{1}{M^{2\beta(\log M)^\delta}}\frac{2\kappa^2}{\Delta_{\text{min}}^2},
	\end{align*}
	 where $(a)$ is obtained by using Lemma \ref{lem:Ts>T_0}.
\end{proof}

\begin{proof}[Proof of Theorem \ref{thm:reg_stch}(c)]
	Under W-FTPL,
	\begin{align*}
		\E_r[\mathcal{C}^\text{W-FTPL}(T,r)]=& \E_r[T_s]+ \E_r\left[ \sum_{t=T_s+1}^{T}(c\rho_t+ g(\rho_t)r_t)\right] \\
		&+M\E_r\left[ \sum_{t=T_s+1}^T \Prob(\rho_t>\rho_{t-1})\right] .	
	\end{align*}
	By the using definition of regret we get,
	\begin{align*}
		\mathcal{R}^{\text{W-FTPL}}_S(T)\le& \E[T_s]+\sum_{t=1}^{T}(c\rho_t+\mu g(\rho_t))-\mu_{i^\star}\\
		&+M\E_r\left[ \sum_{t=T_s+1}^T \Prob(\rho_t>\rho_{t-1})\right] .
	\end{align*}
	 By Lemma \ref{lem:FTPL_reg_stch}, we get
	\begin{align*}
		\mathcal{R}^{\text{W-FTPL}}_S(T)\le& \E[T_s]+ \frac{16\alpha^2+4\kappa^2}{\Delta_{\text{min}}}\\
		&+\left( \sqrt{2\log K}+\frac{2\sqrt{2h_1}\log K}{\Delta_{\text{min}}} \right) \left( \alpha  + \frac{4\kappa^2}{\alpha}\right)\\
		&+M\E_r\left[ \sum_{t=T_s+1}^T \Prob(\rho_t>\rho_{t-1})\right] .		
	\end{align*}
	Let $\E_r[\mathcal{C}_f]=M\E_r\left[ \sum_{t=T_s+1}^T \Prob(\rho_t>\rho_{t-1})\right] $, then 
	\begin{align*}
		\E_r[\mathcal{C}_f]=&\E_r[\mathcal{C}_f|T_s\le T_0]\Prob(T_s\le T_0)\\
		&+\E_r[\mathcal{C}_f|T_s> T_0]\Prob(T_s>T_0)\\
		\le &\E[\mathcal{C}_f|T_s=1]\Prob(T_s\le T_0)+\E[\mathcal{C}_f|T_s= \lceil T_0 \rceil]\\
		\overset{(a)}{\le} &M \Prob(T_s\le T_0)\E[\mathcal{C}_F^{\text{FTPL}}]\\
		&+M \sum_{t=\lceil T_0\rceil}^T \left( \exp\left( -\frac{\Delta_{\text{min}}^2t}{16\alpha^2}\right)+e^{-\Delta_{\text{min}}^2t/2\kappa^2}\right)\\
		\le & M \Prob(T_s\le T_0) \frac{16\alpha^2+3\kappa^2}{\Delta_{\text{min}}^2}\\
		&+ \frac{16\alpha^2}{\Delta_{\text{min}}^2}\exp\left( -\frac{\Delta_{\text{min}}^2T_0}{16\alpha^2}\right)\\
		&+\frac{2\kappa^2}{\Delta_{\text{min}}^2}\exp\left(-\frac{\Delta_{\text{min}}^2T_0}{2\kappa^2}\right).
	\end{align*}
	Here, $(a)$ is obtained by using inequality \eqref{eq:swFTPL_inq}.
	By considering $T_0=\frac{(\sqrt{\beta}-1)^2 \kappa^2(\log M)^{1+\delta}}{\Delta_{\text{max}}^2}$ and using Lemma \ref{lem:Ts>t0}, \ref{lem:E[T_s]} we get,
	\begin{align*}
	\mathcal{R}^{\text{W-FTPL}}_S&(T) \\
	\le& 1+\frac{4\beta \kappa^2(\log M)^{1+\delta}}{\Delta_{\text{min}}^2} + \frac{1}{M^{2\beta(\log M)^\delta}}\frac{2\kappa^2	}{\Delta_{\text{min}}^2}\\
	&+\frac{(16\alpha^2+4\kappa^2)}{\Delta_{\text{min}}}\\
	&+\left( \sqrt{2\log K}+\frac{2\sqrt{2h_1}\log K}{\Delta_{\text{min}}} \right) \left( \alpha  + \frac{4\kappa^2}{\alpha}\right)\\
	&+M\frac{(\sqrt{\beta}-1)^2\kappa^2 (\log M)^{1+\delta}}{\Delta_{\text{max}}^2 M^{2(\log M)^\delta}}\frac{16\alpha^2+3\kappa^2}{\Delta_{\text{min}}^2}\\
	&+\left( \frac{16\alpha^2 M}{M^{\frac{(\sqrt{\beta}-1)^2\kappa^2\Delta_{\text{min}}^2(\log M)^\delta}{16\alpha^2\Delta_{\text{max}}^2}}}+ \frac{2M\kappa^2}{M^{\frac{(\sqrt{\beta}-1)^2\Delta_{\text{min}}^2(\log M)^\delta}{2\Delta_{\text{max}}^2}}}	\right)\\ &\times \frac{1}{\Delta_{\text{min}}^2}.
	\end{align*}

	For large values of $M$ we have
	\begin{align*}
		\mathcal{R}^{\text{W-FTPL}}_S&(T) \\
		\le& 1+\frac{4\beta \kappa^2(\log M)^{1+\delta}}{\Delta_{\text{min}}^2} +\frac{2\kappa^2	}{\Delta_{\text{min}}^2}\\
		&+\frac{(16\alpha^2+4\kappa^2)}{\Delta_{\text{min}}}\\
		&+\left( \sqrt{2\log K}+\frac{2\sqrt{2h_1}\log K}{\Delta_{\text{min}}} \right) \left( \alpha  + \frac{4\kappa^2}{\alpha}\right)\\
		&+\frac{(\sqrt{\beta}-1)^2\kappa^2 (\log M)^{1+\delta}}{\Delta_{\text{max}}^2 }\frac{16\alpha^2+3\kappa^2}{\Delta_{\text{min}}^2}\\
		&+\frac{16\alpha^2 + 2\kappa^2}{\Delta_{\text{min}}^2}\\
		\le&1+\beta \kappa^2(\log M)^{1+\delta}\left(\frac{4}{\Delta_{\text{min}^2}}+\frac{16\alpha^2+3\kappa^2}{\Delta_{\text{min}^2}\Delta_{\text{max}}^2}\right)\\
		&+(16\alpha^2+4\kappa^2)\left(\frac{1}{\Delta_{\text{min}}}+\frac{1}{\Delta_{\text{min}^2}}\right)\\
		&+\left( \sqrt{2\log K}+\frac{2\sqrt{2h_1}\log K}{\Delta_{\text{min}}} \right) \left( \alpha  + \frac{4\kappa^2}{\alpha}\right).
	\end{align*}
\end{proof}

\subsection{Proof of Theorem \ref{thm:comp_adv}(a)} 

\begin{lemma}
	The cost of the offline optimal policy is lower bounded as follows 
	$$\mathcal{C}^\text{OPT-OFF}(T,r)\ge \frac{R_T}{\kappa^2}(\min_i{c\alpha_{i}+g(\alpha_{i})\kappa}).$$
	\label{lem:opt_offline_cost_LB}
\end{lemma}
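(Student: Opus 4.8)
The plan is to lower-bound the offline optimal cost by first discarding the (non-negative) fetch term and then controlling the remaining rent-plus-service cost one slot at a time. Write $\rho_t$ for the hosting level that the offline optimal policy uses in slot $t$; by \eqref{eq:cost_slot} the fetch contribution $M(\rho_t-\rho_{t-1})^+$ is non-negative, so $\mathcal{C}^{\text{OPT-OFF}}(T,r)\ge \sum_{t=1}^T\bigl(c\rho_t+g(\rho_t)r_t\bigr)$. It therefore suffices to bound each summand from below, uniformly over the level the policy happens to select.

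The crux is a per-slot inequality: for \emph{any} hosting level $\alpha_j$ and any $r_t$ with $0\le r_t\le\kappa$, I claim $c\alpha_j+g(\alpha_j)r_t\ge \tfrac{r_t}{\kappa}\bigl(c\alpha_j+g(\alpha_j)\kappa\bigr)$. This holds because the difference of the two sides is exactly $c\alpha_j\bigl(1-r_t/\kappa\bigr)$, which is non-negative since $r_t\le\kappa$ and $c,\alpha_j\ge 0$. Because this is valid for whichever level the offline optimum picks, I may bound $c\alpha_j+g(\alpha_j)\kappa$ below by its minimum over $i$, giving $c\rho_t+g(\rho_t)r_t\ge \tfrac{r_t}{\kappa}\min_i\{c\alpha_i+g(\alpha_i)\kappa\}$ for every slot.

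Summing over $t$ and using $\sum_t r_t=R_T$ yields $\mathcal{C}^{\text{OPT-OFF}}(T,r)\ge \tfrac{R_T}{\kappa}\min_i\{c\alpha_i+g(\alpha_i)\kappa\}$. Finally, since the $r_t$ are request counts with $R_T\ge 1$, at least one slot carries $r_t\ge 1$, forcing $\kappa\ge 1$ and hence $\tfrac{1}{\kappa}\ge\tfrac{1}{\kappa^2}$; the stated bound with $\kappa^2$ in the denominator then follows a fortiori.

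I do not anticipate a serious obstacle. The only point requiring care is that the bound must hold against the \emph{dynamic} offline optimum, not a static policy — this is precisely why I phrase the per-slot inequality so that it is uniform over all admissible levels $\alpha_j$, which legitimizes discarding the fetch cost and bounding the slots independently. The gap between the $\kappa$ that the argument naturally produces and the $\kappa^2$ appearing in the statement is simply a matter of the stated bound being the looser of the two.
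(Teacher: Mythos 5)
Your proof is correct, and it follows the same overall skeleton as the paper's: drop the non-negative fetch term, lower-bound the rent-plus-service cost slot by slot uniformly over the hosting level, and sum. The difference is in the per-slot inequality. The paper bounds $\min_i(c\alpha_i+g(\alpha_i)r_t)\ge \min_i(c\alpha_i/\kappa+g(\alpha_i))\,\mathds{1}_{\{r_t\ge1\}}\ge \min_i(c\alpha_i/\kappa+g(\alpha_i))\,r_t/\kappa$, passing through an indicator and paying a factor of $1/\kappa$ at each of two steps, which is where the $\kappa^2$ in the denominator comes from. Your inequality $c\alpha_j+g(\alpha_j)r_t\ge \tfrac{r_t}{\kappa}\bigl(c\alpha_j+g(\alpha_j)\kappa\bigr)$ (whose slack is exactly $c\alpha_j(1-r_t/\kappa)\ge0$) avoids the indicator entirely and yields the strictly stronger bound $\mathcal{C}^{\text{OPT-OFF}}(T,r)\ge \tfrac{R_T}{\kappa}\min_i\{c\alpha_i+g(\alpha_i)\kappa\}$; the stated $\kappa^2$ version then follows since $\kappa\ge1$ (which the paper's own proof also uses implicitly in replacing $c\alpha_i$ by $c\alpha_i/\kappa$). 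So your argument is a modest sharpening rather than a departure: it would improve the constants in Theorem \ref{thm:comp_adv} by a factor of $\kappa$ if propagated forward.
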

\begin{proof}
	The cost incurred by the optimal offline policy with $M>0$ is lower bounded by the cost incurred by the optimal offline policy when $M=0$. For $M>0$, there is an additional fetching cost. Therefore, 
	\begin{align*}
		\mathcal{C}^\text{OPT-OFF}(T,r)&\ge \sum_{t=1}^T \min_i (c\alpha_{i}+g(\alpha_{i})r_t)\\
		&\ge \sum_{t=1}^T \min_i (c\alpha_{i}/\kappa+g(\alpha_{i})r_t)\\
		&\ge \sum_{t=1}^T \min_i (c\alpha_{i}/\kappa+g(\alpha_{i}))\mathds{1}_{\{r_t\ge 1\}}\\
		&\ge \sum_{t=1}^T \min_i (c\alpha_{i}/\kappa+g(\alpha_{i})) r_t/\kappa\\
		&\ge \frac{R_T}{\kappa^2}(\min_i{c\alpha_{i}+g(\alpha_{i})\kappa}).
	\end{align*}
\end{proof}

\begin{lemma}
	The expected cost of FTPL policy for any request sequence $r$ for $\eta_{t}=\alpha\sqrt{t-1}$, $\alpha>0$, can be upper bounded as follows 
	\begin{align*}
		\E[\mathcal{C}^{\text{FTPL}}(T,r)]\le \left((3+2M/c)\max_{\alpha_{i}\ne 0}\frac{1-g(\alpha_{i})}{\alpha_{i}}\right) R_T\\+(c+M)\sum_{i=2}^K\frac{16\alpha^2}{c^2\alpha_{i}^2}.
	\end{align*}
	\label{lem:FTPL_comp_adv}
\end{lemma}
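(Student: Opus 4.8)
The plan is to split $\E[\mathcal{C}^{\text{FTPL}}(T,r)]$ into its rent, service and fetch components and bound each in terms of $R_T$. Write $A=\max_{\alpha_i\ne0}\frac{1-g(\alpha_i)}{\alpha_i}$; by Assumption \ref{assump:g(f)_convex} we have $1-g(\alpha_i)\ge\alpha_i$, so $A\ge1$. The service cost is immediate: since $g$ is decreasing with $g(0)=1$, $\sum_t g(\rho_t)r_t\le\sum_t r_t=R_T\le A R_T$. For the fetch cost I would avoid the $\sqrt{T}$-type transition bound of Lemma \ref{lem:FTPL_swcost_adv} (which is useless when $R_T$ is small) and instead relate fetching to renting pathwise: since $\rho_{t-1}\ge0$, $(\rho_t-\rho_{t-1})^+\le\rho_t$, hence $M\sum_{t=1}^T(\rho_t-\rho_{t-1})^+\le M\sum_{t=1}^T\rho_t=\frac{M}{c}\big(c\sum_t\rho_t\big)$; that is, the fetch cost is at most $M/c$ times the rent cost. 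Everything therefore reduces to bounding the expected rent $c\,\E[\sum_t\rho_t]=\sum_{i\ge2}c\alpha_i\sum_t\Prob(\rho_t=\alpha_i)$ by $2A R_T$ plus a constant.

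To bound the rent I would use that a necessary condition for FTPL to host $\alpha_i$ in slot $t$ is that level $i$ beats the zero level, i.e. $\Theta_{t,i}+\eta_t\gamma_i\le\Theta_{t,1}+\eta_t\gamma_1$; with $\Theta_{t,i}=c\alpha_i(t-1)+g(\alpha_i)R_{t-1}$ and $\Theta_{t,1}=R_{t-1}$ this reads $\eta_t(\gamma_1-\gamma_i)\ge c\alpha_i(t-1)-(1-g(\alpha_i))R_{t-1}=:X_{t,i}$. I would then split the horizon according to the empirical request rate $\hat\mu_{t-1}=R_{t-1}/(t-1)$ against the threshold $c/(2A)$.

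On the dense slots where $\hat\mu_{t-1}\ge c/(2A)$ I bound $\Prob(\rho_t=\alpha_i)\le1$ and $c\rho_t\le c$; the number of such slots is controlled because the largest dense index $t^\star$ satisfies $R_T\ge R_{t^\star-1}\ge\frac{c}{2A}(t^\star-1)$, so there are at most $\frac{2A}{c}R_T+1$ of them, contributing at most $2A R_T+c$ to the rent. On the sparse slots where $\hat\mu_{t-1}<c/(2A)$, the definition of $A$ gives $(1-g(\alpha_i))R_{t-1}\le\alpha_i A\,R_{t-1}<\tfrac12 c\alpha_i(t-1)$ for every level $i$, so $X_{t,i}>\tfrac12 c\alpha_i(t-1)>0$; the Gaussian tail bound $Q(x)\le e^{-x^2/2}$ together with $\eta_t=\alpha\sqrt{t-1}$ then yields $\Prob(\rho_t=\alpha_i)\le\exp\!\big(-\frac{c^2\alpha_i^2(t-1)}{16\alpha^2}\big)$. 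Summing the geometric series and multiplying by $c\alpha_i$ gives at most $c\alpha_i\big(1+\frac{16\alpha^2}{c^2\alpha_i^2}\big)$ per level, i.e. at most $c\sum_{i\ge2}\frac{16\alpha^2}{c^2\alpha_i^2}$ up to additive $O(Kc)$ terms. Adding the two regimes gives $c\sum_t\rho_t\le 2A R_T+c\sum_{i\ge2}\frac{16\alpha^2}{c^2\alpha_i^2}+O(\text{const})$.

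Finally I would combine the three pieces: service $\le A R_T$, rent $\le 2A R_T+c\sum_{i\ge2}\frac{16\alpha^2}{c^2\alpha_i^2}$, and fetch $\le\frac{M}{c}\times\text{rent}=\frac{2M}{c}A R_T+M\sum_{i\ge2}\frac{16\alpha^2}{c^2\alpha_i^2}$, and use $A\ge1$ to absorb $R_T+2A R_T\le 3A R_T$, producing $(3+2M/c)A\,R_T+(c+M)\sum_{i\ge2}\frac{16\alpha^2}{c^2\alpha_i^2}$. The hard part is the rent bound: one must choose the dense/sparse threshold so that (i) the counting argument on dense slots loses only the factor $2A$ and no $\log T$, and (ii) the sparse condition simultaneously forces $X_{t,i}>0$ for \emph{all} levels, so a single tail bound applies across $i$. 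Careful bookkeeping of the geometric sum and the lower-order ``$+1$''/``$O(Kc)$'' terms is what pins down the exact stated constants.
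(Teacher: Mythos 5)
Your proposal follows essentially the same route as the paper's proof: bound the per-slot cost of hosting anything by $c+M$ (your fetch $\le \tfrac{M}{c}\times$rent step is the paper's $M(\rho_t-\rho_{t-1})^+\le M\rho_t$ in disguise), split the horizon at the point where cumulative requests drop below $\tfrac{ct}{2A}$ (the paper uses the deterministic cutoff $T'=\tfrac{2A}{c}R_T$ rather than your pathwise test on $\hat\mu_{t-1}$, but the slot count is identical), bound the early part by its cardinality $\tfrac{2A}{c}R_T$, and apply the Gaussian tail bound $Q(x)\le e^{-x^2/2}$ to the event that level $i$ beats level $1$ on the late part to get the geometric sum $\sum_{i\ge2}\tfrac{16\alpha^2}{c^2\alpha_i^2}$. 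The only discrepancy is the lower-order additive terms ($+c$ from the ``$+1$'' in your slot count and the $O(Kc)$ from starting the geometric series at $t=1$) that you yourself flag; these make your bound marginally weaker than the stated constants but do not affect the substance of the argument.
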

\begin{proof}
	FTPL does not host any fraction the service for time $t+1$ if $c\alpha_{i}t+g(\alpha_{i})R_t+\eta_{t+1}\gamma_i> R_t+\eta_{t+1} \gamma_1$, for all $i\ne 1$. The probability of hosting any non zero fraction of service ($p_{h,t}$) in time slot $t+1$ is given as 
	\begin{align*}
	p_{h,t}&\le\sum_{i=2}^K\Prob(c\alpha_{i}t+g(\alpha_{i})R_t+\eta_{t+1}\gamma_i< R_t+\eta_{t+1} \gamma_1)\\
	&=\sum_{i=2}^K\Prob\left( \frac{\gamma_1-\gamma_i}{\sqrt{2}}> \frac{c\alpha_{i}t+g(\alpha_{i})R_t-R_t}{\sqrt{2}\eta_{t+1}}\right)\\
	&=\sum_{i=2}^K\Prob\left( \frac{\gamma_1-\gamma_i}{\sqrt{2}}> \frac{c\alpha_{i}t-(1-g(\alpha_{i}))R_t}{\sqrt{2}\eta_{t+1}}\right).
	\end{align*}
	 Let $T'=\frac{2R_T}{c}\max_{\alpha_{i}\ne 0}\frac{1-g(\alpha_{i})}{\alpha_{i}}$, if  $t\ge T'$ then $R_t\le R_T\le \frac{ct}{2}\min_{\alpha_{i}\ne0}\frac{\alpha_{i}}{1-g(\alpha_{i})}$. Therefore  for $t\ge T'$, $R_t\le \frac{c\alpha_{i}t}{2(1-g(\alpha_{i}))}$ for all $\alpha_{i}\ne 0$ and we get
	\begin{align*}
	p_{h,t}&\le \sum_{i=2}^K\Prob\left( \frac{\gamma_1-\gamma_i}{\sqrt{2}}> \frac{c\alpha_{i}t-c\alpha_{i}t/2}{\sqrt{2}\eta_{t+1}}\right)\\
	&\le \sum_{i=2}^K\exp\left(-\frac{c^2\alpha_{i}^2t^2}{16\eta_{t+1}^2}\right)\\ 
	&=\sum_{i=2}^K \exp\left(-\frac{c^2\alpha_{i}^2 t}{16\alpha^2}\right). \numberthis \label{eq:prob_ftpl_comp}
	\end{align*}
	Therefore, the expected cost under FTPL can be bounded as 
	\begin{align*}
	\E[\mathcal{C}^{\text{FTPL}}&(T,r)]	\\
	\le& \sum_{t=1}^{T}(c+M)p_{h,t}+r_t\\
	=& R_T+\sum_{t=1}^{T}(c+M)p_{h,t}\\
	\le& R_T+ (c+M) T'+(c+M)\sum_{t=\lceil T'\rceil}^T p_{h,t}  \\
	\overset{(a)}{\le}& R_T+ (c+M) T'\\
	&+(c+M)\sum_{t=\lceil T'\rceil}^T \sum_{i=2}^K \exp\left(-\frac{c^2\alpha_{i}^2t}{16\alpha^2}\right)  \\
	\le& \left((3+2M/c)\max_{\alpha_{i}\ne 0}\frac{1-g(\alpha_{i})}{\alpha_{i}}\right) R_T\\ &+(c+M)\sum_{i=2}^K\frac{16\alpha^2}{c^2\alpha_{i}^2}.
	\end{align*}	
	Here $(a)$ follows from \eqref{eq:prob_ftpl_comp}. If $T<T'$, then the bound will only have the first term and even in that case the above bound holds. Therefore the above bound holds for  any $T$.
\end{proof}

\begin{proof}[Proof of Theorem \ref{thm:comp_adv}(a)]
		By using Lemma \ref{lem:opt_offline_cost_LB}, we have
	\begin{align*}
		\sigma_{A}^{\text{FTPL}}\le& \sup_{r} \frac{\E[\mathcal{C}^{\text{FTPL}}(T,r)]}{(\min_i{c\alpha_{i}+g(\alpha_{i})\kappa})R_T/\kappa^2}\\
		\overset{(a)}{\le}& \frac{\kappa^2(3+2M/c)}{\min_i(c\alpha_{i}+g(\alpha_{i}) \kappa)}\max_{\alpha_{i}\ne 0}\frac{1-g(\alpha_{i})}{\alpha_{i}} \\
		&+\sup_r \frac{\kappa^2(M+c)}{R_T\min_i(c\alpha_{i}+g(\alpha_{i}) \kappa)}\sum_{i=2}^K\frac{16\alpha^2}{c^2\alpha_{i}^2}\\
		\le &\frac{\kappa^2(3+2M/c)}{\min_i(c\alpha_{i}+g(\alpha_{i}) \kappa)}\max_{\alpha_{i}\ne 0}\frac{1-g(\alpha_{i})}{\alpha_{i}} \\
		&+ \frac{\kappa^2(M+c)}{\min_i(c\alpha_{i}+g(\alpha_{i}) \kappa)}\sum_{i=2}^K\frac{16\alpha^2}{c^2\alpha_{i}^2}.
	\end{align*}
	Here $(a)$ is obtained by using Lemma \ref{lem:FTPL_comp_adv}.
\end{proof}

\subsection{Proof of Theorem \ref{thm:comp_adv}(b)}
\begin{proof}[Proof of Theorem \ref{thm:comp_adv}(b)]
	By using Lemma \ref{lem:opt_offline_cost_LB}, we have
	\begin{align*}
	\sigma^{\text{W-FTPL}}_A \le& \sup_{r} \frac{R_{T_s}+\E[\mathcal{C}_{A}^{\text{FTPL}}(T,r)]}{R_T(\min_i\{c\alpha_i+g(\alpha_{i})\kappa\})/\kappa^2}\\
	\overset{(a)}{\le} & \frac{\kappa^2(3+2M/c)}{\min_i(c\alpha_{i}+g(\alpha_{i}) \kappa)}\max_{\alpha_{i}\ne 0}\frac{1-g(\alpha_{i})}{\alpha_{i}}\\
	&+\frac{\kappa^2(M+c)}{\min_i(c\alpha_{i}+g(\alpha_{i}) \kappa)}\sum_{i=2}^K\frac{16\alpha^2}{c^2\alpha_{i}^2}\\
	&+\frac{\kappa^2}{\min_i(c\alpha_{i}+g(\alpha_{i}) \kappa)}.
	\end{align*}
	Here $(a)$ is obtained by using Theorem \ref{thm:comp_adv}(a) and the fact that $R_{T_s}\le R_T$.
\end{proof}

\subsection{Proof of Theorem \ref{thm:comp_stch}(a)} 
\begin{proof}[Proof of Theorem \ref{thm:comp_stch}(a)]

	\begin{align*}
		\sigma^{\text{RR}}_{S}(T)&=\frac{\E[\mathcal{C}^{\text{RR}}(T,r)]}{\min_i\{c\alpha_iT+g(\alpha_i)\mu T+M\alpha_i\}}\\  
		&= \frac{\mathcal{R}^{\text{RR}}_S(T)+\min_i\{c\alpha_iT+g(\alpha_i)\mu T+M\alpha_i\}}{\min_i\{c\alpha_iT+g(\alpha_i)\mu T+M\alpha_i\}}\\
		& \le 1+\frac{\mathcal{R^{\text{RR}}}_S(T)}{\mu_{i^\star}T}.
	\end{align*}

	By using Theorem \ref{thm:reg_stch} we have,
	\begin{align*}
	\sigma^{\text{RR}}_{S}(T)\ge 1+  \frac{M}{\mu_{i^\star}}\left(\frac{1}{\frac{M}{\kappa-c\alpha_{i'}-g(\alpha_{i'})\kappa}+ \frac{M}{c} +2}-\frac{1}{T}\right) pd.
	\end{align*}
	For large $T$ we get,
	\begin{align*}
	\sigma^{\text{RR}}_{S}(T)> 1.
	\end{align*}
\end{proof}

\subsection{Proof of Theorem \ref{thm:comp_stch}(b)} 

\begin{proof}[Proof of Theorem \ref{thm:comp_stch}(b)]
	By using Theorem \ref{thm:reg_stch} we have
	\begin{align*}
	\sigma^{\text{FTPL}}_{S}(T) &\le  1+\frac{1}{\mu_{i^\star}T}(16\alpha^2+1)\left( \frac{M}{\Delta_{\text{min}}^2}+\sum_{i\ne i^\star}\frac{1}{\Delta_i}\right)\\
	&=1+\OO(1/T).
	\end{align*}
\end{proof}

\subsection{Proof of Theorem \ref{thm:comp_stch}(c)} 
\begin{proof}[Proof of Theorem \ref{thm:comp_stch}(c)]
	By  using Theorem \ref{thm:reg_stch} we have $\mathcal{R}^{\text{W-FTPL}}_S(T)$ as  a constant and 
	\begin{align*}
	\sigma^{\text{W-FTPL}}_{S}(T) \le  1+\frac{\mathcal{R}^{\text{W-FTPL}}_S(T)}{T\mu_{i^\star}}.
	\end{align*}
	Therefore we get,
	$$\sigma^{\text{W-FTPL}}_S(T) \le 1+\OO(1/T). $$
	
\end{proof}

\bibliography{ref_PC}
\bibliographystyle{elsarticle-num}
\section{Appendix}
\label{sec:appendix}
\begin{proof}[Proof of Lemma \ref{lem:adv_stch_reg}] For i.i.d. stochastic arrivals, we have
\begin{align*}
\mathcal{R}^{\mathcal{P}}_S&(T)\\
&=\E_{\mathcal{P},r}[\mathcal{C}^\mathcal{P}(T,r)] -\min_i\{c\alpha_iT+g(\alpha_i)\mu T+M\alpha_i\}\\
&=\E_{\mathcal{P},r}[\mathcal{C}^\mathcal{P}(T,r)] -\min_i\{\E_r[c\alpha_iT+g(\alpha_i)R_T+M\alpha_i]\}\\
&\overset{(a)}{\le} \E_{\mathcal{P},r}[\mathcal{C}^\mathcal{P}(T,r)] -\E_r[\min_i\{c\alpha_iT+g(\alpha_i)R_T +M\alpha_i\}]\\
&=\E_r[\mathcal{R}_A(T,r)] \le \sup_{r} \left(\mathcal{R}^{\mathcal{P}}_A(T,r) \right) = \mathcal{R}_A^\mathcal{P}(T),
\end{align*}
where $(a)$ is obtained by Jensen's inequality.
\end{proof}

\begin{lemma}
	\label{lem:min_E}
	If $X$ is a random variable and $\E[f_1(X)]=\E[f_2(X)]=m$, then 
	$$m-\E[\min\{f_1(X),f_2(X)\}]=\frac{1}{2}\E[|f_1(X)-f_2(X)|].$$
\end{lemma}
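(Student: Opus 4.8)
The plan is to reduce the identity to a deterministic, pointwise algebraic fact and then take expectations using linearity. The key observation is the elementary identity valid for any two real numbers $a,b$, namely
\begin{equation*}
\min\{a,b\} = \frac{a+b}{2} - \frac{|a-b|}{2},
\end{equation*}
which follows immediately by checking the two cases $a\le b$ and $a>b$ (in the first case the right-hand side is $\frac{a+b}{2}-\frac{b-a}{2}=a$, and symmetrically in the second). First I would apply this identity pointwise with $a=f_1(X)$ and $b=f_2(X)$, so that for every realization of $X$ we have $\min\{f_1(X),f_2(X)\} = \tfrac12(f_1(X)+f_2(X)) - \tfrac12|f_1(X)-f_2(X)|$.

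Next I would take expectations of both sides and invoke linearity of expectation to obtain
\begin{equation*}
\E[\min\{f_1(X),f_2(X)\}] = \frac{\E[f_1(X)]+\E[f_2(X)]}{2} - \frac12\E\big[|f_1(X)-f_2(X)|\big].
\end{equation*}
Substituting the hypothesis $\E[f_1(X)]=\E[f_2(X)]=m$ collapses the first term to $m$, giving $\E[\min\{f_1(X),f_2(X)\}] = m - \tfrac12\E[|f_1(X)-f_2(X)|]$. Rearranging this single equation yields exactly the claimed statement $m-\E[\min\{f_1(X),f_2(X)\}]=\tfrac12\E[|f_1(X)-f_2(X)|]$.

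There is no substantial obstacle here: the only thing to verify carefully is the pointwise identity for $\min$, and the only place the hypothesis enters is in equating the average of the two means to $m$. I would make sure the integrability is implicitly fine (the statement already presumes $\E[f_1(X)]$ and $\E[f_2(X)]$ exist and are finite, and since $|f_1(X)-f_2(X)|\le |f_1(X)|+|f_2(X)|$, its expectation is finite as well, so all manipulations with linearity are justified). This lemma is precisely the tool used in step $(c)$ of the proof of Theorem~\ref{thm:reg_adv}(a), where the difference between a mean and an expected minimum is rewritten as half the expected absolute deviation before applying the anti-concentration bound of Lemma~\ref{lem:binomial_t}.
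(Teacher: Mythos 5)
Your proposal is correct and follows essentially the same route as the paper: both start from the pointwise identity $\min\{a,b\}=\tfrac12(a+b)-\tfrac12|a-b|$, take expectations, and use $\E[f_1(X)]=\E[f_2(X)]=m$ to collapse the average term. Your version is slightly more careful about integrability, but the argument is identical in substance.
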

\begin{proof}
	\begin{align*}
		\min\{X,Y\}=\frac{1}{2}(X+Y)-\frac{1}{2}(|X-Y|)\\
	\implies m-\min\{f_1(X),f_2(X)\}=m-\frac{1}{2}(f_1(X)+f_2(X))\\
	+\frac{1}{2}(|f_1(X)-f_2(X)|)\\
		\implies m-\E[\min\{f_1(X),f_2(X)\}]=\frac{1}{2}\E[|f_1(X)-f_2(X)|].
	\end{align*}
\end{proof}

\begin{lemma}
	\label{lem:binomial_t}
	If $X\sim \text{Bin}(T,p)$ then 
	\begin{align*}
	\frac{1}{2}\E[|X-Tp|]\ge &\frac{\sqrt{Tp(1-p)}}{e\sqrt{2\pi}} \left[1+\frac{1}{12T+1}-\frac{1}{12\lfloor Tp\rfloor} \right. \\
	&\left.- \frac{1}{12(T-\lfloor Tp\rfloor-1)} \right].
	\end{align*}
\end{lemma}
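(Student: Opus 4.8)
The plan is to first reduce the mean absolute deviation to an exact closed form (De Moivre's formula) and then bound the resulting factorials with Robbins' sharpening of Stirling's formula. Write $q=1-p$ and $m=\lfloor Tp\rfloor$. Since $\E[X-Tp]=0$, we have $\tfrac12\E[|X-Tp|]=\E[(X-Tp)^+]=\sum_{j>Tp}(j-Tp)\binom{T}{j}p^j q^{T-j}$. The key algebraic step is a telescoping identity obtained from $j\binom{T}{j}=T\binom{T-1}{j-1}$ together with $\binom{T}{j}=\binom{T-1}{j}+\binom{T-1}{j-1}$, which yields $(j-Tp)\binom{T}{j}p^j q^{T-j}=Tpq\,[\,b(j-1;T-1)-b(j;T-1)\,]$, where $b(i;n):=\binom{n}{i}p^i q^{n-i}$. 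Summing over $j$ from $m+1$ to $T$ telescopes, and since $b(T;T-1)=0$ I would obtain
\[\tfrac12\E[|X-Tp|]=Tpq\,\binom{T-1}{m}p^m q^{T-1-m}=\frac{T!}{m!\,(T-1-m)!}\,p^{m+1}q^{T-m}.\]

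Next I would plug in Robbins' bounds $\sqrt{2\pi}\,n^{n+1/2}e^{-n}e^{1/(12n+1)}\le n!\le \sqrt{2\pi}\,n^{n+1/2}e^{-n}e^{1/(12n)}$, using the \emph{lower} bound on $T!$ and the \emph{upper} bounds on $m!$ and $(T-1-m)!$. The exponents of $e$ collapse neatly: $-T+m+(T-1-m)=-1$ produces the factor $1/e$, the $\sqrt{2\pi}$ factors combine to $1/\sqrt{2\pi}$, and the correction exponent becomes $\frac{1}{12T+1}-\frac{1}{12m}-\frac{1}{12(T-1-m)}$. Applying $e^x\ge 1+x$ to this last exponent reproduces exactly the bracketed term in the statement, with $m=\lfloor Tp\rfloor$ and $T-1-m=T-\lfloor Tp\rfloor-1$.

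The remaining task, which I expect to be the main obstacle to make rigorous, is to show that the leftover power factor dominates $\sqrt{Tp(1-p)}$; after the simplifications above this amounts to verifying
\[\frac{T^{T+1/2}}{m^{m+1/2}(T-1-m)^{T-1-m+1/2}}\,p^{m+1}q^{T-m}\ \ge\ \sqrt{Tpq}.\]
Dividing by $\sqrt{Tpq}$ and using $T=m+(T-1-m)+1$ so that $T^{T}=T^{m+1/2}T^{(T-1-m)+1/2}$, the left side factors as $\left(\tfrac{Tp}{m}\right)^{m+1/2}\left(\tfrac{Tq}{T-1-m}\right)^{(T-1-m)+1/2}$. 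Here I would use $m=\lfloor Tp\rfloor\le Tp$ to get $Tp/m\ge 1$, and $T-1-m=T-1-\lfloor Tp\rfloor< T-Tp=Tq$ to get $Tq/(T-1-m)>1$; since both bases are at least one and both exponents are positive, each factor is at least one and the product is $\ge 1$, which closes the argument. I would finally note that the bound is meaningful only when $1\le\lfloor Tp\rfloor\le T-2$ (so that $m$ and $T-1-m$ are positive and Robbins' upper bounds apply); in the degenerate cases the bracket is non-positive and the inequality is trivial.
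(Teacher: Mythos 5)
Your proposal is correct and follows essentially the same route as the paper's proof: reduce $\tfrac12\E|X-Tp|$ to the closed form $\frac{T!}{\lfloor Tp\rfloor!\,(T-\lfloor Tp\rfloor-1)!}p^{\lfloor Tp\rfloor+1}q^{T-\lfloor Tp\rfloor}$, apply Robbins' two-sided Stirling bounds (lower on $T!$, upper on the other two factorials), use $\lfloor Tp\rfloor\le Tp$ and $T-1-\lfloor Tp\rfloor\le Tq$ to reduce the power factor to $\sqrt{Tpq}$, and finish with $e^x\ge 1+x$. The only differences are cosmetic: the paper cites the mean-absolute-deviation identity from Berend and Kontorovich rather than deriving it by your (correct) telescoping argument, and your explicit remark about the degenerate cases $\lfloor Tp\rfloor\in\{0,T-1,T\}$ is a point the paper leaves implicit.
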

\begin{proof}
	By using equation (1) from \cite{berend2013sharp}  we get
	\begin{align*}
		\E[|X&-Tp|]\\
		=&2(1-p)^{T-\lfloor Tp\rfloor} p^{\lfloor Tp\rfloor+1}(\lfloor Tp\rfloor+1) {T\choose \lfloor Tp+1\rfloor}\\
		=&2(1-p)^{T-\lfloor Tp\rfloor} p^{\lfloor Tp\rfloor+1} \frac{T!}{\lfloor Tp\rfloor! (T-\lfloor Tp\rfloor-1)! }\\
		\overset{(a)}{\ge}&2(1-p)^{T-\lfloor Tp\rfloor} p^{\lfloor Tp\rfloor+1} \frac{1}{e\sqrt{2\pi}}\frac{T^{T+\frac{1}{2}}e^{\frac{1}{12T+1}}} {\lfloor Tp\rfloor^{\lfloor Tp\rfloor+\frac{1}{2}}e^{\frac{1}{12\lfloor Tp\rfloor}}}\\
		&\times \frac{1}{(T-\lfloor Tp\rfloor-1)^{T-\lfloor Tp\rfloor-1+\frac{1}{2}}e^{\frac{1}{12(T-\lfloor Tp\rfloor-1)}}}\\
		\overset{(b)}{\ge}& \frac{2}{e\sqrt{2\pi}}(1-p)^{T-\lfloor Tp\rfloor} p^{\lfloor Tp\rfloor+1} \frac{T^{T+\frac{1}{2}}e^{\frac{1}{12T+1}}} {(Tp)^{\lfloor Tp\rfloor+\frac{1}{2}}e^{\frac{1}{12\lfloor Tp\rfloor}}}\\
		&\times \frac{1}{(T-Tp)^{T-\lfloor Tp\rfloor-\frac{1}{2}}e^{\frac{1}{12(T-\lfloor Tp\rfloor-1)}}}\\
		= & \frac{2\sqrt{Tp(1-p)}}{e\sqrt{2\pi}} e^{\frac{1}{12T+1}- \frac{1}{12\lfloor Tp\rfloor}- \frac{1}{12(T-\lfloor Tp\rfloor-1)}}\\
		\overset{(c)}{\ge} &\frac{2\sqrt{Tp(1-p)}}{e\sqrt{2\pi}} \left[1+\frac{1}{12T+1}-\frac{1}{12\lfloor Tp\rfloor} \right. \\
		&\left.- \frac{1}{12(T-\lfloor Tp\rfloor-1)}\right],
	\end{align*}
	where $(a)$ follows by using the result in \cite{robbins1955remark} which is $\sqrt{2\pi}n^{n+\frac{1}{2}}e^{-n}e^{\frac{1}{12n+1}} < n! <\sqrt{2\pi}n^{n+\frac{1}{2}}e^{-n}e^{\frac{1}{12n}}$. $(b)$ follows by using the fact that $x-1\le \lfloor x \rfloor\le x$. $(c)$ follows from the fact $e^x\ge 1+x$.
\end{proof}
\end{document}